\documentclass[a4paper,fleqn]{cas-dc}

\usepackage[numbers]{natbib}
\usepackage{ragged2e}
\usepackage{tabularx} 
\usepackage{amsthm}
\usepackage{amsmath}
\newtheorem{definition}{Definition}
\newtheorem{theorem}{Theorem}

\newtheorem{proposition}{Proposition}
\newtheorem{corollary}{Corollary}
\newcommand{\Challenge}{\mathsf{challenge}}
\newcommand{\Response}{\mathsf{response}}
\newcommand{\MasterKDF}{\mathsf{HKDF}_{\mathrm{master}}}
\newcommand{\ChallengeLabel}{\text{\texttt{"challenge"}}}
\newcommand{\ResponseLabel}{\text{\texttt{"response"}}}
\usepackage{algorithm}
\usepackage{algorithmic}
\usepackage{booktabs}
\usepackage{enumitem} 
\usepackage{multirow}
\usepackage{placeins}
\usepackage{graphicx} 
\usepackage[protrusion=true,expansion=true]{microtype}
\usepackage{tikz}
\usetikzlibrary{arrows.meta,calc,positioning,shapes.geometric}
\usepackage{bm}
\usepackage{hyperref}
\usepackage{soul}
\usepackage{mdframed}
\sethlcolor{blue!20}
\newcommand{\rev}[1]{#1}
\hypersetup{
    hypertexnames=false, 
    colorlinks=true,      
    linkcolor=blue,       
    citecolor=blue,       
    urlcolor=blue,        
    allcolors=blue        
}

\ExplSyntaxOn
\cs_set:Npn \__cas_head:
  {
    \parbox[t]{\textwidth}
      {\sffamily\small\centering \__short_title:}
  }
\cs_set:Npn \__first_head:
  {\parbox[t]{\textwidth}{\rule{\textwidth}{0pt}}}
\cs_set:Npn \__cas_foot:
  {
    \parbox[t]{\textwidth}
      {
        \rule{\textwidth}{.2pt}\\
        \sffamily\small
        \__short_authors: :~{\rmfamily\itshape High-Confidence~Computing}
        \hfill Page~\thepage{}~of~\number\numexpr\lastpage-1\relax
      }
  }
\cs_set_eq:NN \__first_foot: \__cas_foot:
\ExplSyntaxOff
\makeatletter
\AddToHook{cmd/printFirstPageNotes/before}{\begingroup\let\@footnotetext\H@@footnotetext}
\AddToHook{cmd/printFirstPageNotes/after}{\endgroup}
\makeatother

\usepackage[none]{hyphenat}

\usepackage{comment}
\setlist{topsep=3pt,itemsep=1pt,parsep=0pt,partopsep=0pt}
\def\tsc#1{\csdef{#1}{\textsc{\lowercase{#1}}\xspace}}
\tsc{WGM}
\tsc{QE}
\tsc{EP}
\tsc{PMS}
\tsc{BEC}
\tsc{DE}

\begin{document}
\let\WriteBookmarks\relax
\def\floatpagepagefraction{1}
\def\textpagefraction{.001}
\shorttitle{HPQ-AKE for Bandwidth-Constrained IoT and Edge Networks}
\ifdefined\BlindManuscript
\shortauthors{Anonymous Author(s)}
\else
\shortauthors{Khiem P.T. et~al.}
\fi

\title [mode = title]{HPQ-AKE: A Provably Secure Sign-Less Hybrid Authenticated Key Exchange Protocol for Bandwidth-Constrained IoT and Edge Networks}                      



\ifdefined\BlindManuscript
\author[]{Anonymous Author(s)}
\else
\author[1,2]{Khiem Pham-Tuan}[
                        orcid={0009-0007-4453-1240}]
\ead{khiempt.25ai@ou.edu.vn}
\ead{khiempt@huit.edu.vn}

\credit{Conceptualization, Methodology, Software, Formal Analysis, Writing - Original Draft}

\affiliation[1]{organization={Ho Chi Minh City Open University},
                addressline={97 Vo Van Tan Street, Xuan Hoa Ward}, 
                city={Ho Chi Minh City},
                country={Viet Nam}}
\affiliation[2]{organization={Faculty of Information Technology, Ho Chi Minh City University of Industry and Trade (HUIT)},
                addressline={140 Le Trong Tan Street, Tay Thanh Ward}, 
                city={Ho Chi Minh City},
                country={Viet Nam}}                
\author[3]{Minh Quang Le}[orcid={0000-0002-4109-8054}]
\cormark[1]
\credit{Conceptualization, Methodology, Writing - Original Draft, Validation, Supervision}
\ead{minh.lq@ou.edu.vn}
\affiliation[3]{organization={Faculty of Information Technology, Ho Chi Minh City Open University},
                addressline={97 Vo Van Tan Street, Xuan Hoa Ward}, 
                city={Ho Chi Minh City},
                country={Viet Nam}}

\author[4,5]{Khuong Nguyen-An}[orcid={0000-0002-9910-6387}]
\cormark[1]
\credit{Conceptualization, Methodology, Formal Analysis, Validation, Writing - Review \& Editing, Supervision, Project Administration}
\ead{nakhuong@hcmut.edu.vn}
\affiliation[4]{organization={Faculty of Computer Science and Engineering, Ho Chi Minh City University of Technology
(HCMUT)},
                addressline={268 Ly Thuong Kiet Street, Dien Hong Ward}, 
                city={Ho Chi Minh},
                country={Viet Nam}}

\affiliation[5]{organization={Vietnam National University Ho Chi Minh City},
                addressline={Linh Xuan Ward}, 
                city={Ho Chi Minh City},
                country={Viet Nam}}
                
\cortext[cor1]{Corresponding authors}
\fi


\begin{abstract}
Securing bandwidth-constrained Internet of Things (IoT) and edge networks during post-quantum migration creates an authentication trade-off: ML-KEM provides post-quantum key establishment, whereas post-quantum signatures increase certificate-chain size, verification cost, and handshake latency. We present HPQ-AKE, a sign-less hybrid authenticated key exchange for RSA-enabled IoT gateways and edge/cloud services that replaces transcript signatures with dual KEMs. HPQ-AKE combines ML-KEM-768 for post-quantum session secrecy and forward secrecy with RSA-OAEP for implicit mutual authentication, preserving existing RSA-enabled infrastructure during migration. We analyze HPQ-AKE in an extended Bellare--Rogaway model, separating classical authentication in the Random Oracle Model from session-key secrecy in the Quantum Random Oracle Model. Under the Module-LWE and RSA assumptions, it establishes AKE security, forward secrecy, and \rev{conditional KCI resistance under the long-term-key-only exposure assumptions.} Evaluation combines x86 micro-benchmarking, analytical latency modeling, and 10,000-iteration Monte Carlo simulation. HPQ-AKE reduces modeled handshake transmission from 13,009 to 5,668 Bytes, a 56.4\% reduction relative to a Hybrid TLS 1.3 Full handshake, while requiring 7.11 ms of total local computation on the measured x86 testbed. Under a simulated 50 kbps satellite-like link with 600 ms RTT and stochastic jitter, median latency is 1,914 ms, 31.4\% below the 2,791 ms baseline. \rev{At 20 ms RTT, modeled break-even thresholds are 0.31 Mbps against the pre-cached baseline and 4.08 Mbps against the full baseline; both decrease as RTT increases.} \rev{These x86-based results motivate evaluation on gateway-class IoT and edge platforms; performance on ARM gateways and unaccelerated microcontrollers remains unvalidated.}
\end{abstract}

\begin{graphicalabstract}
\resizebox{\linewidth}{!}{\definecolor{gaNavy}{HTML}{17364D}
\definecolor{gaBlue}{HTML}{1199C4}
\definecolor{gaCyan}{HTML}{DFF5FA}
\definecolor{gaGreen}{HTML}{349B68}
\definecolor{gaPaleGreen}{HTML}{E7F5EC}
\definecolor{gaOrange}{HTML}{E87522}
\definecolor{gaPaleOrange}{HTML}{FFF0E4}
\definecolor{gaPurple}{HTML}{7650A8}
\definecolor{gaPalePurple}{HTML}{F0E9F8}
\definecolor{gaGray}{HTML}{687984}
\definecolor{gaLightGray}{HTML}{E4EAED}
\definecolor{gaRed}{HTML}{D94C45}

\begin{tikzpicture}[x=1.25cm,y=1.25cm,font=\sffamily,every node/.append style={execute at begin node=\sffamily}]
  \path[use as bounding box] (0,0) rectangle (16.25,6.25);
  \fill[white] (0,0) rectangle (16.25,6.25);

  \node[font=\bfseries\fontsize{11.8}{12.5}\selectfont,text=gaNavy]
    at (6.5,4.73) {HPQ-AKE: Gateway-Assisted Post-Quantum Migration for IoT};
  \draw[gaLightGray,line width=0.7pt] (0.35,4.43)--(12.65,4.43);

  \node[font=\bfseries\fontsize{5.7}{6.1}\selectfont,text=gaNavy,align=center]
    at (1.15,4.10) {CONSTRAINED\\SENSORS};

  \foreach \x/\y in {0.85/3.35,1.65/2.85,0.85/2.25}{
    \fill[gaCyan,draw=gaBlue,line width=0.8pt] (\x,\y) circle (0.24);
    \fill[gaNavy] (\x,\y) circle (0.055);
    \draw[gaNavy,line width=0.75pt] (\x,\y+0.055)--(\x,\y+0.15);
    \draw[gaBlue,line width=0.7pt] (\x-0.12,\y+0.13)
      arc[start angle=210,end angle=-30,radius=0.14];
  }
  \draw[gaGray,dashed,line width=0.8pt] (1.10,3.32)--(2.65,3.02);
  \draw[gaGray,dashed,line width=0.8pt] (1.89,2.85)--(2.65,3.02);
  \draw[gaGray,dashed,line width=0.8pt] (1.08,2.30)--(2.65,3.02);
  \node[font=\fontsize{5.5}{6.0}\selectfont,text=gaGray,align=center]
    at (1.15,1.72) {Low-power devices\\and local radio links};
  \node[font=\bfseries\fontsize{5.3}{5.8}\selectfont,text=gaOrange,align=center]
    at (1.15,1.34) {Outside HPQ-AKE\\endpoint scope};

  \node[font=\bfseries\fontsize{5.7}{6.1}\selectfont,text=gaNavy,align=center]
    at (3.15,4.10) {IOT\\GATEWAY};
  \begin{scope}[shift={(3.15,2.92)}]
    \fill[gaNavy,rounded corners=0.8mm] (-0.48,-0.67) rectangle (0.48,0.67);
    \foreach \y in {-0.34,0,0.34}{\draw[white,line width=1.5pt] (-0.29,\y)--(0.29,\y);}
    \fill[gaGreen] (0.30,0.48) circle (0.055);
  \end{scope}
  \node[font=\bfseries\fontsize{6.2}{6.8}\selectfont,text=gaNavy,align=center]
    at (3.15,1.94) {RSA-enabled\\gateway-class node};
  \node[font=\fontsize{5.2}{5.8}\selectfont,text=gaGray,align=center]
    at (3.15,1.52) {Terminates AKE;\\aggregates sensor traffic};

  \node[font=\bfseries\fontsize{5.9}{6.4}\selectfont,text=gaNavy]
    at (6.10,4.15) {DUAL-KEM SECURE BACKHAUL};
  \draw[-{Stealth[length=2.2mm,width=1.6mm]},gaBlue,line width=1.15pt]
    (3.75,3.20)--(8.45,3.20);
  \draw[-{Stealth[length=2.2mm,width=1.6mm]},gaBlue,line width=1.15pt]
    (8.45,2.35)--(3.75,2.35);

  \node[rounded corners=1.2mm,draw=gaBlue,fill=gaCyan,line width=0.8pt,
    minimum width=3.35cm,minimum height=0.72cm] at (6.10,3.20) {};
  \node[font=\bfseries\fontsize{6.8}{7.3}\selectfont,text=gaNavy]
    at (6.10,3.34) {ML-KEM-768};
  \node[font=\fontsize{5.2}{5.7}\selectfont,text=gaGray]
    at (6.10,3.07) {PQ session secrecy + forward secrecy};

  \node[rounded corners=1.2mm,draw=gaGreen,fill=gaPaleGreen,line width=0.8pt,
    minimum width=3.35cm,minimum height=0.72cm] at (6.10,2.35) {};
  \node[font=\bfseries\fontsize{6.8}{7.3}\selectfont,text=gaNavy]
    at (6.10,2.49) {RSA-OAEP};
  \node[font=\fontsize{5.2}{5.7}\selectfont,text=gaGray]
    at (6.10,2.22) {Mutual implicit authentication};

  \node[rounded corners=1mm,draw=gaRed,fill=white,line width=0.75pt,
    font=\bfseries\fontsize{5.6}{6.1}\selectfont,text=gaRed]
    at (6.10,1.72) {NO PQ TRANSCRIPT SIGNATURES};

  \node[font=\bfseries\fontsize{5.9}{6.4}\selectfont,text=gaNavy]
    at (8.95,4.15) {EDGE / CLOUD};
  \begin{scope}[shift={(8.95,2.88)}]
    \fill[gaNavy,rounded corners=0.8mm] (-0.48,-0.67) rectangle (0.48,0.67);
    \foreach \y in {-0.34,0,0.34}{\draw[white,line width=1.5pt] (-0.29,\y)--(0.29,\y);}
  \end{scope}
  \node[font=\bfseries\fontsize{6.2}{6.8}\selectfont,text=gaNavy,align=center]
    at (8.95,1.94) {Authenticated\\edge/cloud service};
  \node[font=\fontsize{5.2}{5.8}\selectfont,text=gaGray,align=center]
    at (8.95,1.52) {Gateway-to-edge\\deployment target};

  \node[rounded corners=1.5mm,draw=gaPurple,fill=gaPalePurple,line width=0.8pt,
    minimum width=2.55cm,minimum height=3.6cm] at (11.48,2.82) {};
  \node[font=\bfseries\fontsize{5.0}{5.5}\selectfont,text=gaPurple]
    at (11.48,3.78) {FORMAL GUARANTEES};
  \foreach \yy/\txt in {3.48/AKE security,3.13/Forward secrecy,2.78/{\mbox{Conditional KCI}}}{
    \fill[gaGreen] (10.65,\yy) circle (0.08);
    \draw[white,line width=0.7pt] (10.61,\yy)--(10.64,\yy-0.04)--(10.70,\yy+0.05);
    \node[anchor=west,font=\fontsize{5.7}{6.2}\selectfont,text=gaNavy] at (10.82,\yy) {\txt};
  }
  \draw[gaPurple!45,line width=0.55pt] (10.55,2.48)--(12.41,2.48);
  \node[font=\fontsize{5.2}{5.8}\selectfont,text=gaGray,align=center]
    at (11.48,2.15) {ROM authentication\\QROM session secrecy};
  \node[font=\bfseries\fontsize{5.3}{5.8}\selectfont,text=gaPurple,align=center]
    at (11.48,1.72) {M-LWE + RSA\\assumptions};

  \node[rounded corners=1.2mm,draw=gaOrange,fill=gaPaleOrange,line width=0.8pt,
    minimum width=3.92cm,minimum height=1.08cm] at (2.25,0.62) {};
  \node[font=\bfseries\fontsize{12.5}{12.5}\selectfont,text=gaOrange] at (2.25,0.80) {56.4\%};
  \node[font=\bfseries\fontsize{5.0}{5.5}\selectfont,text=gaNavy] at (2.25,0.49)
    {LESS HANDSHAKE TRANSMISSION};
  \node[font=\fontsize{5.2}{5.7}\selectfont,text=gaGray] at (2.25,0.25)
    {13,009 $\rightarrow$ 5,668 bytes};

  \node[rounded corners=1.2mm,draw=gaPurple,fill=gaPalePurple,line width=0.8pt,
    minimum width=3.92cm,minimum height=1.08cm] at (6.50,0.62) {};
  \node[font=\bfseries\fontsize{12.5}{12.5}\selectfont,text=gaPurple] at (6.50,0.80) {31.4\%};
  \node[font=\bfseries\fontsize{5.0}{5.5}\selectfont,text=gaNavy] at (6.50,0.49)
    {LOWER MODELED LATENCY};
  \node[font=\fontsize{5.2}{5.7}\selectfont,text=gaGray] at (6.50,0.25)
    {2,791 $\rightarrow$ 1,914 ms at 50 kbps};

  \node[rounded corners=1.2mm,draw=gaGreen,fill=gaPaleGreen,line width=0.8pt,
    minimum width=3.92cm,minimum height=1.08cm] at (10.75,0.62) {};
  \node[font=\bfseries\fontsize{12.5}{12.5}\selectfont,text=gaGreen] at (10.75,0.80) {7.11 ms};
  \node[font=\bfseries\fontsize{5.0}{5.5}\selectfont,text=gaNavy] at (10.75,0.49)
    {MEASURED LOCAL COMPUTATION};
  \node[font=\fontsize{5.2}{5.7}\selectfont,text=gaGray] at (10.75,0.25)
    {x86 gateway-class testbed};
\end{tikzpicture}}\par
\medskip

\end{graphicalabstract}

\begin{highlights}
\item HPQ-AKE replaces post-quantum transcript signatures with dual KEMs.
\item ML-KEM-768 provides secrecy; RSA-OAEP authenticates IoT gateways.
\item \rev{Analysis covers AKE, PFS, and KCI under explicit exposure assumptions.}
\item Handshake transmission falls by 56.4\%, from 13,009 to 5,668 bytes.
\item Modeled latency falls by 31.4\% on a 50 kbps satellite-IoT link.
\end{highlights}

\begin{keywords}
Post-Quantum Authentication \sep Internet of Things Security \sep Hybrid Authenticated Key Exchange \sep KEM-Based Implicit Authentication \sep Provable Security \sep Forward Secrecy
\end{keywords}

\maketitle

\section{Introduction}

{T}he security infrastructure of the modern Internet relies heavily on traditional public key cryptosystems,
such as RSA and Elliptic Curve Cryptography (ECC). However, the rapid advancement of large-scale
quantum computing poses a long-term risk to these foundations. Shor’s algorithm, published in 1994,
demonstrated the ability to solve integer factorization and discrete logarithm problems in polynomial
time, effectively rendering current standard protocols insecure \cite{Shor}. In response to this threat, which motivates the migration to post-quantum cryptography, the National Institute of Standards and Technology (NIST) initiated the Post-Quantum Cryptography (PQC) standardization process. By 2022, NIST announced the selected algorithms, with CRYSTALS-Kyber emerging as the primary standard for Key Encapsulation Mechanisms (KEM) due to its superior
performance and strong lattice-based security guarantees \cite{nist-report2022,Avanzi}.
While the problem of confidentiality has been addressed by Kyber, post-quantum authentication
remains a significant challenge. The digital signature schemes selected by NIST, such as CRYSTALS-Dilithium and Falcon, often have larger public keys or signatures than RSA or ECC, or incur higher computational costs \cite{Ducas}. This creates a bottleneck for resource-constrained devices
and latency-sensitive networks. Furthermore, the gradual replacement of the Public Key Infrastructure (PKI) that currently manages billions of RSA certificates is a costly, decades-long endeavor.
In August 2024, the NIST officially finalized the
first set of post-quantum standards, where CRYSTALS-Kyber was standardized as ML-KEM in FIPS
203 \cite{nist-fips203-Alagic}. While this marks a significant milestone, the industry is increasingly leaning towards hybrid
constructions to mitigate the risks of potential vulnerabilities in early PQC implementations. This trend
is evidenced by the ongoing standardization efforts within the Internet Engineering Task Force (IETF) to integrate hybrid key exchange
mechanisms into the TLS 1.3 protocol \cite{Stebila}. However, the communication overhead introduced by post-quantum certificate chains and large signature payloads remains a major deployment concern for high-latency or severely degraded links such as satellite communications, wireless IoT infrastructures, and intermittently connected edge networks.
During this transition period, hybrid authenticated key exchange (AKE) protocols are considered
the most pragmatic approach. These protocols aim to combine post-quantum confidentiality (to protect
against “harvest now, decrypt later” attacks) with classical authentication mechanisms \cite{7-Bos}. However, most existing hybrid proposals (e.g., Hybrid TLS 1.3) still follow the traditional Signed-Diffie-Hellman paradigm, where large post-quantum signatures and certificate chains substantially increase handshake communication overhead and transmission latency, particularly in bandwidth-constrained environments.
To address this efficiency gap, the concept of “KEM-based authentication” (or implicit authentication) has been proposed as an effective alternative to digital signatures, most notably demonstrated by
the KEMTLS protocol \cite{kemtls}. The core idea is to use long-term keys to encapsulate a shared secret rather
than signing the transcript, thereby reducing communication overhead and improving bandwidth efficiency during authenticated handshakes. However,
prior research has largely focused on pure post-quantum replacements, often overlooking the potential
to leverage existing RSA hardware acceleration to optimize performance during the hybrid phase.
In this paper, we propose \textbf{HPQ-AKE} (Hybrid Post-Quantum Authenticated Key Exchange), a novel ``sign-less" hybrid protocol designed to bypass the limitations of post-quantum signatures. \rev{By using ML-KEM-768 for session secrecy and RSA-OAEP (Optimal Asymmetric Encryption Padding) as a KEM for implicit authentication, HPQ-AKE targets bandwidth-efficient migration for RSA-enabled gateways and edge/cloud services, subject to compatible key-use policies and protocol support at both endpoints.}

\noindent\textbf{\rev{Industrial motivation and deployment constraints.}}
\rev{Commercial edge platforms illustrate the gateway-assisted architecture considered here: AWS IoT Greengrass supports local processing on devices, hubs, or gateways with intermittent cloud connectivity, while its authentication model uses X.509 certificates for local and cloud connections \mbox{\cite{AWSGreengrassFAQ,AWSGreengrassAuthentication}}. A concrete industrial example is Ergon Energy's documented use of Inmarsat BGAN M2M to monitor and control remote electricity-network reclosers where terrestrial connectivity is limited or unavailable; the case study reports 5--15 MB of traffic per recloser per month \mbox{\cite{InmarsatErgonEnergy}}. These examples motivate investigating smaller authenticated handshakes for remote gateway backhaul with constrained communication budgets; they are not deployments or validations of HPQ-AKE. In particular, the 50 kbps, 600 ms RTT scenario evaluated here is a synthetic stress case, not a measurement of either system.}

\rev{Our deployment boundary is the gateway-to-edge/cloud link, not the sensor-to-gateway radio segment. The gateway is a trusted security endpoint, so this architecture does not itself provide sensor-to-cloud end-to-end protection through an untrusted gateway. Each HPQ-AKE endpoint must support one RSA-3072 private decapsulation per completed handshake, protected long-term key storage, and authenticated peer-key provisioning and renewal. Reconnection can make handshake overhead relevant, but whether smaller messages reduce latency depends on backhaul bandwidth, RTT, certificate caching, and available computation. The present evidence comes from x86 micro-benchmarks and network modeling, not industrial field trials or direct microcontroller execution.}

To formally guide our architectural design, security analysis, and empirical evaluation, this paper addresses the following three fundamental research questions:
\begin{itemize}
    \item \textbf{RQ1 (\textit{Rigorous Provable Security})}: How can a dual-layer KEM-based hybrid protocol (combining Kyber and RSA-OAEP) be formally proven to guarantee authenticated key establishment and perfect forward secrecy (PFS) within a reductionist complexity-theoretic framework?
    \item \textbf{RQ2 (\textit{Bandwidth-Optimized Architecture})}: \rev{How can a sign-less hybrid AKE reuse suitable RSA-based PKI infrastructure while minimizing handshake communication on bandwidth-constrained gateway-to-edge/cloud links?}
    \item \textbf{RQ3 (\textit{Empirical Performance})}: To what extent does the proposed HPQ-AKE protocol reduce handshake latency and packet transmission volume compared to state-of-the-art hybrid standards (e.g., Hybrid TLS 1.3) when evaluated through empirical cryptographic micro-benchmarks and simulated bandwidth-constrained channels such as satellite-linked IoT network models?\par
\end{itemize}

\noindent \rev{The following contributions map each research question to its core conclusion and the corresponding analysis:}
\begin{itemize}
    \item \textbf{\rev{RQ1---Security conclusion}}: \rev{The extended Bellare--Rogaway analysis establishes authenticated key establishment, PFS, and conditional KCI resistance under the long-term-key-only exposure assumptions. Section~\mbox{\ref{sec:security_analysis}} separates classical RSA-based authentication in the ROM from ML-KEM-based session secrecy and forward secrecy in the QROM. Thus, the result supports transitional post-quantum confidentiality, not authentication against an active adversary capable of breaking RSA.}

    \item \textbf{\rev{RQ2---Architecture conclusion}}: \rev{The dual-KEM design in Section~\mbox{\ref{sec:protocol}} replaces transcript signatures with ML-KEM-768 session-key establishment and RSA-OAEP implicit authentication. Under the payload accounting in Table~\mbox{\ref{tab:payload_decomposition}}, total handshake transmission decreases from 13,009 to 5,668 bytes, a 56.4\% reduction relative to the modeled Hybrid TLS 1.3 Full baseline. Reuse concerns suitable RSA credentials and infrastructure; it does not imply unchanged TLS interoperability or removal of certificate provisioning and validation requirements.}

    \item \textbf{\rev{RQ3---Performance conclusion}}: \rev{Section~\mbox{\ref{sec:performance}} measures 7.11 ms of total local computation on the x86 testbed. In the simulated 50 kbps, 600 ms RTT scenario, median handshake latency decreases from 2,791 to 1,914 ms relative to the Full baseline, a 31.4\% reduction, but remains above the Pre-cached baseline. At the separate 20 ms RTT crossover setting, the modeled break-even bandwidths are 0.31 Mbps against the Pre-cached baseline and 4.08 Mbps against the Full baseline. These conditional results identify a bandwidth--computation trade-off rather than a universal latency advantage or demonstrated MCU feasibility.}
\end{itemize}

\noindent \textbf{Paper Organization.} The remainder of this paper is organized as follows: Section \ref{sec:relatedwork} reviews related work. Section \ref{sec:preliminaries} introduces the necessary cryptographic primitives and mathematical foundations. Section \ref{sec:protocol} describes the proposed HPQ-AKE protocol in detail. Section \ref{sec:security_analysis} provides the security analysis. Section \ref{sec:performance} presents the performance evaluation, and Section \ref{sec:conclusion} concludes the paper.

\section{Related Work}
\label{sec:relatedwork}
Research on securing communication channels against quantum adversaries has advanced significantly, focusing on standardization, hybrid real-world deployments, and efficient authentication mechanisms.

\noindent\textbf{\rev{Industrial gateway deployment and migration requirements.}}
\rev{The Greengrass documentation and Ergon Energy case study provide complementary practical context: the former describes certificate-authenticated edge/cloud operation, whereas the latter documents remote utility connectivity over satellite \mbox{\cite{AWSGreengrassFAQ,AWSGreengrassAuthentication,InmarsatErgonEnergy}}. Neither establishes support for HPQ-AKE or the suitability of installed RSA keys for OAEP. Certificate key-usage restrictions must be respected: RFC 5280 distinguishes signature use from key encipherment \mbox{\cite{rfc5280KeyUsage}}. Accordingly, a candidate deployment must verify RSA-OAEP support and permitted key usage, provision appropriate credentials when existing keys are signing-only, and implement HPQ-AKE at both backhaul endpoints. Peer-key distribution, renewal, revocation handling during outages, and gateway trust remain operational requirements, not problems solved by reducing handshake bytes. This motivates HPQ-AKE as a candidate for managed RSA-enabled backhaul migration rather than a drop-in replacement for commercial TLS services.}

\noindent \textbf{Formal Security Models for Authenticated Key Exchange.}
The formal analysis of AKE protocols follows several established models. The Bellare--Rogaway (BR) framework \cite{b1} defines security through the Session Key Indistinguishability (SK-Security) game, modeling adversaries with \textsf{Send}, \textsf{Reveal}, and \textsf{Corrupt} oracles. The extended Canetti--Krawczyk (eCK) model \cite{LaMacchia2007, Canetti2001} captures stronger adversarial capabilities, including combined ephemeral and static key exposure, providing a natural framework for analyzing protocols with forward secrecy requirements. For post-quantum settings, the Quantum Random Oracle Model (QROM) extends classical ROM proofs to account for adversaries capable of making quantum superposition queries, as required for protocols combining lattice-based and classical primitives \cite{BonehZhandry, zhandry2015secure}.
Our analysis of HPQ-AKE employs a Hybrid Oracle Model Framework that splits authentication and secrecy analysis: classical authentication properties are proven in the ROM under RSA hardness, while forward secrecy against quantum adversaries is bounded in the QROM under M-LWE hardness. This hybrid proof strategy directly reflects the transitional nature of the protocol's threat model and is more rigorous than proofs that apply either model uniformly to hybrid constructions. 

\noindent \textbf{Post-Quantum Standardization and IoT Challenges.}
Following the NIST PQC standardization process initiated in 2016, CRYSTALS-Kyber was selected as the standard for KEM \cite{nist-report2022, Avanzi}. For digital signatures, CRYSTALS-Dilithium and Falcon were chosen \cite{Ducas}. While these algorithms are secure, their implementation on resource-constrained devices poses significant challenges. Recent benchmarks on embedded systems, such as the ARM Cortex-M4, indicate that post-quantum signatures often incur higher energy consumption and memory usage than their encapsulation counterparts \cite{pqm4-Kannwischer}. This asymmetry creates a bottleneck for IoT environments, motivating the search for lighter authentication alternatives. 

\noindent \textbf{IoT-Oriented AKE and Post-Quantum Evaluation.}
EDHOC provides a compact authenticated key exchange for constrained IoT and OSCORE deployments \cite{rfc9528}, and recent work has evaluated EDHOC-based operation over low-power networks \cite{AriasJimenez2025}. PUF-based MAKEP protocols address lightweight thing-to-server and thing-to-thing authentication \cite{Zerrouki2023}, while Cruz-Piris et al. quantify the system-level impact of PQC in Industrial IoT \cite{CruzPiris2025}. Hanna et al. provide a comprehensive performance evaluation of post-quantum security for consumer IoT devices \cite{Hanna2025}. In contrast, HPQ-AKE contributes a sign-less AKE protocol with mutual implicit authentication, a formal security reduction, and communication-centric break-even analysis rather than only evaluating existing post-quantum mechanisms. 

\noindent \textbf{Hybrid Key Exchange in Practice.}
To mitigate the risks associated with early PQC deployments, the industry has increasingly adopted hybrid constructions combining classical and post-quantum primitives. Stebila and Mosca \cite{7-Bos} established the theoretical foundations of hybrid TLS designs, while large-scale industrial experiments, such as Google's and Cloudflare's CECPQ2, demonstrated the practical feasibility of combining X25519 with lattice-based KEMs. More recently, the IETF has standardized hybrid key exchange mechanisms integrating ML-KEM into TLS~1.3 workflows \cite{Stebila}. However, most existing hybrid approaches preserve the traditional Signed-Diffie-Hellman authentication paradigm, where post-quantum signatures remain responsible for authentication. Although this strategy simplifies migration compatibility, it introduces substantial communication overhead due to large certificate chains and signature payloads. Recent empirical evaluations have shown that in bandwidth-constrained environments, transmission latency increasingly dominates local cryptographic processing costs, particularly under high-RTT satellite or IoT conditions. Consequently, reducing handshake payload size has become an equally important optimization objective alongside computational efficiency. 

\noindent \textbf{KEM-based Authentication (Sign-less Protocols).}
A promising approach to eliminating signature overhead is ``implicit authentication" or KEM-based authentication. This concept utilizes long-term KEM keys to mutually authenticate peers via successful key decapsulation. The KEMTLS protocol by Schwabe et al. \cite{kemtls} successfully integrated this concept into TLS 1.3, showing that authentication can be achieved with reduced bandwidth. Following this, other variants, such as KEMTLS-PDK (Pre-Distributed Keys), have been proposed to further optimize latency. \rev{These protocols differ in both trust setup and cryptographic composition. For example, Muckle\# combines classical, post-quantum, and quantum-key-distribution inputs with post-quantum KEM-based authentication \mbox{\cite{9-Battarbee}}, whereas other hybrid approaches use identity-based encryption \mbox{\cite{Astrizi}}.} Our proposed HPQ-AKE leverages the ubiquity of RSA-OAEP hardware acceleration while targeting bandwidth-constrained deployment environments. Unlike prior works that primarily minimize computational complexity, HPQ-AKE adopts a communication-centric design philosophy. It trades a total computational footprint of \textbf{7.11 ms} for a \textbf{56.4\% reduction} in total handshake payload size (shrinking transmitted data from \textbf{13,009 Bytes to 5,668 Bytes} compared to the Hybrid TLS 1.3 Full handshake). As established by our break-even analysis, this design becomes advantageous in satellite, wireless, and edge IoT \rev{networks below the 0.31 Mbps threshold at the modeled 20 ms RTT setting; larger RTT values lower this threshold.}
\par

\noindent\textbf{\rev{Trust-model distinction.}}
\rev{HPQ-AKE assumes trusted identity-to-key bindings before the handshake: the initiator stores the responder's RSA and static ML-KEM public keys, and the responder stores the initiator's RSA public key. Enrollment may use an enterprise PKI or another authenticated provisioning channel; cache integrity, key updates, and revocation handling remain external requirements. In contrast, the full KEMTLS handshake authenticates the peer's KEM key through a CA-signed certificate chain validated against a local trust anchor \mbox{\cite{kemtls}}. Removing transcript signatures does not remove certificate signatures, and certificate validation does not inherently require contacting a CA during each handshake. KEMTLS-PDK also supports pre-distributed or cached keys \mbox{\cite{Schwabe2021}} and is therefore closer to HPQ-AKE in its prior-trust requirement. HPQ-AKE shifts identity-to-key validation to provisioning rather than eliminating it: its reported handshake savings exclude enrollment and credential maintenance. Consequently, comparison with full KEMTLS involves different credential-delivery assumptions, and our Pre-cached Hybrid TLS baseline is not an evaluation of KEMTLS-PDK.}

\noindent \textbf{Revisiting RSA for Hybrid Authentication.}
Our work diverges from the pure PQC trend by revisiting RSA-KEM, formally proposed by Shoup \cite{V.Shoup}. Unlike the vulnerable RSA-PKCS v1.5, RSA-KEM (instantiated via RSA-OAEP) is secure in the Random Oracle Model and allows for encryption-based authentication. This design choice addresses the critical security communication trade-off inherent in the post-quantum transition. While purely lattice-based signatures provide quantum resistance, their large certificate chains and transmitted payload sizes can become prohibitive in latency-sensitive or bandwidth-limited environments, even when local computation remains relatively efficient. \rev{HPQ-AKE combines RSA-based implicit authentication with ML-KEM session secrecy to investigate a communication-oriented migration trade-off. Its timing evidence is limited to the reported x86 software testbed and the associated network model. RSA hardware acceleration remains a potential deployment option, not a benefit established by these measurements; performance on ARM gateways and unaccelerated microcontrollers requires platform-specific evaluation.}

To clearly distinguish our contributions from existing literature, Table \ref{tab:comparison} provides a multi-dimensional comparison across key security and performance criteria.

\begin{table*}[pos=t]
\centering
\caption{Comparison of Related Works and the Proposed Protocol}
\label{tab:comparison}
\small
\begin{tabularx}{\textwidth}{l >{\RaggedRight\arraybackslash}p{5cm} >{\RaggedRight\arraybackslash}X}
\toprule
\textbf{Reference} & \textbf{Main Contribution} & \textbf{Unaddressed Aspects} \\ \midrule

KEMTLS (2020) \cite{kemtls} & \rev{Uses a certified long-term KEM key instead of a transcript signature for server authentication.} & \rev{The full handshake retains CA-signed certificates and TLS-PKI validation; unlike HPQ-AKE, it does not assume the peer's leaf key is already trusted locally.} \\ \addlinespace

EDHOC (RFC 9528) \cite{rfc9528,AriasJimenez2025} & Provides a compact AKE optimized for sensor-level constrained IoT and low-power deployments. & Targets device-level exchanges rather than gateway-level post-quantum migration and does not define the ML-KEM + RSA-OAEP transitional architecture of HPQ-AKE. \\ \addlinespace

Alagic et al. (2025) \cite{Alagic2025} & Formalizes secure combiners to ensure hybrid KEMs remain IND-CCA secure. & Purely theoretical/mathematical study. Does not propose an interactive AKE protocol with specific entity exchange steps. \\ \addlinespace

Ricci et al. (2024) \cite{Ricci} & Implements a 3-key hybrid system (Classical + PQC + QKD) on FPGA hardware. & High architectural complexity due to QKD requirements. Not practical for standard internet environments without specialized hardware. \\ \addlinespace

Amal Raj (2025) \cite{Raj} & Introduces a hybrid framework combining EdDSA, ML-DSA, and ML-KEM. & Still relies on ML-DSA signatures for authentication, resulting in substantial certificate-chain overhead and larger handshake payloads. \\ \addlinespace

Crockett et al. (2019) \cite{E.Crockett} & Reports on early implementations of PQC/Hybrid key exchanges in TLS and SSH. & Focuses on using KEMs for key exchange but still uses signatures for authentication. Lacks dedicated KEM-based authentication logic. \\ \addlinespace

Battarbee et al. (2025) \cite{9-Battarbee} & \rev{Proposes Muckle\#, replacing Muckle+'s signatures with post-quantum KEM authentication.} & \rev{Combines classical, post-quantum, and QKD key material; its scope differs from HPQ-AKE's RSA-enabled gateway migration.} \\ \addlinespace

Zhang et al. (2015) \cite{Zhang} & Provides a signature-free 2-pass AKE based on the hardness of Ring-LWE. & Uses implicit authentication and is purely post-quantum, lacking hybrid compatibility for the current transition period. \\ \addlinespace

Xia et al. (2024) \cite{Xia} & Proposes a Kyber-based scheme specifically for UAV networks. & Requires a Trusted Authority (TA) for registration. Context-specific and not a general-purpose hybrid RSA+Kyber protocol. \\ \addlinespace

Schwabe et al. (2021) \cite{Schwabe2021} & \rev{KEMTLS-PDK (pre-distributed keys) uses previously obtained public keys to reduce handshake communication and computation.} & \rev{Shares HPQ-AKE's need for trusted prior key distribution. Its TLS-derived handshake differs from HPQ-AKE's ML-KEM + RSA-OAEP design; pre-caching is not unique to HPQ-AKE.} \\ \addlinespace

RFC 5990 (RSA-KEM) \cite{rfc5990} & Specifies the use of RSA-KEM as a key transport algorithm in CMS. & A non-interactive 1-pass mechanism. Lacks mutual authentication, Forward Secrecy (PFS), and quantum resistance. \\ \addlinespace

Zhou \& Feng \cite{Zhou} & Surveys side-channel attack techniques and physical security testing. & A survey/position paper. Identifies implementation risks but does not provide a logic for authenticated key exchange. \\ \bottomrule

\end{tabularx}
\par\smallskip
\begin{minipage}{\textwidth}
\footnotesize
\rev{Terminology: CA = certificate authority; IND-CCA = indistinguishability under chosen-ciphertext attack; QKD = quantum key distribution; FPGA = field-programmable gate array; UAV = unmanned aerial vehicle; CMS = Cryptographic Message Syntax, a container format for protected messages.}
\end{minipage}
\end{table*}
Unlike prior evaluations that primarily report isolated cryptographic execution times, our work introduces a multi-tiered validation framework combining empirical hardware micro-benchmarking, physics-compliant multi-RTT analytical latency modeling, and stochastic Monte Carlo network simulations executed over \textbf{10,000 algorithmic iterations}. This enables a statistically rigorous assessment of protocol stability, exact break-even bandwidth boundaries, and end-to-end latency variance under realistic, jitter-prone deployment environments.

\section{Cryptographic Primitives and Mathematical Foundations}
\label{sec:preliminaries}
In this section, we establish the mathematical notations and cryptographic primitives utilized in our protocol. We denote the security parameter by $\lambda$. For a finite set $S$, $x \xleftarrow{\$} S$ denotes sampling $x$ uniformly at random from $S$. We use $||$ for string concatenation and $\oplus$ for bitwise XOR. A function $\nu: \mathbb{N} \to \mathbb{R}$ is negligible, denoted $\nu(\lambda) \le \mathsf{negl}(\lambda)$, if it vanishes faster than the inverse of any polynomial. These primitives are selected to balance post-quantum confidentiality, communication efficiency, deployment compatibility, and computational practicality in hybrid migration environments. In particular, lattice-based cryptography provides quantum-resistant session secrecy, while RSA-based authentication leverages widely deployed hardware accelerators and compact certificate infrastructures to reduce handshake communication overhead during the transition period \cite{V.Shoup}.

\subsection{Lattice-Based Cryptography}
Our protocol's post-quantum confidentiality relies on the hardness of lattice problems over module rings. Lattice-based cryptography is favored in post-quantum standards due to its resistance to quantum attacks, such as those using Grover's or Shor's algorithms, and its efficient performance on classical hardware \cite{Shor}.

\subsubsection{Module-LWE Problem}
Let $R_q = \mathbb{Z}_q[X]/(X^n+1)$ be the ring of polynomials where $n$ is a power of 2 and $q$ is a prime. The Module Learning With Errors (M-LWE) problem serves as the hardness assumption for CRYSTALS-Kyber \cite{Avanzi}. This problem is believed to be quantum-hard, with no known efficient quantum algorithms to solve it, making it suitable for post-quantum security \cite{BosEtAl}.

\begin{definition}[M-LWE Distribution]
Let $n, q, k$ be integers and $R_q = \mathbb{Z}_q[X]/(X^n+1)$. For a secret vector $\mathbf{s} \in R_q^k$ and error vectors $\mathbf{e} \in R_q^k$ sampled from a centered binomial distribution $\beta_\eta$, and a matrix $\mathbf{A} \in R_q^{k \times k}$ sampled uniformly, an M-LWE sample is $(\mathbf{A}, \mathbf{t} := \mathbf{A}\mathbf{s} + \mathbf{e})$. The {\rm{Decision M-LWE assumption}} states that $(\mathbf{A}, \mathbf{t})$ is computationally indistinguishable from $(\mathbf{A}, \mathbf{u})$ where $\mathbf{u} \xleftarrow{\$} R_q^k$.
\end{definition}

\begin{definition}[Decision M-LWE Hardness]
The {\rm{Decision M-LWE problem}} asks to distinguish pairs $(\mathbf{A}, \mathbf{t})$ drawn from the M-LWE distribution from pairs $(\mathbf{A}, \mathbf{u})$ where $\mathbf{u} \xleftarrow{\$} R_q^k$ is uniform. We assume this problem is computationally intractable for any quantum polynomial-time adversary.
\end{definition}

\subsection{Key Encapsulation Mechanisms (KEM)}
A KEM scheme $\mathcal{K} = (\mathsf{KeyGen}, \mathsf{Encaps}, \mathsf{Decaps})$ allows two parties to derive a shared secret. We require IND-CCA security for all KEMs used in HPQ-AKE. IND-CCA provides security against adaptive chosen-ciphertext attacks, which are critical for preventing decryption oracles in key exchange protocols \cite{b1}.
\subsubsection{Post-Quantum KEM: Kyber}
We utilize Kyber, an IND-CCA secure KEM constructed from an IND-CPA PKE scheme via the Fujisaki-Okamoto transform. Kyber has been standardized as ML-KEM in FIPS 203 by NIST in 2024, providing quantum-resistant key encapsulation with parameters optimized for security levels 1, 3, and 5 \cite{Avanzi}. We use ML-KEM-768 (formerly Kyber-768) to balance post-quantum security, compact ciphertext size, and low computational overhead, offering approximately 192 bits of post-quantum security \cite{nistfourthround}:
\begin{itemize}
    \item $\mathsf{KeyGen}_{PQ}()$: Generates $(pk_{pq}, sk_{pq})$.
    \item $\mathsf{Encaps}_{PQ}(pk_{pq}) \to (K, C)$: Returns a shared key $K$ and ciphertext $C$.
    \item $\mathsf{Decaps}_{PQ}(sk_{pq}, C) \to K$: Recovers $K$ or returns implicit rejection $\bot$.
\end{itemize}

\subsubsection{RSA-KEM with OAEP}
To leverage existing RSA-enabled infrastructures and compact certificate ecosystems, we instantiate a KEM \cite{V.Shoup} using RSA-OAEP \cite{ShoupOAEP}. While standard RSA is a public key encryption scheme, its construction under the OAEP padding provides the required IND-CCA security in the Random Oracle Model:
\begin{itemize}
\item $\mathsf{Encaps}_{RSA}(pk_{rsa})$: Samples $r \xleftarrow{\$} \{0,1\}^{\lambda}$, computes $C \leftarrow \mathsf{Enc}(pk_{rsa}, r)$, and derives $K \leftarrow H(r)$.
\item $\mathsf{Decaps}_{RSA}(sk_{rsa}, C)$: Recovers $r \leftarrow \mathsf{Dec}(sk_{rsa}, C)$. If $r \neq \bot$, returns $K \leftarrow H(r)$, otherwise returns $\bot$.
\end{itemize}
The security of RSA-KEM relies on the \textbf{RSA Assumption}: Given $(N, e, y)$, finding $x$ such that $x^e \equiv y \pmod N$ is hard. RSA-OAEP is provably secure in the Random Oracle Model (ROM) against chosen-ciphertext attacks, as specified in PKCS\#1 v2.2 (RFC 8017) \cite{rfc8017}. While RSA is vulnerable to quantum attacks via Shor's algorithm, in our hybrid design, it is used solely for authentication during the transition period, assuming no large-scale quantum computers are available yet.

\subsubsection{Hybrid Rationale} Our construction follows a communication-aware hybrid design philosophy in which post-quantum confidentiality is anchored in the M-LWE assumption, while authentication compatibility is preserved through RSA-based mutual implicit authentication. Unlike conventional hybrid TLS constructions that inherit the large communication footprint of post-quantum signatures, HPQ-AKE combines Kyber with RSA-KEM to eliminate certificate-chain and signature overhead, compressing the total transmitted handshake payload size by \textbf{56.4\%} (from \textbf{13,009 Bytes down to 5,668 Bytes}). Consequently, the architecture trades a bounded total computational footprint of \textbf{7.11 ms} for lower transmission latency in severely bandwidth-constrained environments \rev{operating below the 0.31 Mbps break-even boundary at 20 ms RTT; this boundary depends on RTT.}

\subsection{Authenticated Encryption with Associated Data (AEAD)}
To replace digital signatures, our protocol utilizes an AEAD scheme $\Pi_{AEAD} = (\mathsf{Enc}, \mathsf{Dec})$ to bind the session transcript. AEAD provides both confidentiality and integrity while avoiding the communication overhead associated with large post-quantum signatures, making it suitable for lightweight sign-less authentication, as in our sign-less approach \cite{rfc5116}. We instantiate AEAD with AES-GCM for its efficiency and hardware support \cite{McGrewViegaGCM}.
\par
\begin{definition}[AEAD Security]
An AEAD scheme is {\rm{secure}} if it provides:
\begin{enumerate}
    \item \textbf{Privacy (IND-CPA)}: Ciphertexts are indistinguishable from random strings.
    \item \textbf{Integrity (INT-CTXT)}: It is computationally infeasible to produce a valid ciphertext not output by the encryption oracle.
\end{enumerate}
\end{definition}
In our protocol, $\mathsf{Enc}(K, N, AD, M) \to C$ takes a key $K$, nonce $N$, associated data $AD$, and message $M$. The decryption $\mathsf{Dec}(K, N, AD, C)$ returns $M$ or error symbol $\bot$. \rev{We specify AES-256-GCM with a 96-bit nonce and a 128-bit appended authentication tag. RFC 5116 requires nonce uniqueness for encryption under each particular key; it does not make AES-GCM resistant to nonce misuse. Reusing a key--nonce pair for different plaintexts can compromise confidentiality and integrity \mbox{\cite{rfc5116,nist800-38D}}.}

\noindent\textbf{\rev{Handshake nonce policy.}}
\rev{The two AEAD encryptions use implicit, direction-specific 12-byte constants:}
\[
\mbox{$\nu_T = 0^{96}, \qquad \nu_C = 0^{95} \parallel 1.$}
\]
\rev{The token uses $\nu_T$ with the fresh key $k_{RSA}^A$; the challenge uses $\nu_C$ with the independently generated fresh key $k_{RSA}^B$. Receivers use the same constants for decryption. These are nonempty nonce values, distinct from the 256-bit session nonces $N_A$ and $N_B$, and require no transmitted nonce field. Each encapsulated key is permitted at most one AEAD encryption and must not be reused for another session or application-data encryption. The long-term RSA key may be reused for encapsulation/decapsulation; the one-encryption restriction applies to the encapsulated symmetric key.}

\rev{Thus, conditional on fresh independent key generation and absence of accidental key collisions, no two honest encryption invocations repeat a key--nonce pair, even though the nonce constants recur across different keys. Retransmission must reuse the cached ciphertext byte-for-byte, not call Seal again. A changed payload, restarted handshake, or recovery after a crash requires new encapsulation and fresh session state; restoring used keys or rolled-back randomness is prohibited. Repeated decryption does not violate the encryption-nonce requirement, and protocol replay checks remain necessary. This is a nonce-respecting specification, not a claim of security if its key-use discipline is violated.}
\subsection{Cryptographic Hash Functions}
We employ cryptographic hash functions $H: \{0,1\}^* \to \{0,1\}^d$ (SHA-3) and a Key Derivation Function (KDF) modeled as HKDF \cite{rfc5869}.
In the security analysis, we model these functions as \textbf{Random Oracles (ROM)}. We assume they satisfy:
\begin{itemize}
    \item \textbf{Collision Resistance}: Hard to find $x \neq x'$ such that $H(x) = H(x')$.
    \item \textbf{Pre-image Resistance}: Given $y$, hard to find $x$ such that $H(x) = y$.
\end{itemize}
These properties are crucial for the integrity of the Session ID ($SID$) and the binding of authentication tokens. The compact transcript-binding mechanism additionally avoids the transmission overhead associated with explicit signature-based authentication. SHA-3 is specified in FIPS 202, offering quantum resistance against Grover's algorithm (e.g., 128-bit security requires $2^{64}$ quantum queries) \cite{nistSHA-3}. HKDF (RFC 5869) is used for key derivation to ensure pseudorandom outputs from shared secrets \cite{rfc5869}.

\section{Proposed Protocol}
\label{sec:protocol}
We propose HPQ-AKE, a Hybrid Post-Quantum Authenticated Key Exchange protocol designed to operate efficiently in constrained communication environments such as low-bandwidth wireless and satellite links. HPQ-AKE departs from the traditional Signed-Diffie-Hellman paradigm by eliminating digital signatures entirely. Instead, it employs a \textit{Dual-Layered KEM} architecture:
\begin{enumerate}
    \item \textbf{Quantum Layer}: Uses ML-KEM (Kyber) \cite{Avanzi} for ephemeral key exchange, ensuring post-quantum confidentiality and forward secrecy for the established session key under the M-LWE assumption.
    \item \textbf{Classical Layer}: Uses RSA-OAEP \cite{Fujisaki} as a Key Encapsulation Mechanism (RSA-KEM) to provide implicit authentication, while explicit key confirmation is achieved in the final confirmation phase.
\end{enumerate}

The protocol follows a three-pass design shown in Fig. \ref{fig:protocol_flow_large}. The static Kyber encapsulation provides an authenticated post-quantum anchor toward the responder, while the ephemeral encapsulation contributes fresh entropy to achieve forward secrecy. Authenticated Encryption with Associated Data (AEAD), specifically AES-GCM, is used to secure the handshake payloads. Since the encrypted payloads are small (authentication tokens and challenge values only), the symmetric encryption overhead is negligible compared to the asymmetric KEM operations in our experimental evaluation.

\subsection{IoT Deployment Model}
HPQ-AKE targets a gateway-assisted IoT architecture. Constrained sensors or actuators communicate through a local access link and are not assumed to execute the complete handshake in the present evaluation. An RSA-enabled gateway acts as Initiator A and terminates security on the device side, while an edge/cloud service acts as Responder B. Each cryptographic endpoint performs one RSA-3072 private decapsulation per completed handshake. \rev{Gateway-class and server-class nodes with adequate RSA capability are the intended deployment targets, not hardware classes validated by the present measurements. Here, RSA-enabled denotes support for RSA-OAEP operations, not verified dedicated RSA acceleration.} Direct execution on an unaccelerated microcontroller has not been demonstrated and remains an embedded-validation objective.

\begin{figure*}[pos=!t]
    \centering
    \includegraphics[width=0.8\textwidth]{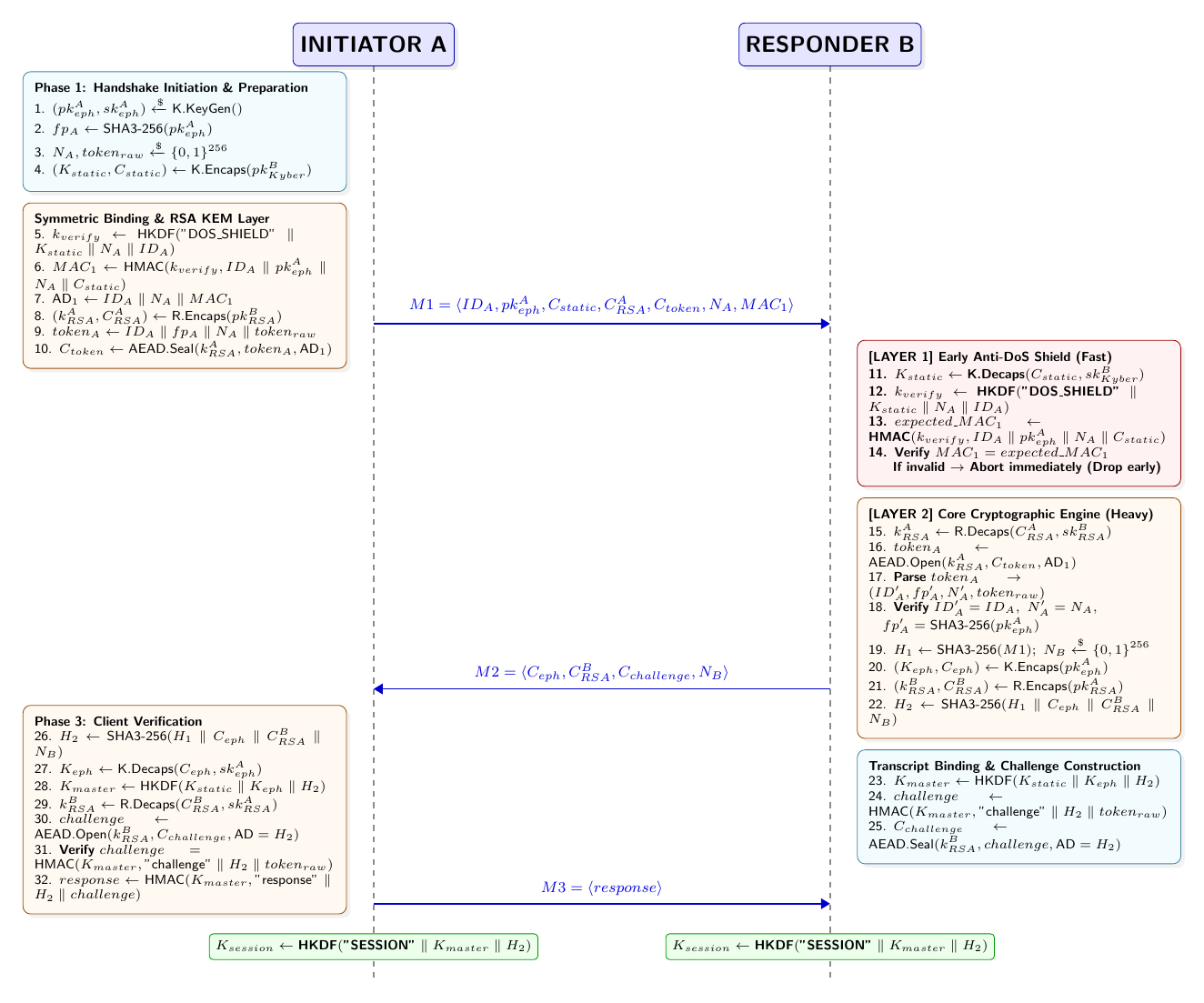}
    \caption{Handshake execution flow of HPQ-AKE.}
    \label{fig:protocol_flow_large}
\end{figure*}

\subsection{Notations and Primitives}
Table \ref{tab:notation} summarizes the cryptographic notations. To eliminate the runtime overhead of transmitting and validating heavy certificate chains in bandwidth-constrained channels, HPQ-AKE adopts a \textit{pre-cached provisioning model}. In this architectural setup, the long-term static public keys ($pk_{RSA}^A, pk_{RSA}^B, pk_{Kyber}^B$) are securely pre-loaded onto the respective nodes (e.g., during factory commissioning or initial device deployment). This design avoids the need for an active, online Public Key Infrastructure (PKI) lookup during the handshake. The authenticity of the pre-cached public keys is assumed to be established during the provisioning phase through an out-of-band trusted mechanism (e.g., manufacturer-installed certificates, secure enrollment, or enterprise PKI provisioning). Therefore, HPQ-AKE focuses on reducing online handshake overhead rather than replacing the trust establishment process itself.

\noindent\textbf{\rev{Key-update workflow (deployment guidance).}}
\rev{For scheduled rotation, an endpoint generates a replacement key pair, and an administrator authorizes an update binding the identity, key role, new public key, version, and validity interval under a separately provisioned management trust anchor. Recipients verify authorization and validity using trustworthy time, reject rollback using persistently stored versions, and acknowledge staged installation over an authenticated management channel. The administrator coordinates activation between handshakes and retirement of old credentials; unsynchronized peers must resynchronize rather than fall back to retired keys. On notification of suspected compromise, affected credentials should be disabled and replaced through an uncompromised management anchor or out-of-band commissioning, not authorization based solely on the compromised key. Such management operations may use signatures without adding transcript signatures to HPQ-AKE's three-pass handshake.}

\noindent\textbf{\rev{Trade-offs versus online PKI.}}
\rev{Pre-caching reduces per-handshake credential transmission and dependence on reachable credential services, but shifts distribution, storage, rotation, and recovery costs to deployment management. It remains compatible with PKI-based enrollment: certificate-path validation itself need not contact a CA for every handshake, while revocation information requires appropriate freshness checks \mbox{\cite{rfc5280KeyUsage}}. Reachable online status services can shorten revocation-notification delays, at the cost of communication and availability dependencies. Disconnected nodes may continue accepting a revoked key until notified or its local validity expires; a conservative policy rejects new handshakes when credential validity or management-state freshness expires until an authenticated refresh is obtained. Thus, shorter validity or refresh intervals reduce stale-key exposure but increase update traffic and the risk of service interruption. These workflows are deployment guidance, not an implemented or formally verified key-management extension; the present proof assumes authentic provisioned keys, and reported handshake costs exclude enrollment, updates, and recovery.}

\noindent\textbf{\rev{Fleet-scale limitations.}}
\rev{At fleet scale, authentic provisioning must be maintained across gateway/service endpoints and their authorized peer relationships; it is not guaranteed by the handshake. Coordinated rotation, version tracking, and catch-up for offline peers require an authenticated management service whose storage and update traffic depend on peer counts and churn. Compromise of the provisioning trust anchor can invalidate the assumed identity-to-key bindings. Certificate-based enrollment can supply these bindings, but transmitting certificates during handshakes would additionally require accounting for certificate chains, validation, and revocation-status handling. These deployment-specific costs are not included in the reported savings; the Full TLS baseline does not quantify a certificate-carrying HPQ-AKE variant.}

\begin{table*}[pos=htbp]
\centering
\caption{Cryptographic Notations and Primitives}
\label{tab:notation}
\renewcommand{\arraystretch}{1.2}
\begin{tabularx}{\textwidth}{l X}
\toprule
\textbf{Symbol} & \textbf{Description} \\
\midrule
$ID_A, ID_B$ & Unique identifiers for Initiator A and Responder B \\
$pk_{Kyber}^{B}, sk_{Kyber}^{B}$ & Static ML-KEM-768 key pair of Responder B (Pre-cached at Initiator A) \\
$(pk, sk)_{RSA}^{User}$ & Static RSA-3072 long-term key pairs for entities (Pre-loaded securely) \\
$(pk, sk)_{eph}^A$ & Ephemeral ML-KEM key pair generated by A \\
\texttt{K.Encaps/Decaps} & Kyber (ML-KEM) encapsulation and decapsulation \\
\texttt{R.Encaps/Decaps} & RSA-OAEP encapsulation and decapsulation \\
$(C_{static}, K_{static})$ & Static Kyber ciphertext and shared secret \\
$(C_{eph}, K_{eph})$ & Ephemeral Kyber ciphertext and shared secret \\
$(C_{RSA}^X, k_{RSA}^X)$ & RSA-KEM ciphertext and key generated by entity X \\
$fp_A$ & Fingerprint: $fp_A = \text{SHA3-256}(\text{encode}(pk_{eph}^A))$ \\
$K_{master}, K_{session}$ & Master secret and final Session Key \\
$\Challenge$ & HMAC challenge generated by B and bound to $K_{master}$, the transcript hash, and $token_{raw}$ for explicit key confirmation \\
$\Response$ & HMAC response generated by A over the verified challenge to prove possession of $K_{master}$ \\
$\ChallengeLabel, \ResponseLabel$ & Constant HMAC domain-separation labels \texttt{"challenge"} and \texttt{"response"}; only the labels are fixed, not the resulting values \\
$\mathsf{HKDF}(\cdot)$ & HMAC-based Key Derivation Function (RFC 5869) \\
\texttt{SecureErase} & Best-effort memory sanitization of temporary secrets after session establishment \\
\bottomrule
\end{tabularx}
\end{table*}

\subsection{Protocol Execution Details}

\noindent \textbf{Phase 1: Initiation (A $\rightarrow$ B).} 
Initiator A generates an ephemeral post-quantum key pair and encapsulates two cryptographic secrets: one targeted at B's static quantum key ($K_{static}$) and another targeted at B's long-term RSA public key ($k_{RSA}^A$). To safeguard identity privacy, a payload token binding A's identity, nonces, and the ephemeral key fingerprint ($fp_A$) is encrypted via authenticated encryption (AEAD) using $k_{RSA}^A$. Additionally, an early asymmetric DoS mitigation tag ($MAC_1$) is computed using a key derived from $K_{static}$. The resulting handshake initiation message is transmitted as
\[
 M1 = \langle ID_A, pk_{eph}^A, C_{static}, C_{RSA}^A, C_{token}, N_A, MAC_1 \rangle.
\]
\textit{Security Note:} \rev{These checks validate message consistency and token bindings, not yet the identity of A: anyone can encapsulate to public keys. Authentication of A is deferred until B verifies the final confirmation under a master key incorporating the RSA secret encapsulated to A.}

\noindent \textbf{Phase 2: Authentication \& Challenge (B $\rightarrow$ A).} 
Upon receiving $M1$, Responder B executes an accelerated verification phase by decapsulating $C_{static}$ to recompute and validate $MAC_1$; messages with mismatched tags are immediately discarded prior to invoking expensive RSA operations. Once the DoS filter is passed, B decapsulates $C_{RSA}^A$, decrypts $C_{token}$, and verifies that the internal identity and fingerprint bindings strictly match the outer claims. B then generates a fresh nonce $N_B$, encapsulates an ephemeral post-quantum secret ($K_{eph}$) against A's ephemeral key, and encapsulates an RSA secret ($k_{RSA}^B$) against A's long-term public key $pk_{RSA}^A$. A cryptographic challenge is derived from the master secret $K_{master}$ and sealed via AEAD under $k_{RSA}^B$ as follow
\[
 M2 = \langle C_{eph}, C_{RSA}^B, C_{challenge}, N_B \rangle.
\]
\rev{The master-key schedule explicitly incorporates the existing RSA secret generated by B:}
\[
\mbox{$K_{master} = \MasterKDF(K_{static} \parallel K_{eph} \parallel k_{RSA}^B \parallel H_2).$}
\]
\rev{Here the subscript denotes a separate HKDF context, with the ASCII string HPQ-AKE/master as its expand-stage info; the concatenated fields are its input keying material. This binds both confirmation tags and the session key to the RSA encapsulation to A, rather than relying solely on encryption of a challenge that could otherwise be recomputed. No handshake field or additional encapsulation is introduced.}

\noindent \textbf{Phase 3: Confirmation (A $\rightarrow$ B).} 
\rev{Initiator A receives M2 and first decapsulates both the ephemeral ML-KEM ciphertext and the RSA ciphertext addressed to A. Only after both operations succeed does it derive the master key and open the encrypted challenge using the recovered RSA secret.} After verifying that the recovered challenge value $\Challenge$ matches the expected transcript-bound value, A computes a confirmation response value $\Response$ utilizing an HMAC-based PRF layer to explicitly prove live session freshness and key possession
\[
 M3 = \langle \Response \rangle.
\]
The responder validates $\Response$ against its local state; upon a successful match, both parties compute the final keys $K_{session}$ and securely erase all ephemeral intermediates.

\subsection{Algorithmic Specification}
Algorithms \ref{alg:hardened_hpq_ake_initiator} and \ref{alg:hardened_hpq_ake_responder} detail the exact execution flow of the HPQ-AKE protocol. \rev{All Seal/Open calls use the nonce constants and single-encryption-per-key discipline specified in Section 3.3; retransmissions send cached messages without re-entering Seal.}

\begin{algorithm*}[t]
\caption{Hardened HPQ-AKE Initiator (A)}
\label{alg:hardened_hpq_ake_initiator}
\footnotesize
\begin{algorithmic}[1]
\renewcommand{\algorithmicrequire}{\textbf{Input:}}
\renewcommand{\algorithmicensure}{\textbf{Output:}}
\REQUIRE Long-term keys ($pk_{RSA}^B, pk_{Kyber}^B, sk_{RSA}^A$), Identities ($ID_A, ID_B$)
\ENSURE Final Session Key $K_{session}$ or termination token $\perp$

\STATE \textbf{// Phase 1: prepare M1}
\STATE $(pk_{eph}^A, sk_{eph}^A) \xleftarrow{\$} \text{K.KeyGen}()$
\STATE $fp_A \gets \text{SHA3-256}(pk_{eph}^A)$ \textbf{ // fingerprint}
\STATE $N_A \xleftarrow{\$} \{0,1\}^{256}; \quad token_{raw} \xleftarrow{\$} \{0,1\}^{256}$

\STATE \textbf{// 1. Early Asymmetric DoS Mitigation Filter (Fast Kyber Encapsulation)}
\STATE $(K_{static}, C_{static}) \gets \text{K.Encaps}(pk_{Kyber}^B)$
\STATE $k_{verify} \gets \text{HKDF}(\text{"DOS\_SHIELD"} \parallel K_{static} \parallel N_A \parallel ID_A)$
\STATE $MAC_1 \gets \text{HMAC}(k_{verify}, ID_A \parallel pk_{eph}^A \parallel N_A \parallel C_{static})$

\STATE \textbf{// 2. RSA core}
\STATE $(k_{RSA}^A, C_{RSA}^A) \gets \text{R.Encaps}(pk_{RSA}^B)$
\STATE $token_A \gets ID_A \parallel fp_A \parallel N_A \parallel token_{raw}$
\STATE $C_{token} \gets \text{AEAD.Seal}(k_{RSA}^A, token_A, \mbox{$N=\nu_T$}, \text{AD}=ID_A \parallel N_A \parallel MAC_1)$

\STATE \textbf{// 3. Build M1}
\STATE $M1 \gets \langle ID_A, pk_{eph}^A, C_{static}, C_{RSA}^A, C_{token}, N_A, MAC_1 \rangle$
\STATE $H_1 \gets \text{SHA3-256}(M1)$ \textbf{ // bind M1}
\STATE \textbf{Send} $M1$ to B

\STATE \textbf{// Phase 3: verify M2 and finish}
\STATE \textbf{Receive} $M2 = \langle C_{eph}, C_{RSA}^B, C_{challenge}, N_B \rangle$ from B
\STATE \textbf{// Transcript hash}
\STATE $H_2 \gets \text{SHA3-256}(H_1 \parallel C_{eph} \parallel C_{RSA}^B \parallel N_B)$

\STATE $K_{eph} \gets \text{K.Decaps}(C_{eph}, sk_{eph}^A)$
\IF{$K_{eph} = \perp$}
    \STATE \RETURN $\perp$ \textbf{ (Abort)}
\ENDIF

\STATE \mbox{$k_{RSA}^B \gets \text{R.Decaps}(C_{RSA}^B, sk_{RSA}^A)$}
\IF{\mbox{$k_{RSA}^B = \perp$}}
    \STATE \rev{Return $\perp$ (Abort)}
\ENDIF
\STATE \mbox{$K_{master} \gets \MasterKDF(K_{static} \parallel K_{eph} \parallel k_{RSA}^B \parallel H_2)$}

\STATE $\Challenge \gets \text{AEAD.Open}(k_{RSA}^B, C_{challenge}, \mbox{$N=\nu_C$}, \text{AD}=H_2)$
\IF{$\Challenge = \perp$}
    \STATE \RETURN $\perp$ \textbf{ (Abort)}
\ENDIF

\STATE $expected \gets \text{HMAC}(K_{master}, \ChallengeLabel \parallel H_2 \parallel token_{raw})$
\IF{$\Challenge \neq expected$}
    \STATE \RETURN $\perp$ \textbf{ (Abort)}
\ENDIF

\STATE $\Response \gets \text{HMAC}(K_{master}, \ResponseLabel \parallel H_2 \parallel \Challenge)$
\STATE \textbf{Send} $M3 = \langle \Response \rangle$ to B

\STATE $K_{session} \gets \text{HKDF}(\text{"SESSION"} \parallel K_{master} \parallel H_2)$
\STATE \text{SecureErase}(\text{all temporary secrets})
\STATE \RETURN $K_{session}$
\end{algorithmic}
\end{algorithm*}

\begin{algorithm*}[t]
\caption{Hardened HPQ-AKE Responder (B)}
\label{alg:hardened_hpq_ake_responder}
\footnotesize
\begin{algorithmic}[1]
\renewcommand{\algorithmicrequire}{\textbf{Input:}}
\renewcommand{\algorithmicensure}{\textbf{Output:}}
\REQUIRE Long-term keys ($sk_{RSA}^B, sk_{Kyber}^B, pk_{RSA}^A$), Identities ($ID_A, ID_B$)
\ENSURE Final Session Key $K_{session}$ or termination token $\perp$

\STATE \textbf{// Phase 2: check M1 and build challenge}
\STATE \textbf{Receive} $M1 = \langle ID_A, pk_{eph}^A, C_{static}, C_{RSA}^A, C_{token}, N_A, MAC_1 \rangle$ from A

\STATE \textbf{// STEP 2.1: ASYMMETRIC DOS MITIGATION (Ultra-fast Kyber Decaps Verification)}
\STATE $K_{static} \gets \text{K.Decaps}(C_{static}, sk_{Kyber}^B)$
\IF{$K_{static} = \perp$}
    \STATE \RETURN $\perp$ \textbf{ (Abort)}
\ENDIF
\STATE $k_{verify} \gets \text{HKDF}(\text{"DOS\_SHIELD"} \parallel K_{static} \parallel N_A \parallel ID_A)$
\STATE $expected\_MAC_1 \gets \text{HMAC}(k_{verify}, ID_A \parallel pk_{eph}^A \parallel N_A \parallel C_{static})$
\IF{$MAC_1 \neq expected\_MAC_1$}
    \STATE \RETURN $\perp$ \textbf{ (Abort - Invalid payload dropped prior to RSA decryption)}
\ENDIF

\STATE \textbf{// 2.2 RSA decapsulation}
\STATE $k_{RSA}^A \gets \text{R.Decaps}(C_{RSA}^A, sk_{RSA}^B)$
\IF{$k_{RSA}^A = \perp$}
    \STATE \RETURN $\perp$ \textbf{ (Abort)}
\ENDIF
\STATE $token_A \gets \text{AEAD.Open}(k_{RSA}^A, C_{token}, \mbox{$N=\nu_T$}, \text{AD}=ID_A \parallel N_A \parallel MAC_1)$
\IF{$token_A = \perp$}
    \STATE \RETURN $\perp$ \textbf{ (Abort)}
\ENDIF

\STATE Parse $token_A \to (ID'_A, fp'_A, N'_A, token_{raw})$
\STATE \textbf{// Check identity, nonce, fingerprint}
\IF{$ID'_A \neq ID_A$ \textbf{or} $N'_A \neq N_A$ \textbf{or} $fp'_A \neq \text{SHA3-256}(pk_{eph}^A)$}
    \STATE \RETURN $\perp$ \textbf{ (Abort)}
\ENDIF

\STATE \textbf{// 2.3 Build M2}
\STATE $H_1 \gets \text{SHA3-256}(M1)$
\STATE $N_B \xleftarrow{\$} \{0,1\}^{256}$
\STATE $(K_{eph}, C_{eph}) \gets \text{K.Encaps}(pk_{eph}^A)$
\STATE $(k_{RSA}^B, C_{RSA}^B) \gets \text{R.Encaps}(pk_{RSA}^A)$

\STATE $H_2 \gets \text{SHA3-256}(H_1 \parallel C_{eph} \parallel C_{RSA}^B \parallel N_B)$ \textbf{ // bind}
\STATE \mbox{$K_{master} \gets \MasterKDF(K_{static} \parallel K_{eph} \parallel k_{RSA}^B \parallel H_2)$}

\STATE $\Challenge \gets \text{HMAC}(K_{master}, \ChallengeLabel \parallel H_2 \parallel token_{raw})$
\STATE $C_{challenge} \gets \text{AEAD.Seal}(k_{RSA}^B, \Challenge, \mbox{$N=\nu_C$}, \text{AD}=H_2)$
\STATE \textbf{Send} $M2 = \langle C_{eph}, C_{RSA}^B, C_{challenge}, N_B \rangle$ to A

\STATE \textbf{// Phase 3: verify M3 and finish}
\STATE \textbf{Receive} $M3 = \langle \Response \rangle$ from A
\STATE $expected \gets \text{HMAC}(K_{master}, \ResponseLabel \parallel H_2 \parallel \Challenge)$
\IF{$\Response \neq expected$}
    \STATE \RETURN $\perp$ \textbf{ (Abort)}
\ENDIF

\STATE $K_{session} \gets \text{HKDF}(\text{"SESSION"} \parallel K_{master} \parallel H_2)$
\STATE \text{SecureErase}(\text{all temporary secrets})
\STATE \RETURN $K_{session}$
\end{algorithmic}
\end{algorithm*}

\section{Security Analysis}
\label{sec:security_analysis}
\subsection{Threat Model and Security Objectives}
HPQ-AKE is designed to provide authenticated key exchange security against a classical-quantum hybrid adversary during the post-quantum transition period. We assume the adversary possesses full classical network control (can intercept, replay, modify, and inject messages)
and bounded quantum computational resources insufficient to run large-scale Shor's algorithm but capable of quantum search on symmetric keys and quantum random oracle queries against lattice-based constructions.
The protocol targets the following security objectives, formally
established in Sections 5.2–5.6:
\begin{enumerate}
\item Authenticated Key Establishment: Sessions derive identical session keys only between honest matching partners.
\item Perfect Forward Secrecy (PFS): Compromise of long-term static keys does not expose past session keys.
\item \rev{Key Compromise Impersonation (KCI) Resistance: Resistance to impersonating an uncompromised peer after disclosure of the victim\textquotesingle s long-term keys, under the restricted exposure and KDF assumptions in Proposition~\mbox{\ref{prop:kci}}.}
\item Unknown Key Share (UKS) Resistance: Session identifiers cryptographically bind communication partners, preventing misbinding attacks.
\item Replay Resistance: Fresh nonces and session transcript binding prevent session state replay.
\item Deniability: KEM-based implicit authentication provides off-the-record security as a bonus property.
\end{enumerate}

We analyze the security of HPQ-AKE in an extended Bellare--Rogaway authenticated key exchange (AKE) framework \cite{b1}, augmented with forward secrecy and adversarial capabilities inspired by the eCK model \cite{b2} and the Canetti--Krawczyk (CK) framework \cite{Canetti2001}. To accurately capture the transitional nature of our architecture, the formal reduction is established using a \textit{Hybrid Oracle Model Framework}: the integrity of the real-time active entity authentication is proven in the classical Random Oracle Model (ROM), whereas the confidentiality and forward secrecy of the session key against future quantum adversaries are rigorously bounded in the Quantum Random Oracle Model (QROM).

In the multi-user/multi-session setting, we let $n_p$ denote the total number of parties in the system and $n_s$ denote the maximum number of concurrent sessions initiated by each party. All security bounds are scaled by the multi-session factor $n_p\cdot n_s$ to account for an adversary targeting any of the $n_p \cdot n_s$ possible concurrent sessions running across the network topology \cite{ShoupGames, kemtls}.

\subsection{Formal Model and Correctness}
\textbf{Matching Conversations and Correctness.}
We first define the criteria for a successful honest execution.

\begin{definition}[Matching Conversations]
Two sessions $\Pi_A^i$ and $\Pi_B^j$ are said to have {\rm{matching conversations}} if the sequence of messages sent and received by A in $\Pi_A^i$ is identical to the sequence received and sent by B in $\Pi_B^j$, respectively. Formally
$$(M1, M2, M3)_{\Pi_A^i} = (M1, M2, M3)_{\Pi_B^j}.$$
\end{definition}

\begin{definition}[Correctness]
A protocol is {\rm{correct}} if, whenever two uncorrupted parties engage in matching conversations, both accept and derive the same session key
$$K_A^{\mathrm{session}} = K_B^{\mathrm{session}}.$$
\end{definition}

\begin{proposition}[Correctness of HPQ-AKE]
If A and B complete a protocol run with matching conversations, then
$$K_A^{\mathrm{master}} = K_B^{\mathrm{master}}, \quad K_A^{\mathrm{session}} = K_B^{\mathrm{session}}.$$
\end{proposition}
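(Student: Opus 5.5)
The plan is to argue input by input, following Algorithms~\ref{alg:hardened_hpq_ake_initiator} and~\ref{alg:hardened_hpq_ake_responder}. We show that every value entering $\MasterKDF$ and the session-key HKDF is computed identically by A and B. Since HKDF and SHA3-256 are deterministic functions, equality of their inputs immediately yields the two claimed equalities. Matching conversations means that A and B hold byte-identical $M1$ and $M2$ (and $M3$).

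\textbf{Transcript hashes.} Both parties compute $H_1 = \text{SHA3-256}(M1)$ over the same $M1$, so their $H_1$ values agree. Both then hash $H_1 \parallel C_{eph} \parallel C_{RSA}^B \parallel N_B$. Every component here is taken from the common $M2$, together with the common $H_1$, so the two $H_2$ values coincide as well.

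\textbf{Key material.} We use perfect correctness of each KEM, which we assume as part of the primitive definitions.
\begin{enumerate}
\item B obtains $K_{static}$ by running $\text{K.Decaps}(C_{static}, sk_{Kyber}^B)$ on the $C_{static}$ that A produced with $pk_{Kyber}^B$. ML-KEM correctness then gives B the same $K_{static}$ that A holds.
\item A obtains $K_{eph}$ by decapsulating $C_{eph}$ with $sk_{eph}^A$. B produced $C_{eph}$ under $pk_{eph}^A$, which it read from $M1$. Since $M1$ is identical at both ends, this is A's own ephemeral key, so correctness gives A the same $K_{eph}$.
\item A obtains $k_{RSA}^B$ from $C_{RSA}^B$ with $sk_{RSA}^A$. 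Correctness of RSA-OAEP decryption, combined with determinism of $H$ in $\mathsf{Decaps}_{RSA}$, gives the same value B holds.
\end{enumerate}
With all four inputs equal, $K_A^{\mathrm{master}} = K_B^{\mathrm{master}}$.

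\textbf{Acceptance.} Finally, we check that neither party aborts, so that both actually reach the key-derivation line.
\begin{enumerate}
\item $MAC_1$ verifies because $k_{verify}$ is recomputed from the equal $K_{static}$, $N_A$ and $ID_A$.
\item $C_{token}$ opens by AEAD correctness. B uses the same $k_{RSA}^A$ (again by RSA-KEM correctness), the same nonce $\nu_T$ and the same AD.
\item The parsed token fields match the outer $ID_A$, $N_A$ and $\text{SHA3-256}(pk_{eph}^A)$, because the honest A built the token from exactly these values.
\item $C_{challenge}$ opens by AEAD correctness.
\item The challenge equals A's expected value, since both sides compute the HMAC over the equal $K_{master}$, $H_2$ and $token_{raw}$. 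Here $token_{raw}$ is recovered by B from the token.
\item $\Response$ verifies by the same reasoning.
\end{enumerate}
Both parties therefore apply HKDF to $\text{"SESSION"} \parallel K_{master} \parallel H_2$ with equal arguments, giving $K_A^{\mathrm{session}} = K_B^{\mathrm{session}}$.

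\textbf{Main obstacle.} The only delicate step is the reliance on KEM correctness. ML-KEM-768 has a negligible but nonzero decryption-failure probability, and implicit rejection would then silently return a different key. We would state the proposition either under the idealization of perfect correctness or as holding except with probability at most the ML-KEM failure probability (about $2^{-164}$) for each of the two decapsulations. We would also note that RSA-OAEP is perfectly correct for honestly generated keys.
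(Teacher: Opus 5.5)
Your proposal is correct and follows essentially the same route as the paper. KEM correctness gives both parties equal $K_{static}$, $K_{eph}$ and $k_{RSA}^B$, up to the ML-KEM decapsulation-failure probability, which the paper also carves out; matching transcripts give equal $H_2$; and determinism of HKDF/HMAC then yields equal master and session keys. Your explicit check that neither party aborts is extra detail the paper omits, since it is already implied by the hypothesis that both complete the run, but it does no harm.
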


\begin{proof}
\rev{Correctness of ML-KEM and RSA-OAEP gives both parties the same static ML-KEM secret, ephemeral ML-KEM secret, and RSA secret encapsulated by B to A, except for the underlying decapsulation failure probability. Matching transcripts give the same master-KDF inputs in Section 4.3. Determinism of HKDF and HMAC then gives identical master keys, confirmation values, and session keys. This correctness argument does not require treating encryption to a public key as evidence of sender identity.}
\end{proof}

\subsection*{Partnering and Session Freshness}
A session is identified by its local execution transcript $(M1, M2, M3)$, and the session identifier is uniquely defined as:
\[
SID = H(M1 \parallel M2 \parallel M3).
\]
Two sessions $\Pi_A^i$ and $\Pi_B^j$ are \emph{partners} if they both accept and hold the identical $SID$ \cite{b1}. To explicitly prevent parsing ambiguities, structural alignment shifts, and cryptographic canonicalization attacks during transcript binding, all discrete constituent message framing fields, nonces, and cryptographic tokens within $(M1, M2, M3)$ are strictly length-prefixed (utilizing fixed 16-bit length-encoding metadata descriptors) prior to any string concatenation operation ($\parallel$).

A session $\Pi$ is defined as \emph{fresh} if:
\begin{enumerate}
    \item Neither $\Pi$ nor its designated partner has had its session key revealed;
    \item The adversary has not simultaneously obtained both (i) all long-term static secret keys of one party and (ii) the ephemeral secret key of the peer for this specific session;
    \item A unique matching partner exists (to exclude trivial active impersonation scenarios);
    \item The static secret keys of both communication parties remain uncompromised until the moment of session acceptance.
\end{enumerate}
\rev{These are the freshness restrictions for the Test-session secrecy experiment, not a full eCK freshness definition. In particular, they exclude pre-acceptance corruption and cannot themselves establish KCI resistance. Proposition~\mbox{\ref{prop:kci}} therefore states a separate impersonation experiment allowing victim-key disclosure and acceptance without a matching peer.}

\subsection*{Oracles Available to the Adversary}
The active adversary interacts with the execution environment through the following standard oracles \cite{b1, b2}:
\begin{itemize}
    \item \textbf{Send($\Pi$, msg)}: Delivers an arbitrary network message to a targeted session, returning its generated response.
    \item \textbf{RevealSessionKey($\Pi$)}: Returns the derived and accepted session key of an active session.
    \item {\RaggedRight\emergencystretch=3em \textbf{RevealEphemeral($\Pi$)}: Discloses the temporary ephemeral secrets of a targeted session.\par}
    \item \textbf{Corrupt($U$)}: Returns the long-term static secret keys of a designated party $U$.
    \item \textbf{Test($\Pi$)}: For a strictly fresh session $\Pi$, returns either the real session key or a uniform random string of equal length based on a hidden internal bit $b \in \{0,1\}$.
\end{itemize}

\subsection*{Session Key Indistinguishability Game}
The Test oracle formally evaluates the cryptographic strength of the protocol:
\begin{itemize}
    \item For a fresh session $\Pi_U^i$, Test returns $K_{\mathrm{session}}$ (if $b=1$) or a uniform random string of equal length (if $b=0$).
    \item The adversary evaluates the output and returns a guess bit $b'$.
    \item The final indistinguishability advantage is defined as:
    \[
    \mathsf{Adv}^{\mathsf{HPQ\text{-}AKE}}_{\mathcal{A}} = \left|\, \Pr[b' = b] - \tfrac{1}{2} \,\right|
    \]
\end{itemize}

\subsection{Secrecy Analysis under Hybrid Oracle Models}
\label{sec:hybrid_secrecy}

To accurately reflect the transitional threat model of HPQ-AKE, we divide our formal analysis. The integrity of the real-time active entity authentication is proven in the classical Random Oracle Model (ROM) based on the RSA assumption, while the confidentiality and forward secrecy of the derived session key against future quantum adversaries are bounded in the Quantum Random Oracle Model (QROM) based on the Module-LWE assumption.

\begin{theorem}[Hybrid Session Key Indistinguishability]
\label{thm:hybrid_main}
Let $\mathcal{A}$ be a Probabilistic Polynomial-Time (PPT) adversary attacking the HPQ-AKE protocol. In a multi-user, multi-session environment with $n_p$ parties and $n_s$ sessions, the advantage of $\mathcal{A}$ in the Session Key Indistinguishability game is bounded by
\begin{equation}
    \mathsf{Adv}_{\mathrm{HPQ\text{-}AKE}}^{\mathrm{IND\text{-}AKE}}(\mathcal{A}) \le \epsilon_{\mathrm{ROM}}^{\mathrm{Auth}} + \epsilon_{\mathrm{QROM}}^{\mathrm{Secrecy}}.
\end{equation}
where the classical authentication failure bound $\epsilon_{\mathrm{ROM}}^{\mathrm{Auth}}$ covering the traditional cryptographic primitives is defined as
\begin{equation}
\begin{aligned}
\varepsilon^{\mathrm{Auth}}_{\mathrm{ROM}}
&= n_p n_s \Bigl(\mathsf{Adv}_{\mathrm{ML\text{-}KEM, static}}^{\mathrm{IND\text{-}CCA2}}(\mathcal{A})
+ \mathsf{Adv}_{\mathrm{RSA}}^{\mathrm{IND\text{-}CCA}}(\mathcal{A}) \\
&\qquad + \frac{q_{\mathrm{AEAD}}}{2^{128}} + \frac{q_{\mathrm{PRF}}^2}{2^{256}}\Bigr).
\end{aligned}
\end{equation}

and the post-quantum secrecy bound $\epsilon_{\mathrm{QROM}}^{\mathrm{Secrecy}}$ against quantum oracle queries bounded by $q_H$ is defined as
\begin{equation}
    \epsilon_{\mathrm{QROM}}^{\mathrm{Secrecy}} = n_p n_s \left( \mathsf{Adv}_{\mathrm{ML\text{-}KEM, eph}}^{\mathrm{IND\text{-}CCA2}}(\mathcal{A}_Q) + \mathcal{O}\left(\frac{q_H^3}{2^{256}}\right) \right).
\end{equation}
\end{theorem}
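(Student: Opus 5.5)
The proof is a sequence of games in the style of \cite{ShoupGames}. Game~$G_0$ is the real IND-AKE experiment of Section~5.2, and the final game is one in which the Test key is uniform and independent of the adversary's view. The hops fall into two groups. The first group covers authentication: every accepting session must have a unique matching partner, as freshness condition~3 requires. This group is argued in the classical ROM and accounts for $\epsilon^{\mathrm{Auth}}_{\mathrm{ROM}}$. The second group covers secrecy: the Test key is replaced by randomness via the ephemeral ML-KEM layer. This group is argued in the QROM and accounts for $\epsilon^{\mathrm{Secrecy}}_{\mathrm{QROM}}$. Summing the hop differences gives the additive bound of Theorem~\ref{thm:hybrid_main}.

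The first step is a guessing hop. The simulator guesses the Test session and its intended partner among the $n_p n_s$ candidates. This costs the multiplicative factor $n_p n_s$ appearing in both $\epsilon$ terms.

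The authentication hops then run as follows.
\begin{enumerate}
\item Replace $K_{static}$ in the guessed session by a random key, reducing to IND-CCA2 of static ML-KEM. Here $\mathsf{K.Decaps}$ queries on other $C_{static}$ are answered by the CCA oracle. After this hop, $MAC_1$ and $k_{verify}$ are independent of the adversary's view.
\item Replace $k_{RSA}^A$ and $k_{RSA}^B$ by random keys, reducing to IND-CCA of RSA-OAEP (RSA-KEM in the ROM, \cite{V.Shoup,ShoupOAEP}). The other decapsulations are simulated by the decryption oracle together with programming of $H$.
\item Invoke AEAD INT-CTXT under the now-random one-time keys (nonces $\nu_T,\nu_C$, each key used once, per Section~3.3). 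Any forged $C_{token}$ or $C_{challenge}$ is then rejected except with probability $q_{\mathrm{AEAD}}/2^{128}$, matching the 128-bit tag.
\item Model HMAC/HKDF as a PRF/RO keyed by the random $K_{master}$. This includes the dependence of $K_{master}$ on $k_{RSA}^B$ introduced in Phase~2. A collision or forgery on $\Challenge$ or $\Response$ occurs only with probability at most $q_{\mathrm{PRF}}^2/2^{256}$, via a birthday bound on 256-bit outputs.
\end{enumerate}
After these hops, any session that accepts has a partner with an identical transcript, so $SID$ is unique. The $H(\cdot)$ collision term on $SID$ can be absorbed into the same birthday term.

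The secrecy hops come next. Since a matching partner now exists, $C_{eph}$ was produced honestly by B under A's honest $pk_{eph}^A$. This holds even if both static keys are corrupted after acceptance, by freshness conditions~2 and~4, and this is what gives forward secrecy. We replace $K_{eph}$ by a uniform key, reducing to IND-CCA2 of ML-KEM-768 against a quantum adversary $\mathcal{A}_Q$. The reduction embeds the challenge $(pk^*, C^*, K^*)$ as $(pk_{eph}^A, C_{eph}, K_{eph})$ of the Test session, and answers the partner's decapsulation through the oracle when the delivered $C_{eph}$ differs.

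Finally, $K_{master}=\MasterKDF(\cdots\parallel K_{eph}\parallel\cdots)$ and $K_{session}$ must be made uniform. With $K_{eph}$ uniform of 256 bits and $\mathsf{HKDF}$ a quantum-accessible random oracle, we reprogram $K_{session}$ using the one-way-to-hiding lemma (semi-classical O2H). That lemma bounds the adversary's ability to detect the reprogramming by its ability to query the input containing $K_{eph}$, which contributes $\mathcal{O}(q_H^3/2^{256})$ in line with the generic QROM search/reprogramming bounds of \cite{zhandry2015secure}. In the final game the Test output is independent of $b$, so the advantage there is $0$.

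The hardest step is the interface between the two oracle models. The authentication hops program $H$ classically, while the secrecy hop requires $\mathsf{HKDF}$ to be a quantum-accessible oracle. I would resolve this by fixing the authentication event, ``every accepted session has a matching partner'', in the classical world. Conditioning on that event, the quantum adversary only ever sees the honest-transcript distribution, which is simulated using a $2q_H$-wise independent function (\cite{zhandry2015secure}), so no adaptive classical programming is needed in the QROM hops. The union bound over the authentication-failure event then yields the stated sum.

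A secondary subtlety is the CCA simulation of B's static ML-KEM decapsulation and the RSA decapsulations for non-Test sessions. This must be done without breaking the embedding, which forces the guessing step to come first.
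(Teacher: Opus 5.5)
Your hops 1--4 and the ephemeral-KEM hop correspond essentially one-for-one to the paper's $G_1$--$G_5$. The gap is in your final step, which does not deliver the term you assign to it. You charge $\mathcal{O}(q_H^3/2^{256})$ to a semi-classical O2H reprogramming of HKDF, but that lemma incurs a square-root loss. With $K_{eph}$ uniform and hidden, $\Pr[\mathrm{Find}]\le 4q_H/2^{256}$, so the distinguishing gap is bounded only by $2\sqrt{(q_H+1)\cdot 4q_H/2^{256}}\approx 4q_H/2^{128}$. This exceeds $q_H^3/2^{256}$ for every $q_H<2^{64}$, so it cannot be absorbed into the stated cubic term. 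The cubic form in \cite{zhandry2015secure} is a collision-finding (small-range distribution) bound, not an O2H bound. In the paper, the cubic term comes from a hop you omit. $G_6$ bounds quantum (BHT-type) collisions on the 256-bit fingerprint $fp_A=\mathrm{SHA3\text{-}256}(pk_{eph}^A)$, which rules out substitution of the ephemeral public key. The final key-derivation step costs nothing there, because $G_4$ has already idealized the KDF: $K_{session}$ is uniform as soon as $K_{eph}$ is. To obtain the theorem as stated, you have two options.
\begin{enumerate}
\item Adopt the paper's idealization and drop the O2H hop. If hash queries may be quantum, you then pay the cubic term for collisions on the fingerprint, or on the transcript hashes your matching-partner argument relies on. That replaces the classical birthday term into which you absorbed the $SID$ collisions.
\item Keep your more explicit QROM treatment of HKDF and accept an extra $\mathcal{O}(q_H/2^{128})$ term. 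That proves a different, weaker statement.
\end{enumerate}

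Your handling of the ROM/QROM interface is also not a valid proof step. The bounds in your authentication hops hold only for adversaries with classical oracle access. This applies in particular to the ROM security of RSA-OAEP, whose reduction must observe and program $H$-queries classically, and to the classical birthday bound on $SID$. If the same adversary can query $H$ or HKDF in superposition during the handshake, those terms do not bound the probability that some accepted session lacks a matching partner. Conditioning on that event also does not make the adversary's view equal the honest-transcript distribution, so the $2q_H$-wise independent simulation does not rescue the argument. The paper does not derive this split either; it assumes it. In $G_2$ it stipulates that the real-time active adversary is classical (pre-CRQC), and it grants quantum capability only against the ML-KEM layer and the hash. You should state that assumption explicitly in the same way rather than present conditioning as its justification.
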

Note that the ephemeral ML-KEM advantage term is counted only in the QROM secrecy bound. The authentication bound captures the classical mechanisms that enforce entity binding and transcript validity (static ML-KEM filtering, RSA-OAEP implicit authentication, AEAD integrity, and HMAC/HKDF pseudorandomness), while the ephemeral ML-KEM transition in Game $G_5$ accounts for forward secrecy and session-key indistinguishability. This separation avoids double-counting the same hybrid-game transition in the total advantage bound.

\textbf{Concrete Security and Tightness:}
\rev{The multi-session factor affects the concrete interpretation of the bounds. RSA-3072 provides only a classical transitional assumption, not post-quantum security or a guarantee that reduction losses are negligible in practice. The KCI claim below excludes adversaries capable of breaking the uncompromised peer\textquotesingle s RSA key; ML-KEM-based secrecy must not be conflated with quantum-resistant RSA authentication.}

\rev{The following Test-session games do not constitute a KCI reduction. The role-specific authentication argument for the revised key schedule, including its additional ideal-KDF assumption, is stated separately in Proposition~\mbox{\ref{prop:kci}}.}

\subsubsection{Proof via a Highly Granular Sequence of Hybrid Games}
We establish the reduction using a strict sequence of hybrid games $G_0$ to $G_6$ to formally isolate every cryptographic transition.

\paragraph{\textbf{Game $G_0$}.} This represents the real execution of the HPQ-AKE protocol under the control of the active adversary $\mathcal{A}$. By definition
$$\mathsf{Adv}_{G_0} = \mathsf{Adv}_{\mathrm{HPQ\text{-}AKE}}^{\mathrm{IND\text{-}AKE}}(\mathcal{A}).$$

\paragraph{\textbf{Game $G_1$ (Static KEM Mitigation Filter Transition)}.} 
In this game, we transition the static ML-KEM encapsulation layer ($(K_{\text{static}}, C_{\text{static}})$ in lines 6--7 of Algorithm 1) used for early asymmetric DoS filtering into an ideal IND-CCA2 simulator. Consistent with the Fujisaki-Okamoto (FO) transform design within ML-KEM, the simulator enforces an \textit{implicit rejection} mechanism. If the adversary manipulates $C_{\text{static}}$ to probe the filter, the decapsulation oracle returns a deterministic pseudorandom string rather than a structural error token $\bot$, suppressing any decryption oracle behavior. This transition accounts for the static KEM security margin
$$| \mathsf{Adv}_{G_1} - \mathsf{Adv}_{G_0} | \le n_p n_s \cdot \mathsf{Adv}_{\mathrm{ML\text{-}KEM, static}}^{\mathrm{IND\text{-}CCA2}}(\mathcal{A}).$$

\paragraph{\textbf{Game $G_2$ (RSA-OAEP Core Decoupling)}.} 
In this game, we replace the legacy RSA key encapsulation core (implemented via RSA-OAEP in line 10 of Algorithm 1) with an ideal IND-CCA simulator. The simulator embeds a target RSA challenge $(pk_{\text{RSA}}^*, C^*)$ into message $M1$ and proxies any alternative active decryption queries to an external classical oracle. Because we assume the adversary executing real-time active man-in-the-middle impersonations possesses only classical capabilities in the present deployment phase (pre-CRQC), this transition incurs standard classical bounds
$$| \mathsf{Adv}_{G_2} - \mathsf{Adv}_{G_1} | \le n_p n_s \cdot \mathsf{Adv}_{\mathrm{RSA}}^{\mathrm{IND\text{-}CCA}}(\mathcal{A}).$$

\paragraph{\textbf{Game $G_3$ (Symmetric AEAD Layer Integrity)}.}
\rev{The AEAD integrity assumption here requires the nonce-respecting key-use discipline in Section 3.3; it does not cover repeated key--nonce pairs.} 
We now replace the symmetric keys derived from the RSA core to protect the handshake payloads $C_{\text{token}}$ with ideal AEAD simulators. Under the IND-CCA and ciphertext integrity (INT-CTXT) properties of AES-GCM, any adversarial attempt to manipulate, alter, or forge the handshake packets without the legitimate key will trigger an immediate session abort (lines 21--25 of Algorithm 2). The distinguishing advantage introduced is bounded by
$$| \mathsf{Adv}_{G_3} - \mathsf{Adv}_{G_2} | \le n_p n_s \cdot \frac{q_{\mathrm{AEAD}}}{2^{128}}.$$

\paragraph{\textbf{Game $G_4$ (HKDF / HMAC Pseudorandomness)}.} 
\rev{This step uses an ideal-KDF model for the master-key derivation, together with idealized HMAC authentication tags. The displayed birthday term accounts for collisions in the ideal-function model; it is not a standalone reduction establishing the extractor security of an arbitrary HKDF instantiation from HMAC PRF security alone. The additional KCI argument explicitly conditions on this KDF idealization in Proposition~\mbox{\ref{prop:kci}}.}
$$| \mathsf{Adv}_{G_4} - \mathsf{Adv}_{G_3} | \le n_p n_s \cdot \frac{q_{\mathrm{PRF}}^2}{2^{256}}.$$

\paragraph{\textbf{Game $G_5$ (Ephemeral KEM Layer in QROM)}.} 
To guarantee perfect forward secrecy (PFS) against future quantum adversaries capable of storing historical classical traffic for retroactive decryption, we transition the ephemeral ML-KEM encapsulation layer ($(C_{eph}, K_{eph})$ in line 35 of Algorithm 2). Since the IND-CCA2 security of ML-KEM already holds robustly in the Quantum Random Oracle Model (QROM) as formalized in FIPS 203 \cite{nist-fips203-Alagic}, the simulator can directly invoke the ML-KEM decapsulation oracle to resolve quantum queries without explicit random oracle reprogramming. By applying a standard hybrid argument over $n_p$ registered identities and $n_s$ active sessions to guess the target session, the advantage gap between $G_4$ and $G_5$ is tightly bounded by the post-quantum IND-CCA2 security of the ephemeral KEM component
$$| \mathsf{Adv}_{G_5} - \mathsf{Adv}_{G_4} | \le n_p n_s \cdot \mathsf{Adv}_{\mathrm{ML\text{-}KEM, eph}}^{\mathrm{IND\text{-}CCA2}}(\mathcal{A}_Q).$$

\paragraph{\textbf{Game $G_6$ (Quantum Public Key Fingerprint Collisions)}.} 
The protocol tightly binds the ephemeral public key $pk_{\text{eph}}^A$ to the cryptographic token via the truncated fingerprint $fp_A = \mathsf{SHA3\text{-}256}(pk_{\text{eph}}^A)$ (line 3 of Algorithm 1). In this final game, we rule out quantum-level hash collisions. A quantum adversary executing optimal quantum collision-finding algorithms (such as the BHT algorithm) could attempt to locate a collision for $pk_{\text{eph}}^A$ that yields the identical $fp_A$, executing a public-key substitution attack. Since HPQ-AKE enforces a strict 256-bit truncation for $fp_A$, the quantum collision advantage scales cubically according to generic bounds in the Quantum Random Oracle Model (QROM)
$$| \mathsf{Adv}_{G_6} - \mathsf{Adv}_{G_5} | \le n_p n_s \cdot \mathcal{O}\left(\frac{q_H^3}{2^{256}}\right),$$

where this cubic threshold against quantum query transitions is formally justified based on the foundational QROM reduction and collision bounding techniques established by Zhandry \cite{zhandry2015secure}.
\paragraph{Final Key Derivation Step} 
Conditioned on the exclusion of quantum fingerprint collisions in Game $G_6$, any adversarial manipulation of the public keys triggers an immediate authentication failure. Furthermore, as the HKDF output relies directly on the high-entropy pool of $K_{\text{eph}}$—which is now completely indistinguishable from random in the QROM—the final session key $K_{\mathrm{session}}$ is computationally independent of the protocol network transcript. Consequently, the adversary has no remaining data correlation, leading to a winning probability of exactly $1/2$. Collecting the telescoping sum across all game phases completes the formal proof of Theorem 1.

\subsection{Authentication and Robustness}
\subsubsection{Authentication Security}
The protocol enforces both implicit identity binding and explicit key confirmation.

\begin{definition}[Implicit Authentication]
A protocol provides {\rm{implicit authentication}} if, whenever a session $\Pi_A$ accepts with peer identity B, then any partnering session $\Pi_B$ must be the unique session capable of generating the cryptographic payloads accepted by $\Pi_A$, except with negligible probability \cite{b1}.
\end{definition}

\begin{theorem}
The proposed protocol achieves mutual implicit authentication within the hybrid oracle framework, bounded by
\begin{equation}
\begin{aligned}
\mathsf{Adv}^{\mathrm{auth}}(\mathcal{A})
&\le n_p n_s \Bigl(\mathsf{Adv}_{\mathrm{ML\text{-}KEM, static}}^{\mathrm{IND\text{-}CCA2}}(\mathcal{A}) \\
&\quad + \mathsf{Adv}_{\mathrm{RSA}}^{\mathrm{IND\text{-}CCA}}(\mathcal{A})
+ \frac{q_{\mathrm{AEAD}}}{2^{128}} + \frac{q_{\mathrm{PRF}}^2}{2^{256}} \\
&\quad + \mathcal{O}\left(\frac{q_H^3}{2^{256}}\right)\Bigr).
\end{aligned}
\end{equation}
\end{theorem}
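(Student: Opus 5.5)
The proof splits mutual implicit authentication into two role-specific forgery events and bounds each by reusing the game sequence $G_0$--$G_6$ from the proof of Theorem~\ref{thm:hybrid_main}. Event $\mathsf{Forge}_B$ is that an initiator session $\Pi_A^i$ accepts with intended peer $B$ although no session of $B$ produced the $M2$ it received. Event $\mathsf{Forge}_A$ is that a responder session $\Pi_B^j$ accepts with peer $A$ although no session of $A$ produced the accepted $M1$ and $M3$. We have $\mathsf{Adv}^{\mathrm{auth}}(\mathcal{A}) \le \Pr[\mathsf{Forge}_A] + \Pr[\mathsf{Forge}_B]$. The bound is obtained by guessing the target session and the claimed peer, which loses a factor $n_p n_s$. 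Each term in the statement then corresponds to one game hop. The only new term relative to $\epsilon^{\mathrm{Auth}}_{\mathrm{ROM}}$ is the fingerprint-collision term from $G_6$, which prevents substitution of $pk_{eph}^A$.

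\textbf{Step 1 (hops carried over).} Apply $G_1$, which idealizes the static ML-KEM, so that $MAC_1$ and $k_{verify}$ are keyed by a secret that is independent of $\mathcal{A}$'s view unless $\mathcal{A}$ forwards an honest $C_{static}$. Next apply $G_2$, which replaces $k_{RSA}^A$ and $k_{RSA}^B$ in the target session by uniform keys. This uses RSA-OAEP IND-CCA security, with the decapsulation oracle answering every ciphertext except the challenge. Then apply $G_3$, AEAD INT-CTXT under the nonce discipline of Section~3.3, so that $C_{token}$ and $C_{challenge}$ cannot be forged except with probability $q_{\mathrm{AEAD}}/2^{128}$. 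Finally apply $G_4$, which models $\MasterKDF$ and HMAC as ideal functions with birthday loss $q_{\mathrm{PRF}}^2/2^{256}$. These hops contribute exactly the four terms of $\epsilon^{\mathrm{Auth}}_{\mathrm{ROM}}$.

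\textbf{Step 2 (case $\mathsf{Forge}_B$).} In the last game, $K_{master}$ of $\Pi_A^i$ is derived from $k_{RSA}^B$. This value is obtained by decapsulating under $sk_{RSA}^A$, and $A$ is uncorrupted. Suppose $C_{RSA}^B$ was not output by an honest $B$ session. Then either $\mathcal{A}$ created a fresh encapsulation itself, which gives it no knowledge of $K_{static}$; by $G_1$, $K_{static}$ is hidden because $C_{static}$ is encapsulated to $B$. Or $\mathcal{A}$ replayed an honest $C_{RSA}^B$ under a different $H_2$. In that case the ideal KDF input changes, so $K_{master}$ is fresh and uniform. Either way, the expected $\Challenge$ value is an unpredictable HMAC output, and acceptance requires guessing it or forging the AEAD.

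\textbf{Step 3 (case $\mathsf{Forge}_A$, and the main obstacle).} $B$ accepts only if $\Response = \mathrm{HMAC}(K_{master}, \ResponseLabel \parallel H_2 \parallel \Challenge)$. The difficulty is that anyone can encapsulate to $B$, so $C_{static}$ and $C_{RSA}^A$ give no authentication of $A$. Authentication of $A$ must instead come from $k_{RSA}^B$ being encapsulated to $pk_{RSA}^A$: computing $K_{master}$ requires $sk_{RSA}^A$. I would embed the RSA IND-CCA challenge in $C_{RSA}^B$ of the guessed session. Decapsulation queries for other ciphertexts, arising when $\mathcal{A}$ sends crafted $M2$ messages to honest $A$ sessions, are answered through the IND-CCA oracle. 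I would then argue that producing a valid $\Response$ requires querying the ideal KDF on $k_{RSA}^B$, which yields a distinguisher against RSA. A second subcase is that $\mathcal{A}$ relays an honest $A$'s $M3$ while substituting $pk_{eph}^A$, for example to steer $K_{eph}$. This is excluded because $fp_A$ sits inside the AEAD-protected token and $H_1$ binds $M1$. A substitute key with the same $fp_A$ would give a SHA3-256 collision, which costs the $\mathcal{O}(q_H^3/2^{256})$ term of $G_6$.

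The subtle point is simulating honest $A$'s decapsulations consistently without $sk_{RSA}^A$, and showing that replayed $M2$ components cannot be mixed across sessions. $H_2$ is the binding device for this. Summing the two cases and collecting the hop losses gives the stated bound.
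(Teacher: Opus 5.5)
Your proposal is correct at the paper's level of rigor and follows essentially the same route. The paper's proof is a union bound over the same primitive failures (static ML-KEM, RSA-OAEP IND-CCA, AEAD INT-CTXT, HMAC/KDF forgery, fingerprint collision), plus a remark that responder-side confirmation rests on $k_{RSA}^B$ entering $K_{master}$, which is precisely your $\mathsf{Forge}_A$ argument. Your $\mathsf{Forge}_A$/$\mathsf{Forge}_B$ split and explicit reuse of the hops $G_1$--$G_4$ and $G_6$ simply make that enumeration more explicit; note only that in $\mathsf{Forge}_B$ the operative secret is $K_{static}$ under $B$'s uncorrupted static ML-KEM key (as your subcases in fact use), not anything tied to $A$'s RSA key.
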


\begin{proof}
Assume an active adversary attempts to forge an accepting handshake session at Initiator A without an active matching partner at Responder B. To successfully compromise the protocol layer, the adversary must execute one of the following: 
(i) break the IND-CCA boundary of the RSA layer, 
(ii) violate the INT-CTXT clause of the AEAD layer to manipulate token payloads ($q_{\mathrm{AEAD}}$ bounds), 
(iii) find a quantum hash collision in the 256-bit public-key fingerprint ($\mathcal{O}(q_H^3/2^{256})$ advantage), 
(iv) forge an HMAC tag without knowledge of $K_{\mathrm{master}}$ ($q_{\mathrm{PRF}}$ bounds),
(v) compromise the static identity ML-KEM layer ($\mathsf{Adv}_{\mathrm{ML\text{-}KEM, static}}^{\mathrm{IND\text{-}CCA2}}$). 
Since each primitive is computationally intractable under the established hybrid assumptions, the cumulative forgery advantage is bounded by the stated reduction scaled by the multi-session factor \cite{Avanzi, Bader2015, Gjosteen2018}.

\rev{For responder-side confirmation in the revised schedule, the relevant RSA secret is instead the one encapsulated by B to A in M2. Its inclusion in the master KDF prevents a chosen-M1 sender from completing authentication using only its chosen token and ML-KEM secrets. Victim-key disclosure is treated separately by the conditional argument in Proposition~\mbox{\ref{prop:kci}}; the preceding Test-session games are not asserted to prove that exposure case.}
\end{proof}

\begin{definition}[Explicit Authentication]
\rev{Explicit key confirmation provides evidence that the designated peer possesses the same key material for the matching transcript. In this three-message protocol, it does not imply simultaneous acceptance: A cannot know whether B has received M3.}
\end{definition}

\begin{corollary}[Explicit Authentication]
\rev{Under the authentication and key-derivation assumptions stated in Proposition~\mbox{\ref{prop:kci}}, the challenge and response provide role-specific explicit key confirmation.}
\end{corollary}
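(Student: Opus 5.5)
The corollary splits into two role-specific confirmation claims: the challenge confirms B to A, and the response confirms A to B. Both are handled by the same template. Using the KDF idealization invoked in Game $G_4$ and in Proposition~\ref{prop:kci}, model $\MasterKDF$ and the HMAC tags as ideal functions. We then show that producing a valid tag on a given transcript hash $H_2$ requires knowledge of $K_{master}$ for that transcript. Knowing $K_{master}$ in turn requires either querying $\MasterKDF$ on the exact input $K_{static}\parallel K_{eph}\parallel k_{RSA}^B\parallel H_2$, or guessing the 256-bit output. Because the transcript hash $H_2$ enters both the master-key input and the HMAC message, any accepted tag refers to the matching transcript. The residual forgery terms are covered by the $q_{\mathrm{PRF}}^2/2^{256}$ term and the $\mathcal{O}(q_H^3/2^{256})$ collision term of Theorem~\ref{thm:hybrid_main}.

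\emph{Challenge (B to A).} Suppose A accepts $C_{challenge}$ with $\Challenge=\mathrm{HMAC}(K_{master},\ChallengeLabel\parallel H_2\parallel token_{raw})$. By the AEAD step (Game $G_3$), $C_{challenge}$ was produced under $k_{RSA}^B$, or its forgery is bounded by $q_{\mathrm{AEAD}}/2^{128}$. More importantly, the tag must be consistent with $K_{master}$. That key depends on $K_{static}$, which A generated toward B's static ML-KEM key. It is secret except to the holder of $sk_{Kyber}^B$, via the static KEM step $G_1$. The tag also depends on $token_{raw}$, which is secret except to the holder of $sk_{RSA}^B$, via the RSA step $G_2$ together with AEAD privacy. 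Hence an adversary who does not hold B's long-term keys cannot evaluate the ideal KDF on the right input, so the valid challenge must come from B. Since $H_2$ embeds $H_1=\mathrm{SHA3\text{-}256}(M1)$, the confirming B session is one that received A's $M1$ exactly and sent the $M2$ that A received.

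\emph{Response (A to B).} Suppose B accepts $\Response=\mathrm{HMAC}(K_{master},\ResponseLabel\parallel H_2\parallel\Challenge)$. Here $K_{static}$ and $K_{eph}$ do not help, because the sender of $M1$ may be adversarial and may have chosen $sk_{eph}^A$ itself. The binding instead comes from $k_{RSA}^B$, which B encapsulated under $pk_{RSA}^A$ and which is an input to $\MasterKDF$. Apply RSA-OAEP IND-CCA security to B's encapsulation, embedding the challenge ciphertext as $C_{RSA}^B$. A decryption oracle answers A's other decapsulations, and by the guessing argument this costs an extra $n_pn_s$ factor. After this replacement, $k_{RSA}^B$ is uniform and unknown, so $K_{master}$ is an unqueried ideal output, and forging the response reduces to guessing. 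The distinct labels $\ChallengeLabel$ and $\ResponseLabel$ give domain separation, so a challenge cannot be replayed as a response. The response is produced by the party holding $sk_{RSA}^A$ on transcript $H_2$, which is A's matching session.

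The main obstacle is the \emph{response} direction. The hybrid games of Theorem~\ref{thm:hybrid_main} were written with the RSA challenge placed in $M1$, not $M2$. The proof therefore needs a separate embedding in which the simulator must still answer A's decapsulations of adversarially chosen $C_{RSA}^B$ values consistently with the ideal KDF. This requires lazily programming $\MasterKDF$ entries that involve unknown decapsulations and patching them once the oracle answers, which is exactly the KDF-idealization assumption flagged in Proposition~\ref{prop:kci}. Finally, we note that the claim gives role-specific confirmation only, not simultaneous acceptance, matching the definition of explicit authentication.
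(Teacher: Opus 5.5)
Your proposal is correct and follows essentially the paper's route. Both tags are keyed by the revised master key, with distinct role labels and $H_2$. The response direction rests on the RSA secret $k_{RSA}^B$ that B encapsulates to A, which is exactly the paper's point that the ML-KEM secrets plus one's own token do not suffice. The challenge direction rests on $K_{static}$, as in the victim-A case of Proposition~\ref{prop:kci}. Your M2-side IND-CCA embedding simply spells out what the paper leaves implicit.

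Two minor remarks. First, the AEAD step contributes nothing in the challenge direction, because an impersonator of B can itself encapsulate $k_{RSA}^B$ to the public $pk_{RSA}^A$; your argument rightly relies on $K_{static}$ instead. Second, with the IND-CCA decapsulation oracle, plus the challenge key whenever $C^\ast$ is rerouted to A, the simulator always knows every $k_{RSA}^B$, so no lazy patching of $\MasterKDF$ is required.
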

\begin{proof}
\rev{A verifies the challenge from B; B verifies the response from A. Both tags use the revised master key, distinct role labels, and the transcript hash. In particular, a sender choosing its own ephemeral key cannot compute B\textquotesingle s expected response merely from the two ML-KEM secrets and its own token: the master key also requires the RSA secret encapsulated to the claimed A. Successful confirmation is interpreted relative to a matching peer transcript prefix, not guaranteed delivery of the final message.}
\end{proof}

\subsubsection{Additional Security Properties}

\begin{proposition}[Forward Secrecy]
HPQ-AKE provides perfect forward secrecy (PFS) under the IND-CCA2 security of Kyber.
\end{proposition}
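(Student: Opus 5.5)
We argue by a reduction in the forward-secrecy variant of the Session Key Indistinguishability game of Section~\ref{sec:hybrid_secrecy}. The adversary may issue \textsf{Corrupt} on both $A$ and $B$ after the Test session $\Pi^*$ has accepted, obtaining $sk_{RSA}^A$, $sk_{RSA}^B$ and $sk_{Kyber}^B$. The restriction is freshness condition (2): no \textsf{RevealEphemeral} on $\Pi^*$ or its partner. Since the adversary then knows every long-term secret, the hybrid transitions $G_1$--$G_3$ of Theorem~\ref{thm:hybrid_main} cannot be used. Those games replace $K_{static}$ and $k_{RSA}^B$ with ideal values, and that is exactly what a post-acceptance corruption invalidates. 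The plan is to rebuild the secrecy argument using only the ephemeral layer. We isolate the single input to $K_{master}$ that no long-term key determines, namely $K_{eph}$ encapsulated under $pk_{eph}^A$, which is erased by \texttt{SecureErase} on acceptance.

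The steps are as follows.
\begin{enumerate}
\item Guess the Test session and its partner among the $n_p n_s$ candidates, losing a factor $n_p n_s$.
\item Use freshness condition (3) to conclude that $\Pi^*$ has a unique matching partner. Hence $pk_{eph}^A$ and $C_{eph}$ in $\Pi^*$ were generated honestly, and the adversary did not inject them. This is the point where the explicit confirmation of the Corollary, together with the fingerprint binding in $G_6$, is needed. They guarantee that the $C_{eph}$ received by $A$ is the one $B$ produced under the genuine $pk_{eph}^A$.
\item Build a reduction $\mathcal{B}$ against the IND-CCA2 security of ML-KEM. $\mathcal{B}$ embeds the challenge public key as $pk_{eph}^A$ and the challenge ciphertext as $C_{eph}$ in the Test session. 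It generates every long-term key itself, so it can answer every \textsf{Corrupt} query, and all other sessions are simulated honestly. The ephemeral key is used in only one session, so $\mathcal{B}$ never needs a decapsulation query on the target key. The challenge key $K^*$ is used in place of $K_{eph}$ in $K_{master}=\MasterKDF(K_{static}\parallel K_{eph}\parallel k_{RSA}^B\parallel H_2)$.
\item When $K^*$ is uniform, $K_{master}$ is an ideal-KDF evaluation on an input containing a fresh uniform 256-bit string. By the $G_4$ idealization, $K_{master}$ is therefore independent of the transcript and of the corrupted long-term keys, up to the collision term $q_{\mathrm{PRF}}^2/2^{256}$. The same holds for $K_{session}=\mathrm{HKDF}(\text{"SESSION"}\parallel K_{master}\parallel H_2)$, so the adversary's guess is correct with probability exactly $1/2$.
\end{enumerate}
The resulting bound is
\[
\mathsf{Adv}^{\mathrm{PFS}}(\mathcal{A})\le n_p n_s\Bigl(\mathsf{Adv}_{\mathrm{ML\text{-}KEM, eph}}^{\mathrm{IND\text{-}CCA2}}(\mathcal{A}_Q)+\frac{q_{\mathrm{PRF}}^2}{2^{256}}+\mathcal{O}\Bigl(\frac{q_H^3}{2^{256}}\Bigr)\Bigr).
\]
All terms are post-quantum, which is consistent with the harvest-now-decrypt-later setting.

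The main obstacle is that the KDF is a random oracle queried quantumly, while the adversary knows every other KDF input after corruption. Step~4 therefore needs a QROM argument that an unpredictable 256-bit component of the input makes the output pseudorandom. We would first try the one-way-to-hiding lemma. Under it, distinguishing is bounded by the probability that measuring a random query reveals $K^*$, and that probability is in turn bounded by the IND-CCA2 (hence one-wayness) advantage plus $q_H^2/2^{256}$. A second subtlety is that RSA corruption before acceptance would let an active adversary substitute $pk_{eph}^A$. We exclude this explicitly through freshness condition (4), so PFS is claimed only against post-acceptance compromise.
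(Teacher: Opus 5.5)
Your proposal is correct and rests on the same idea as the paper's proof. $K_{eph}$ is the only input to $\MasterKDF$ not fixed by long-term keys, $sk_{eph}^A$ is erased at completion, and ephemeral ML-KEM security together with the KDF idealization then hides $K_{session}$ even after both parties are corrupted; you simply turn the paper's informal appeal to M-LWE hardness on $C_{eph}$ into an explicit IND-CCA2 embedding reduction with a QROM step, which matches the statement's hypothesis more faithfully.

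Two small corrections. First, freshness condition (3) already forces $pk_{eph}^A$ and $C_{eph}$ to be the honest values by the definition of matching conversations, so you need not invoke explicit confirmation or the $G_6$ fingerprint term; doing so would import RSA and KCI assumptions that this proposition does not use. Second, one-way-to-hiding bounds the distinguishing gap by roughly $q_H$ times the square root of the find probability, giving a term of order $q_H/2^{128}$ rather than $q_H^2/2^{256}$, which is still negligible.
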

\begin{proof}
\rev{The revised master-key schedule in Section 4.3 retains the ephemeral ML-KEM secret as an input alongside the static ML-KEM and RSA secrets, under the stated KDF idealization.} Upon handshake completion, the ephemeral key $sk_{\text{eph}}^A$ is instantly purged via \texttt{SecureErase}. Consequently, a retrospective compromise of the long-term static assets ($sk_{\mathrm{RSA}}, sk_{\mathrm{Kyber}}$) grants a quantum adversary no advantage in recovering past keys, as recovering $K_{\text{eph}}$ requires solving the underlying Module-LWE problem on $C_{\text{eph}}$ \cite{nist-report2022}.
\end{proof}

\begin{proposition}[Key Compromise Impersonation Resistance]
\label{prop:kci}
\rev{For the revised key schedule, KCI resistance holds in the classical, long-term-key-only impersonation experiment below, conditional on the stated primitive and key-derivation assumptions. This is not a claim of unrestricted eCK security or post-quantum authentication.}
\end{proposition}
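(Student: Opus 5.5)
The plan is to fix the experiment first and then handle the two victim roles separately. In the classical long-term-key-only KCI experiment, the adversary $\mathcal{A}$ may issue $\mathsf{Corrupt}(V)$ on the victim $V$ at any time, including before acceptance. It may not corrupt the claimed peer $P$, may not issue $\mathsf{RevealEphemeral}$ on the victim session, and does not break RSA. $\mathcal{A}$ wins if some session of $V$ accepts with peer $P$ even though no session of $P$ has a matching transcript prefix. Throughout, we model $\MasterKDF$ and the HMAC confirmation tags as ideal functions (random oracles), as in Game $G_4$. We then guess the victim session and the peer, losing a factor $n_p n_s$, and split into two cases.

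\textbf{Case 1: victim is the responder $B$, impersonated peer is $A$.} Here $\mathcal{A}$ knows $sk_{RSA}^B$ and $sk_{Kyber}^B$. It can therefore choose $K_{static}$, $k_{RSA}^A$, the token and $sk_{eph}^A$, and so it knows $K_{eph}$ after $B$ replies. The only master-KDF input it lacks is $k_{RSA}^B$, which $B$ freshly encapsulates under the uncorrupted $pk_{RSA}^A$. We proceed in three steps.
\begin{enumerate}
\item Using the IND-CCA security of RSA-OAEP for $pk_{RSA}^A$, replace $C_{RSA}^B$ in the guessed session by an encapsulation independent of $k_{RSA}^B$. The reduction answers decapsulation queries from other sessions of $A$ with the RSA decapsulation oracle. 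This costs $\mathsf{Adv}^{\mathrm{IND\text{-}CCA}}_{\mathrm{RSA}}$.
\item Since $k_{RSA}^B$ is now uniform and hidden, and $\MasterKDF$ is ideal, $\mathcal{A}$ queries the oracle on $K_{static}\parallel K_{eph}\parallel k_{RSA}^B\parallel H_2$ with probability at most $q_H/2^{|k_{RSA}|}$. Hence $K_{master}$ is uniform from $\mathcal{A}$'s view.
\item $\Response=\mathrm{HMAC}(K_{master},\ResponseLabel\parallel H_2\parallel\Challenge)$ is then unpredictable. The ciphertext $C_{challenge}$ is sealed under $k_{RSA}^B$, so it leaks nothing by AEAD privacy. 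A forged $M3$ succeeds with probability at most $2^{-256}$ plus the $q_{\mathrm{PRF}}^2/2^{256}$ collision term.
\end{enumerate}
The key observation here is that $B$ is never the one who decapsulates $C_{RSA}^B$, so corrupting $B$ gives $\mathcal{A}$ nothing about $k_{RSA}^B$.

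\textbf{Case 2: victim is the initiator $A$, impersonated peer is $B$.} Now $\mathcal{A}$ holds $sk_{RSA}^A$ and can decapsulate any $C_{RSA}^B$ it builds itself. Accordingly, the secret it lacks is not on the RSA side. The argument is again three steps.
\begin{enumerate}
\item $A$ encapsulates $K_{static}$ under the uncorrupted $pk_{Kyber}^B$. By IND-CCA2 of the static ML-KEM key, replace it by an independent uniform value. The reduction simulates other $B$ sessions using the decapsulation oracle; implicit rejection, as in $G_1$, keeps this simulation consistent.
\item With $K_{static}$ hidden, $K_{master}$ is ideal-random from $\mathcal{A}$'s view, for every $M2$ the adversary might choose, because $H_2$ is bound into the KDF input.
\item The expected challenge $\mathrm{HMAC}(K_{master},\ChallengeLabel\parallel H_2\parallel token_{raw})$ is therefore unpredictable. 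Sealing it under the adversary-known $k_{RSA}^B$ does not help $\mathcal{A}$, and $A$ aborts except with probability about $2^{-256}$.
\end{enumerate}
This case costs $\mathsf{Adv}^{\mathrm{IND\text{-}CCA2}}_{\mathrm{ML\text{-}KEM,static}}$. Its security is classical-and-PQ, but conditional on $B$'s static ML-KEM key remaining uncorrupted, which is consistent with long-term-key-only exposure of $A$ alone.

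\textbf{Summing the cases and the main obstacle.} Adding the two cases gives a bound of the form $n_p n_s\bigl(\mathsf{Adv}^{\mathrm{IND\text{-}CCA}}_{\mathrm{RSA}}+\mathsf{Adv}^{\mathrm{IND\text{-}CCA2}}_{\mathrm{ML\text{-}KEM,static}}+q_H/2^{256}+q_{\mathrm{PRF}}^2/2^{256}\bigr)$. The step I expect to be hardest is the simulation in Case 1. The reduction must answer $A$'s RSA decapsulations in all other sessions while $\mathcal{A}$ adaptively sends $M2$ values that reuse or maul $C_{RSA}^B$. I would rely on the CCA decryption oracle excluding only the challenge ciphertext. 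I would then argue that any session receiving the exact challenge ciphertext but a different transcript has a different $H_2$, so it yields an independent KDF point; this is why the ideal-KDF assumption is needed rather than HKDF extractor security. The restriction to no ephemeral reveal, and to no RSA-breaking adversary, should be stated explicitly, since without them Case 1 fails trivially.
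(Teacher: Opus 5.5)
Your proposal is correct and follows the paper's argument. You split by victim role, and the missing master-KDF input is $K_{static}$ (protected by $B$'s uncorrupted static ML-KEM key) when $A$ is the victim and $k_{RSA}^B$ (protected by $A$'s uncorrupted RSA key) when $B$ is the victim, followed by the ideal-KDF and transcript-binding step and unforgeability of the role-separated confirmation tag.

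Your game hops are more explicit than the paper's sketch. Note, though, that the paper deliberately claims no concrete multi-session KCI bound, and your summed bound has bookkeeping slips: guessing both the victim session and the peer costs more than $n_p n_s$, and the AEAD-privacy step you invoke in Case 1 has no term in the sum.
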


\noindent\textbf{\rev{Exposure and winning condition.}}
\rev{The adversary controls message delivery and may obtain A\textquotesingle s long-term RSA private key when A is the victim, or both of B\textquotesingle s long-term private keys when B is the victim, before the target run. The victim still executes honestly; all long-term keys of the claimed peer and the provisioned identity-to-key bindings remain uncompromised. No honest-session state, ephemeral secret, master key, or session key is revealed in this restricted experiment. Attacker-generated ephemeral keys and their secrets are allowed. A win is victim acceptance as the claimed peer without a matching honest peer transcript prefix: production of the corresponding M2 by B for an accepting A, or production of M3 after processing that M2 by A for an accepting B. Forwarding such an honest exchange is not impersonation, even if the last message is subsequently dropped.}

\noindent\textbf{\rev{Cryptographic presuppositions.}}
\rev{We assume classical IND-CCA security of the static ML-KEM and RSA-OAEP encapsulations, secure independent randomness, collision-resistant transcript/fingerprint hashing, authenticated identity-specific public keys, unambiguous field encoding, AEAD confidentiality and integrity, and HMAC PRF security. Additionally, the domain-separated master KDF is idealized as a random oracle: its output is hidden until the complete input is queried, including an unexposed KEM secret. This is an explicit modeling assumption, not a consequence of HMAC PRF security alone or a standalone proof for every HKDF instantiation.}

\begin{proof}
\rev{For victim A, disclosure of A\textquotesingle s RSA key does not disclose the static ML-KEM secret that honest A encapsulates to uncompromised B. An impersonator can create an ephemeral ciphertext and know its secret; that ability is not an IND-CCA break. The missing master-KDF input is instead the static ML-KEM secret, absent an allowed matching B execution or a violation of the assumptions above.}

\rev{For victim B, even disclosure of both B private keys leaves uncompromised A\textquotesingle s RSA private key unavailable. Honest B generates a fresh RSA secret encapsulated to A, and the revised master KDF includes it. Thus, an impersonator who chooses M1 and knows both ML-KEM secrets still lacks this RSA input. Encryption of a recomputable challenge alone would not suffice; its inclusion in the master key is essential.}

\rev{In each direction, condition on no nonce or transcript-binding collision. The relevant uncompromised-peer KEM protects the missing input, and the ideal-KDF assumption hides the master key unless that input is recovered or guessed. A non-matching accepting transcript then requires defeating this protection or forging the role-separated confirmation. Honest confirmation relays are covered by the matching-prefix condition. This gives the conditional KCI argument under the stated restricted exposure model; no tight multi-session KCI bound or machine-checked reduction is claimed.}
\end{proof}

\begin{proposition}[Unknown Key Share Resistance]
HPQ-AKE is secure against Unknown Key Share (UKS) attacks.
\end{proposition}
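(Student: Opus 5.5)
The plan is to argue UKS resistance by showing that any accepting session's derived keys are cryptographically bound to the identities of both intended endpoints, so two honest sessions holding the same $K_{session}$ must agree on who their peers are. Concretely, a UKS attack means honest A accepts believing it shares a key with B, while some honest session of a third party $C \neq A$ (or of B believing its peer is $E \neq A$) derives the same $K_{session}$. Since $K_{session} = \mathsf{HKDF}(\text{"SESSION"} \parallel K_{master} \parallel H_2)$ and $K_{master}$ is derived via $\MasterKDF$ from $(K_{static}, K_{eph}, k_{RSA}^B, H_2)$, I will model both KDFs as random oracles, as in Game $G_4$ and Proposition~\ref{prop:kci}. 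Under that model, two honest sessions obtain equal session keys only if they queried identical inputs, except for a $q_{\mathrm{PRF}}^2/2^{256}$ collision term. In particular, they must hold equal $H_2$ and hence, by collision resistance of SHA3-256, equal $H_1$ and equal $M1$.

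The second step is to extract the identities from the transcript. Because $M1$ contains $ID_A$ in the clear, and length-prefixed unambiguous encoding (Section 5.2) prevents re-parsing, equal $M1$ forces both sessions to agree on the initiator identity. The responder identity is not written in the transcript, so I bind it implicitly instead. An honest initiator encrypts $C_{static}$ and $C_{RSA}^A$ under the pre-cached, authentic keys $pk_{Kyber}^B$ and $pk_{RSA}^B$. An honest responder $C \neq B$ decapsulates with $sk_{Kyber}^C$ and $sk_{RSA}^C$. Two cases follow:
\begin{itemize}
  \item \emph{Valid under both keys.} If the same $C_{static}$ were to pass $MAC_1$ verification and token opening under $C$'s keys, then either a single ciphertext is valid under two independently generated key pairs, or the adversary has learned $K_{static}$. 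The latter is excluded by the IND-CCA step of Game $G_1$.
  \item \emph{Different decapsulated secrets.} Otherwise, under implicit rejection, $C$ derives a $K_{static}$ different from A's. The random-oracle KDF then yields a different $K_{master}$, contradicting key equality.
\end{itemize}
Symmetrically, the responder's belief about its peer is fixed through $pk_{RSA}^A$. The master key includes $k_{RSA}^B$ encapsulated to the claimed initiator, so a responder believing its peer is $E$ uses $pk_{RSA}^E$, and A cannot decapsulate that ciphertext to the same secret except with $\mathsf{Adv}^{\mathrm{IND\text{-}CCA}}_{\mathrm{RSA}}$ probability. The token check $ID'_A = ID_A$, protected by AEAD integrity as in Game $G_3$, adds a further binding.

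Finally, I will collect these terms with the multi-session factor $n_p n_s$, mirroring the bound of the mutual implicit authentication theorem. This shows a UKS success probability of at most the sum of the static ML-KEM, RSA IND-CCA, AEAD, PRF-collision and hash-collision terms, and hence negligible.

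The main obstacle is the first case in the second step. Excluding "one ciphertext that decapsulates consistently under two honest parties' keys" is not literally an IND-CCA consequence; it is a robustness or key-binding property. I would first try a reduction in which the simulator embeds the IND-CCA challenge key as $pk_{Kyber}^B$ and uses $C$'s honestly generated secret key to compute the value $C$ would derive. If that value equals A's $K_{static}$, the simulator distinguishes the challenge. If this reduction does not go through cleanly, I would fall back to treating ML-KEM's FO implicit rejection as a PRF keyed by the rejection seed $z$ that is independent per party, which makes cross-key agreement occur only with probability about $2^{-256}$.
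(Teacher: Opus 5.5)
Your argument is correct within the paper's model, but it takes a different route from the paper.

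\textbf{What the paper does.} The paper's proof is a one-line assertion: the $SID$ and the KDF inputs explicitly contain $ID_A$, $ID_B$, both nonces and $fp_A$, so any misbinding trips an AEAD or HMAC check.

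\textbf{What you do instead.} You go from equal session keys, via the random-oracle KDFs, to equal $H_2$, $H_1$ and $M1$. You then read the initiator's identity off the cleartext $ID_A$ field. The responder is bound only implicitly, through A's static encapsulation to B's keys.

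\textbf{Why your route is the one that matches the protocol.} In Algorithms~\ref{alg:hardened_hpq_ake_initiator} and~\ref{alg:hardened_hpq_ake_responder}, $ID_B$ occurs in no message, token, $MAC_1$ key or KDF input. So the paper's explicit-binding premise does not match the specification, and your implicit binding is what actually carries the responder side.

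\textbf{What it costs.} Your route exposes an assumption the paper's wording hides: every honest key pair must be independently generated and authentically provisioned. That assumption is load-bearing. Suppose a corrupt identity $B$ could be provisioned with $C$'s public keys. Then A's $M1$ intended for $B$ is accepted by $C$, and $C$'s $M2$ passes all of A's checks. A and C finish with the same $K_{session}$ while A believes its peer is $B$, and no AEAD or HMAC check fails. The paper's explicit-binding argument would cover this case only if $ID_B$, or a hash of B's public keys, were actually hashed into $H_1$ and the KDF inputs.

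\textbf{Simplifications to your own plan.} The obstacle you flag dissolves, because your first reduction works as stated. It needs only IND-CPA security and no protocol simulation. Given $(pk^*, c^*, K_b)$, generate $(pk_C, sk_C)$ yourself and output $1$ iff $\mathrm{Decaps}(sk_C, c^*) = K_b$. Under $b=0$ this holds with probability exactly $2^{-256}$. A union bound over sessions and party pairs then handles the adversary's adaptive choice of peers, even if $B$ is later corrupted. Consequently:
\begin{itemize}
\item The valid/rejected case split can be dropped. Its appeal to what the adversary has learned is beside the point, since $C$'s decapsulation output does not depend on it.
\item The PRF fallback can be dropped; it would in any case cover only the rejection branch.
\item The RSA-side ``symmetric'' step is redundant, because equality of $M1$ already forces the responder to read $ID_A$ and use $pk_{RSA}^A$.
\end{itemize}
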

\begin{proof}
The Session Identifier ($SID$) and the KDF input vectors explicitly encapsulate the unique identities ($ID_A, ID_B$), session nonces ($N_A, N_B$), and the public key fingerprint $fp_A$. Any identity-misbinding or cross-session injection attempted by an adversary automatically triggers an AEAD authentication failure or an HMAC validation mismatch, forcing an immediate session abort \cite{Bader2015, rfc5869}.
\end{proof}

\subsection{Replay Resistance}
\begin{proposition}[Replay Resistance]
Replay attacks against the HPQ-AKE architecture are infeasible.
\end{proposition}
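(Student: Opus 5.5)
I will prove the proposition by case analysis over the three messages, treating a successful replay as a special case of acceptance without a matching honest peer transcript. I will then reduce each case to the authentication theorem above and to freshness of the per-session randomness. A replay attack is formalized as follows. The adversary records messages from honest runs and delivers some recorded message to a session other than the one it was originally destined for, possibly after modifying nothing. The adversary wins if that session accepts, or if it derives a key correlated with a previously used session key. The target is a bound of the form $n_p n_s\,\mathsf{Adv}^{\mathrm{auth}}(\mathcal{A}) + (n_p n_s)^2/2^{256}$. The second term covers collisions among the 256-bit nonces $N_A$, $N_B$ and the 256-bit $token_{raw}$.

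\textbf{Case 1: replayed $M1$ to responder B.} This case carries no authentication force on its own, since B can process a replayed $M1$ and answer with a new $M2$. The key step is that B samples fresh $N_B$, $K_{eph}$ and $k_{RSA}^B$. As a result $H_2$ and $K_{master}=\MasterKDF(K_{static}\parallel K_{eph}\parallel k_{RSA}^B\parallel H_2)$ are fresh. Under the ideal-KDF modeling of Game $G_4$ and Proposition~\ref{prop:kci}, any recorded $M3$ verifies against this new $K_{master}$ only with probability $2^{-256}$. A newly forged valid $M3$ requires either $k_{RSA}^B$, which is protected by RSA-OAEP IND-CCA for the uncompromised A, or an HMAC forgery. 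Either outcome is exactly an authentication break bounded by the authentication theorem.

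\textbf{Case 2: replayed $M2$ to initiator A.} A's new session uses fresh $N_A$, $token_{raw}$ and $pk_{eph}^A$, so $H_1$, and hence $H_2$, differ from the recorded run. AEAD.Open of the recorded $C_{challenge}$ is performed with associated data equal to the new $H_2$. It therefore fails unless the recorded and new $H_2$ coincide, or unless INT-CTXT is violated; both events are accounted for in Game $G_3$ and in the SHA3 collision term. Even if the AEAD check passed, the expected challenge $\mathrm{HMAC}(K_{master},\ChallengeLabel\parallel H_2\parallel token_{raw})$ involves the fresh $token_{raw}$, so it fails except with probability $2^{-256}$. Cross-role reflection is handled the same way: feeding a response into the challenge position, or the reverse, is defeated by the distinct labels $\ChallengeLabel$ and $\ResponseLabel$.

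\textbf{Case 3: replayed $M3$.} This case is subsumed by Case 1, because each response is tied through $H_2$ and $K_{master}$ to one responder session state.

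The main obstacle is Case 1, because the proposition as stated says replay is "infeasible". B will in fact spend a Kyber decapsulation and an RSA decapsulation on a replayed $M1$: the $MAC_1$ filter passes, since it was honestly generated. The $MAC_1$ filter therefore does not stop replay, and I will state this explicitly. The plan is to interpret the claim as "no acceptance and no key reuse", not "no processing". I will add a remark that resource-level replay resistance requires a deployment-side cache of recently seen $(ID_A,N_A)$ values. Finally, I will collect the per-case bounds by a union bound over the $n_p n_s$ sessions.
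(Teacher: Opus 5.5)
Your proof is correct and takes the same route as the paper's. Both split on which message is replayed and use the receiver's fresh contributions ($N_B$ and $K_{eph}$ at B for a replayed $M1$; $N_A$ and the new ephemeral key at A for a replayed $M2$) to force a new $H_2$ and $K_{master}$, so that the AEAD/HMAC checks reject. Your version is more careful in two places. For a replayed $M1$, you show B cannot be driven to accept by reducing a valid $M3$ to the authentication theorem, whereas the paper only notes that the master keys differ. For a replayed $M2$, you correctly attribute the AEAD failure to the mismatch in the associated data $H_2$, since A still recovers the old $k_{RSA}^B$ correctly.
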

\begin{proof}
Suppose an adversary replays a recorded message:
$$M1 = \langle ID_A, pk_{\text{eph}}^A, C_{\text{static}}, C_{\mathrm{RSA}}^A, C_{\text{token}}, N_A, MAC_1 \rangle.$$
Upon receipt, the honest Responder B generates a fresh nonce $N_B \xleftarrow{\$} \{0,1\}^{256}$ and executes fresh encapsulation to produce a new $K_{\text{eph}}$. This ensures that $K_{\text{fresh}}^{\mathrm{master}} \neq K_{\text{replayed}}^{\mathrm{master}}$. Conversely, if an old message $M2$ is replayed against a new session, the fresh nonce $N_A$ generated by the Initiator ensures that the re-derived master key mismatches, causing the subsequent AEAD or HMAC validation checks to abort the handshake instantly \cite{b2}.
\end{proof}

\subsection{Authentication and Deniability}
Unlike traditional Signed-Diffie-Hellman handshakes that use digital signatures to enforce non-repudiation, HPQ-AKE employs a KEM-based sign-less implicit authentication mechanism. Because the handshake transcript is composed entirely of ciphertexts and symmetric MACs that either party could theoretically simulate or forge given the shared secrets, a third-party auditor cannot use a captured transcript as legally binding proof of participation. This grants HPQ-AKE the property of \textit{Off-the-Record Deniability}, a highly desired privacy feature in modern secure messaging and high-security tactical links. 

\rev{During active execution, authentication is conditional on the uncompromised-peer keys, trusted provisioning, and key-derivation assumptions stated above. Confirmation establishes possession of the transcript-bound master-key material; it is not an unconditional proof that both endpoints remain uncompromised or an assertion of equivalence to every certificate-based deployment.}

\section{Performance Evaluation}
\label{sec:performance}
\rev{Revision scope: the master-key correction in Section 4.3 adds the existing 32-byte RSA secret to local KDF input and separates the master-KDF context. It changes neither the three-message structure nor transmitted fields, output lengths, or the number of cryptographic operations. The original measured profiles, figures, and numerical model inputs below are retained; the corrected KDF has not been rebenchmarked, so these timings are not new measurements of the revised implementation.}
To rigorously evaluate the proposed \textbf{HPQ-AKE} protocol, we conducted a multi-layer comparative analysis against five representative baseline configurations:
\begin{itemize}
    \item \textbf{Kyber-only}: An unauthenticated Post-Quantum Key Encapsulation Mechanism (PQ-KEM) baseline serving as the absolute lower bound for computational and communication overhead.
    \item \textbf{TLS 1.3 (Classical)}: The conventional industry standard utilizing $X25519$ for key exchange and $ECDSA\text{-}P256$ for authentication.\par
    \item \textbf{Hybrid TLS 1.3 (Full Handshake)}: A NIST-compliant dual-layer protocol combining $ML\text{-}KEM\text{-}768$ and $ML\text{-}DSA\text{-}65$ to provide both post-quantum confidentiality and transcript signatures.\par
    \item \textbf{Hybrid TLS 1.3 (Pre-cached Baseline)}: An optimized operational state of the hybrid baseline where post-quantum certificate chains are omitted from the transmission wire, simulating pre-distributed public keys.
    \item \textbf{KEMTLS}: A pure post-quantum alternative \cite{kemtls} that substitutes handshake digital signatures with implicit KEM-based authentication.
\end{itemize}

Unlike prior evaluations relying on simplified linear calculations or single-run measurements, our validation framework integrates empirical micro-benchmarking, statistical error propagation, deterministic protocol-level constraint modeling, and stochastic network simulations. All cryptographic timings reported in this section were measured on the desktop-class x86 testbed described below and then propagated into protocol- and network-level models; we do not claim direct measurements on embedded IoT hardware in this work. The evaluation pipeline consists of four tightly coupled stages:
\begin{enumerate}
    \item \textbf{Empirical Micro-benchmarking}: Real cryptographic primitives are executed directly on the desktop-class physical testbed under an isolated environment to extract native performance distributions.
    \item \textbf{Robust Statistical Scaling}: For each cryptographic primitive, we execute $N = 10,000$ independent trials to compute the exact arithmetic mean, standard deviation, median, 95th percentile ($P95$), and 95\% confidence intervals.
    \item \textbf{Analytical Protocol Modeling}: Handshake execution is mapped using deterministic equations grounded in TLS 1.3 transport semantics \cite{rfc8446}, explicitly incorporating protocol-specific RTT multipliers and structural network trade-offs.
    \item \textbf{Monte Carlo Network Simulation}: The analytical framework is subjected to a stochastic simulation engine ($N = 10,000$ iterations) incorporating dynamic Exponential network jitter to evaluate stability, variance, and predictability under highly constrained topologies.
\end{enumerate}

\subsection{Experimental Setup}
\label{subsec:setup}

\subsubsection{Cryptographic Parameters}
All post-quantum primitives conform to the final NIST standardization track specifications: $ML\text{-}KEM\text{-}768$ (FIPS 203) for key encapsulation \cite{nist-fips203-Alagic} and $ML\text{-}DSA\text{-}65$ (FIPS 204) for digital signatures \cite{nistFIPS204}, both targeting NIST Security Level 3. The proposed HPQ-AKE protocol reuses legacy $RSA\text{-}3072$ public keys for implicit mutual authentication and can benefit from existing RSA acceleration where available, while embedding an ephemeral $ML\text{-}KEM\text{-}768$ layer to guarantee post-quantum forward secrecy. \rev{Handshake framing overhead is modeled using TLS 1.3 header conventions \mbox{\cite{rfc8446}}, rather than measured from a complete TLS implementation.}

\begin{table*}[pos=htbp]
\centering
\caption{Primitive Sizes and Public-Key Cryptographic Configurations}
\label{tab:params}
\begin{tabular}{lcc}
\toprule
\textbf{Primitive} & \textbf{Public Key (Bytes)} & \textbf{Signature / Ciphertext (Bytes)} \\
\midrule
ML-KEM-768 (Kyber) & 1,184 & 1,088 \\
ML-DSA-65 (Dilithium3) & 1,952 & 3,293 \\
RSA-3072 & 384 & 384 \\
X25519 & 32 & 32 \\
ECDSA-P256 & 64 & 64 \\
\bottomrule
\end{tabular}
\end{table*}

\noindent\textbf{\rev{Padding configuration.}}
\rev{The archived reference RSA benchmark uses RSA-3072 with PyCryptodome v3.20.0 defaults: OAEP-SHA1, MGF1-SHA1, and an empty label; each ciphertext is 384 bytes including OAEP encoding \mbox{\cite{rfc8017}}. OAEP randomness is independent of the NumPy simulation seed. The standalone benchmark uses a 28-byte test message, while the integrated demonstration uses 32 bytes. The network model takes aggregate payload sizes as inputs and adds MTU-dependent TCP/IP overhead, with no separate discretionary padding increment. AES-GCM execution and TLS record serialization are outside this aggregate model; nonce/tag lengths are therefore not additional measured simulation parameters. The explicit nonce policy in Section 3.3 specifies the revised protocol, not a configuration measured by the archived scripts; the historical payload and timing inputs are retained.}

\subsubsection{Gateway-to-Edge Backhaul Network Model}
The evaluation models the gateway-to-edge/cloud backhaul defined in the IoT deployment model, where RSA-enabled gateways terminate HPQ-AKE and communicate with edge or cloud services over TCP/TLS-compatible links. It does not model the local sensor-to-gateway radio segment; LPWAN, SCHC, IEEE 802.15.4, and device-level radio energy are outside the present scope. Accordingly, the backhaul model accounts for TLS Record Layer headers ($5\text{ Bytes}$ per record), Handshake message headers ($4\text{ Bytes}$ per message), public-key certificate chains, packet fragmentation bounds, and TCP window mechanics. The total handshake communication payload is defined as
\begin{equation}
S_{total} = \sum_{i=1}^{k} \text{size}(m_i) + \Omega_{TLS},
\label{eq:payload}
\end{equation}
where $m_i$ represents the raw public cryptographic keys or credential payloads, and $\Omega_{TLS}$ denotes the gapped transport framing, handshake headers, and certificate chain overheads.

Accounting for complete certificate validation chains under RFC 5280 \cite{rfc5280}, a standard $RSA\text{-}3072$ chain remains compact at approximately $1.4\text{ KB}$. In contrast, $ML\text{-}DSA\text{-}65$ chains expand beyond $5.4\text{ KB}$ due to large post-quantum signature allocations \cite{nistFIPS204}. Consequently, the final aggregate protocol payloads computed and fixed within our evaluation framework are:
\begin{itemize}
    \item \textbf{Hybrid TLS 1.3 (Full)}: 13,009 Bytes
    \item \textbf{Hybrid TLS 1.3 (Pre-cached)}: 6,209 Bytes
    \item \textbf{KEMTLS Baseline}: 12,172 Bytes
    \item \textbf{HPQ-AKE (Ours)}: 5,668 Bytes
\end{itemize}

The modeled backhaul implements a standard Ethernet Maximum Transmission Unit (MTU) restriction ($1,500\text{ Bytes}$) and references the TCP Initial Congestion Window (IW10) standard specified in RFC 6928 \cite{rfc6928}. Under IW10, the initial transmission budget is limited to 10 segments ($\approx 14.6\text{ KB}$). Because all evaluated protocol payloads fall within this limit, they can be dispatched within the initial congestion window flight without provoking structural multi-RTT penalty waves due to congestion window expansion.
\par

\begin{table*}[pos=htbp]
\centering
\caption{Gateway-to-Edge Backhaul Configuration Parameters}
\label{tab:network_params}
\begin{tabular}{lc}
\toprule
\textbf{Parameter} & \textbf{Value} \\
\midrule
Ethernet MTU & 1,500 Bytes \\
TCP Initial Congestion Window (IW10) & 10 Segments ($\approx 14.6\text{ KB}$) \\
Satellite Base RTT ($\text{RTT}_{base}$) & 600 ms \\
Bandwidth (Constrained Satellite Backhaul) & 50 kbps \\
Network Jitter ($\sigma_{jitter}$) & 12\% Exponential Noise \\
Monte Carlo Sampling Size ($N$) & 10,000 \\
\bottomrule
\end{tabular}
\end{table*}

\subsubsection{Hardware and Software Environment}
All local micro-benchmarks were systematically executed on an isolated hardware platform to eliminate external scheduling noise:
\begin{itemize}
    \item \textbf{CPU}: Intel Core i5-1135G7 @ 2.40 GHz (Turbo Boost disabled to maintain clock frequency stability)
    \item \textbf{RAM}: 8 GB DDR4 @ 3200 MHz
    \item \textbf{OS}: Ubuntu 22.04 LTS (Kernel 5.15)
    \item \texttt{Runtime:} Python 3.12.2
    \item {\RaggedRight\emergencystretch=3em \texttt{Core Cryptographic Libraries:} \texttt{liboqs v0.15.0}, \texttt{liboqs-python v0.14.1}, and \texttt{PyCryptodome v3.20.0}.\par}
\end{itemize}

\rev{The archived runners invoke RSA-OAEP through PyCryptodome without configuring dedicated RSA hardware offload. The x86 measurements therefore establish neither ARM-platform performance nor the benefit of a dedicated RSA accelerator.}
\rev{The standalone microbenchmark runner performs 100 warm-up iterations per routine followed by 10,000 timed iterations, using the high-resolution monotonic timer \mbox{\texttt{time.perf\_counter\_ns()}}. The separate integrated driver includes a 50-iteration demonstration benchmark; its network plots use the declared aggregate CPU profiles rather than automatically replacing them with that demonstration output.}
\par

\subsection{Statistically Rigorous Micro-benchmarking}
\label{subsec:microbenchmark}

\subsubsection{Measurement Methodology}
Each primitive was evaluated over $N = 10,000$ independent trials. To ensure empirical transparency, we avoid single-point estimations and report the full statistical spread: arithmetic mean ($\mu$), sample standard deviation ($\sigma$), median, 95th percentile ($P95$), and the 95\% confidence interval ($CI_{95}$). The confidence interval is calculated as
\begin{equation}
CI_{95} = 1.96 \cdot \frac{\sigma}{\sqrt{N}},
\end{equation}
where $N = 10,000$, the error margin of our sample mean is squeezed by a factor of $100$, ensuring exceptional statistical convergence. To compound individual primitive latencies into protocol-level CPU footprints, we apply the strict variance addition theorem for independent random variables (statistical error propagation)
\begin{equation}
\sigma_{total} = \sqrt{\sum_{i=1}^{n} \sigma_i^2}.
\label{eq:errorprop}
\end{equation}
This mathematically rigorous formulation ensures that localized hardware clock jitters do not introduce artificial linear accumulation flaws into our protocol-level evaluations.

\subsubsection{Primitive-Level Results}
Table \ref{tab:primitive_results} details the empirical benchmarks obtained directly from native execution on our hardware testbed.

\begin{table*}[pos=htbp]
\centering
\caption{Empirical Cryptographic Primitive Benchmark Results ($N = 10,000$)}
\label{tab:primitive_results}
\small
\begin{tabular}{lccccc}
\toprule
\textbf{Primitive} & \textbf{Mean (ms)} & \textbf{Std (ms)} & \textbf{Median (ms)} & \textbf{P95 (ms)} & \textbf{95\% CI} \\
\midrule
Kyber KeyGen & 0.0138 & 0.0209 & 0.0135 & 0.0170 & $\pm$0.00041 \\
Kyber Encap & \textbf{0.0157} & 0.0203 & 0.0150 & 0.0190 & $\pm$0.00039 \\
Kyber Decap & \textbf{0.0090} & 0.0014 & 0.0088 & 0.0110 & $\pm$0.00003 \\
\midrule
Dilithium Sign & 0.0884 & 0.0054 & 0.0872 & 0.0965 & $\pm$0.00011 \\
Dilithium Verify & 0.0276 & 0.0017 & 0.0271 & 0.0304 & $\pm$0.00003 \\
\midrule
RSA-3072 Encrypt & \textbf{0.6399} & 0.2100 & 0.6100 & 0.9500 & $\pm$0.00411 \\
RSA-3072 Decrypt & \textbf{2.7355} & 0.4709 & 2.7000 & 3.5000 & $\pm$0.00922 \\
\midrule
X25519 KeyGen & 0.0754 & 0.0042 & 0.0748 & 0.0812 & $\pm$0.00008 \\
X25519 ECDH & 0.0652 & 0.0031 & 0.0645 & 0.0701 & $\pm$0.00006 \\
ECDSA Sign & 0.6452 & 0.0384 & 0.6411 & 0.7025 & $\pm$0.00075 \\
ECDSA Verify & 1.3885 & 0.1914 & 1.3624 & 1.6841 & $\pm$0.00375 \\
\bottomrule
\end{tabular}
\end{table*}

The empirical data highlights a critical architectural trade-off: while lattice-based primitives ($ML\text{-}KEM$) feature optimal computational speeds ($\mu \le 0.0157\text{ ms}$) and tightly clustered variances, their deployment alongside $ML\text{-}DSA\text{-}65$ forces substantial transmission overheads due to signature scales. Conversely, $RSA\text{-}3072$ decryption exhibits an asymmetric computational penalty ($\mu = 2.7355\text{ ms}$), but its transmission footprint remains exceptionally compact.

\subsubsection{Protocol-Level CPU Latency}
The total software computational overhead ($T_{cpu}$) is formulated by aggregating the independent operational components required to execute a complete mutually authenticated handshake
\begin{equation}
T_{cpu} = \sum T_{KeyGen} + \sum T_{Encap/Decap} + \sum T_{Sign/Verify}.
\label{eq:cpu}
\end{equation}

\begin{table*}[pos=htbp]
\centering
\caption{Protocol-Level Combined CPU Execution Profiles}
\label{tab:cpu_total}
\resizebox{\textwidth}{!}{%
\begin{tabular}{llc}
\toprule
\textbf{Protocol} & \textbf{Cryptographic Operational Composition} & \textbf{Total CPU Time (ms) ($\mu \pm \sigma$)} \\
\midrule
Kyber-only Baseline & $\text{Kyber}_{KG} + \text{Kyber}_{Enc} + \text{Kyber}_{Dec} + \text{Sym}_{\text{agg}}$ & $0.0331 \pm 0.0057$ \\
TLS 1.3 (Classical) & $2\times\text{X25519}_{KG} + 2\times\text{ECDH} + \text{ECDSA}_{Sign} + \text{ECDSA}_{Verify} + \text{Sym}_{\text{agg}}$ & $2.1849 \pm 0.2426$ \\
Hybrid TLS 1.3 & $\text{Classical\_TLS}_{CPU} + \text{Kyber}_{KG} + \text{Kyber}_{Enc} + \text{Kyber}_{Dec} + \text{Dili}_{Sign} + \text{Dili}_{Verify} + \text{Sym}_{\text{agg}}$ & $2.3340 \pm 0.2497$ \\
KEMTLS & $2\times\text{Kyber}_{KG} + 2\times\text{Kyber}_{Enc} + 2\times\text{Kyber}_{Dec} + \text{Sym}_{\text{agg}}$ & $0.0661 \pm 0.0080$ \\
\textbf{HPQ-AKE (Ours)} & $\text{Kyber}_{KG} + 2\times\text{Kyber}_{Enc} + 2\times\text{Kyber}_{Dec} + 2\times\text{RSA}_{Enc} + 2\times\text{RSA}_{Dec} + \text{Sym}_{\text{agg}}$ & $\mathbf{7.1053 \pm 0.7036}$ \\
\bottomrule
\end{tabular}
}
\begin{flushleft}
\small \textbf{Note:} $\text{Sym}_{\text{agg}}$ represents the sub-millisecond symmetric cryptographic pipeline executed during the handshake, comprising SHA3 transcript hashing, HMAC-HKDF key derivation, and AEAD record protection. \rev{For reproducible network evaluation, the simulator takes the aggregate means and standard deviations shown here as explicit inputs; it does not separately execute or time the symmetric pipeline.}
\end{flushleft}
\end{table*}

As detailed in Table \ref{tab:cpu_total}, HPQ-AKE exhibits the highest overall local processing overhead ($7.1053 \pm 0.7036\text{ ms}$), driven by the two $RSA\text{-}3072$ private decapsulations required for mutual authentication. The gateway-class initiator defers its private operation until processing $M2$, so its first flight requires only Kyber encapsulation and RSA public-key encryption ($\approx 0.65\text{ ms}$ combined). This sequencing reduces first-flight cost but does not eliminate one private RSA operation at each endpoint; direct microcontroller performance is not measured in this work.

\subsection{Communication Efficiency Analysis}
The fundamental optimization vector of HPQ-AKE is to minimize communication overhead to safeguard network pipelines. By stripping out the massive post-quantum signature fields required by Hybrid TLS 1.3 and replacing them with a dual-KEM implicit authentication structure utilizing compact legacy RSA allocations, the protocol footprint shrinks significantly relative to the unoptimized Full handshake
\begin{equation}
\Delta_{reduction} = \frac{13,009\text{ B} - 5,668\text{ B}}{13,009\text{ B}} \times 100 \approx 56.4\%.
\end{equation}

\begin{figure*}[pos=htbp]
    \centering
    \includegraphics[width=0.92\linewidth]{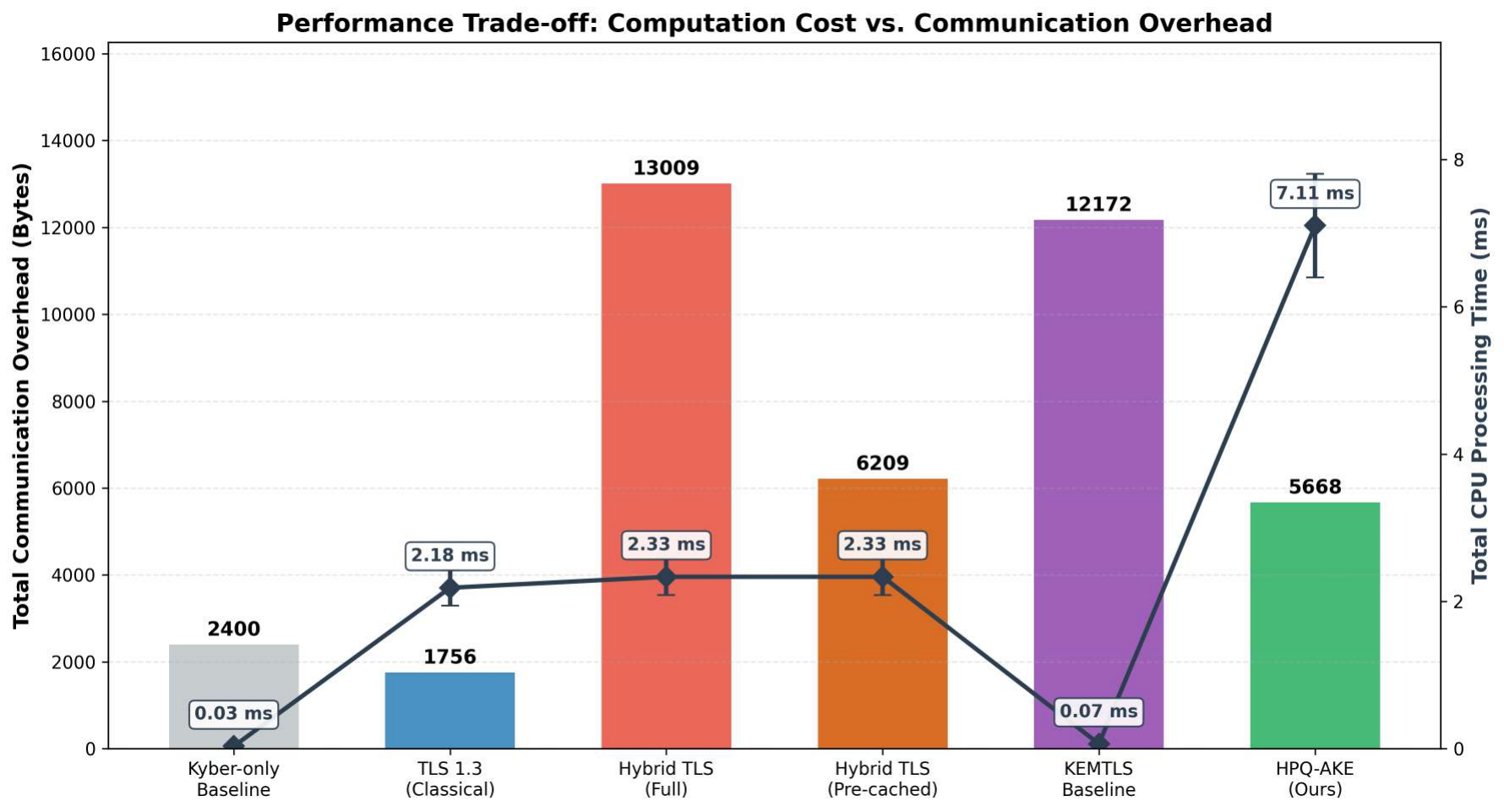}
    \caption{The architectural trade-off between communication overhead (payload size) and CPU execution time across post-quantum migration frameworks.}
    \label{fig:tradeoff}
\end{figure*}

As illustrated in the trade-off map (Figure \ref{fig:tradeoff}), HPQ-AKE is engineered around a deliberate architectural asymmetric exchange: it compromises a $4.77\text{ ms}$ increment in local processing time relative to Hybrid TLS 1.3 to eliminate exactly $7,341\text{ Bytes}$ of network transmission payload. Even when compared to the highly optimized Pre-cached Hybrid TLS baseline ($6,209\text{ Bytes}$), HPQ-AKE shaves off an additional MTU fragmentation frame in the modeled wire-level comparison. This architectural choice yields substantial modeled performance dividends in channels where bandwidth boundaries impose severe transmission latency penalties.

\begin{table*}[pos=htbp]
\centering
\caption{Handshake Payload Decomposition and Component Breakdown (in Bytes)}
\label{tab:payload_decomposition}
\resizebox{\textwidth}{!}{%
\begin{tabular}{lrrr}
\hline
\textbf{Handshake Component Layer} & \textbf{Hybrid TLS 1.3 (Full)} & \textbf{Hybrid TLS 1.3 (Pre-cached)} & \textbf{Proposed HPQ-AKE} \\ \hline
\textit{Core Cryptographic Primitives:} & & & \\
-- ML-KEM-768 Ephemeral Public Key ($pk_{eph}^{A}$)  & 1,184 & 1,184 & 1,184 \\
-- ML-KEM-768 Static Ciphertext ($C_{static}$)         & 1,088 & 1,088 & 1,088 \\
-- ML-KEM-768 Ephemeral Ciphertext ($C_{eph}$)       & 1,088 & 1,088 & 1,088 \\
-- RSA-3072 Asymmetric Ciphertext ($C_{RSA}$)          & 0     & 0     & 768   \\
-- Post-Quantum Certificates \& Signatures             & 8,400 & 1,600 & 0     \\
-- Auxiliary Data (Nonces, MACs, extensions)     & 461   & 461   & 356   \\ \hline
\textbf{Core Cryptographic Payload}                    & \textbf{12,221} & \textbf{5,421} & \textbf{4,484} \\ 
Protocol framing headers (Handshake / Record layer)   & 788   & 788   & 1,184 \\ \hline
\textbf{Total Application Payload ($S$)}                & \textbf{13,009} & \textbf{6,209} & \textbf{5,668} \\ \hline
Required MTU Network Fragments ($\lceil S / 1460 \rceil$) & 9 frames & 5 frames & 4 frames \\
Accumulated TCP/IP Wire Overhead ($40\text{ B} \times \text{frames}$) & $Z = 360$ & $X = 200$ & $Y = 160$ \\ \hline
\textbf{Total Wire-Level Footprint ($S^{wire}$)}        & \textbf{13,369} & \textbf{6,409} & \textbf{5,828} \\ \hline
\end{tabular}
}
\end{table*}
\noindent \textbf{A Note on Pre-cached Artifacts.} The 1,600 Bytes allocation for ``PQ Certs \& Sigs" under the Pre-cached baseline explicitly utilizes an optimized certificate profile based on the ML-DSA-44 (Dilithium2) static public key (1,312 Bytes per FIPS 204) and X.509 metadata (288 Bytes). While the Full Handshake baseline defaults to enterprise-grade ML-DSA-65 (NIST Level 3), the pre-cached asymmetric baseline is intentionally scoped to NIST Level 2. This architectural trade-off reflects pragmatic edge-computing environments where pre-cached assets are aggressively compressed to satisfy tight storage and memory constraints on remote hardware nodes.

While the proposed HPQ-AKE protocol significantly reduces the core cryptographic payload by eliminating heavy transcript signatures, a counterintuitive artifact manifests in Table~\ref{tab:payload_decomposition}, where the accumulated protocol framing headers total $1,184\text{ Bytes}$, exceeding the $788\text{ Bytes}$ required by the Hybrid TLS 1.3 baseline. This variance is strictly structural and is driven by two architectural requirements. First, unlike the streamlined 1-RTT structure of TLS 1.3, HPQ-AKE operates on a 1.5-RTT framework spanning three distinct network flights, which naturally incurs an additive accumulation of custom record encapsulation headers, message type descriptors, and length fields. Second, to achieve Mutual Implicit Authentication without explicit digital signatures, the protocol headers must explicitly transport extensive transcript-binding metadata, including cryptographic nonces ($32\text{ Bytes}$ each) and high-entropy session binders ($64\text{ Bytes}$), to enforce session context tying across all handshake flights. However, as demonstrated in the lower section of Table~\ref{tab:payload_decomposition}, this application-layer header expansion is completely offset at the network layer: by suppressing the massive signature fields, HPQ-AKE limits MTU fragmentation to only 4 frames, reducing the accumulated TCP/IP wire overhead to $Y = 160\text{ Bytes}$ and establishing the lowest absolute wire-level footprint ($5,828\text{ Bytes}$).

\subsection{Gateway-to-Edge Backhaul Latency Evaluation}
\label{subsec:network}

\subsubsection{Analytical Latency Model}
The end-to-end handshake execution latency ($T_{total}$) within a generic network topology is modeled through a physics-compliant multi-RTT equation
\begin{equation}
T_{total} = T_{cpu} + \frac{S_{total} \times 8}{BW} + (\text{RTT}_{base} \times \alpha_{rtt}),
\label{eq:latency_total}
\end{equation}
where $S_{total}$ represents the total payload size in Bytes, $BW$ is the channel bandwidth in kbps, $\text{RTT}_{base}$ is the network baseline round-trip propagation delay in milliseconds, and $\alpha_{rtt}$ is the structural protocol round-trip multiplier.

Our model addresses a critical physical constraint often omitted in simplified analyses: the handshake structural architecture dictates distinct round-trip counts under standard transport layer semantics:
\begin{itemize}
    \item \textbf{TLS 1.3 and Hybrid TLS 1.3} function natively as 1-RTT architectures ($\alpha_{rtt} = 1.0$).
    \item \textbf{KEMTLS} enforces an extended handshake dependency chain, requiring a 2.0-RTT sequence ($\alpha_{rtt} = 2.0$) \cite{kemtls}.
    \item \textbf{HPQ-AKE} sustains a 1.5-RTT sequence ($\alpha_{rtt} = 1.5$) to safely establish server authentication before completing mutual client validation.
\end{itemize}

\subsubsection{Latency Across Network Scenarios}
We evaluated the analytical performance across three modeled network profiles:
\begin{itemize}
    \item \textbf{Fiber/LAN Link}: $1\text{ Gbps}$, $\text{RTT} = 1\text{ ms}$ (representing high-speed edge environments).
    \item \textbf{4G/Wi-Fi Link}: $25\text{ Mbps}$, $\text{RTT} = 40\text{ ms}$ (representing standard mobile topologies).
    \item \textbf{Satellite-Assisted IoT Backhaul}: $50\text{ kbps}$, $\text{RTT} = 600\text{ ms}$ (representing narrow-band, high-propagation gateway-to-edge connectivity).
\end{itemize}

\begin{figure*}[pos=htbp]
    \centering
    \includegraphics[width=\linewidth]{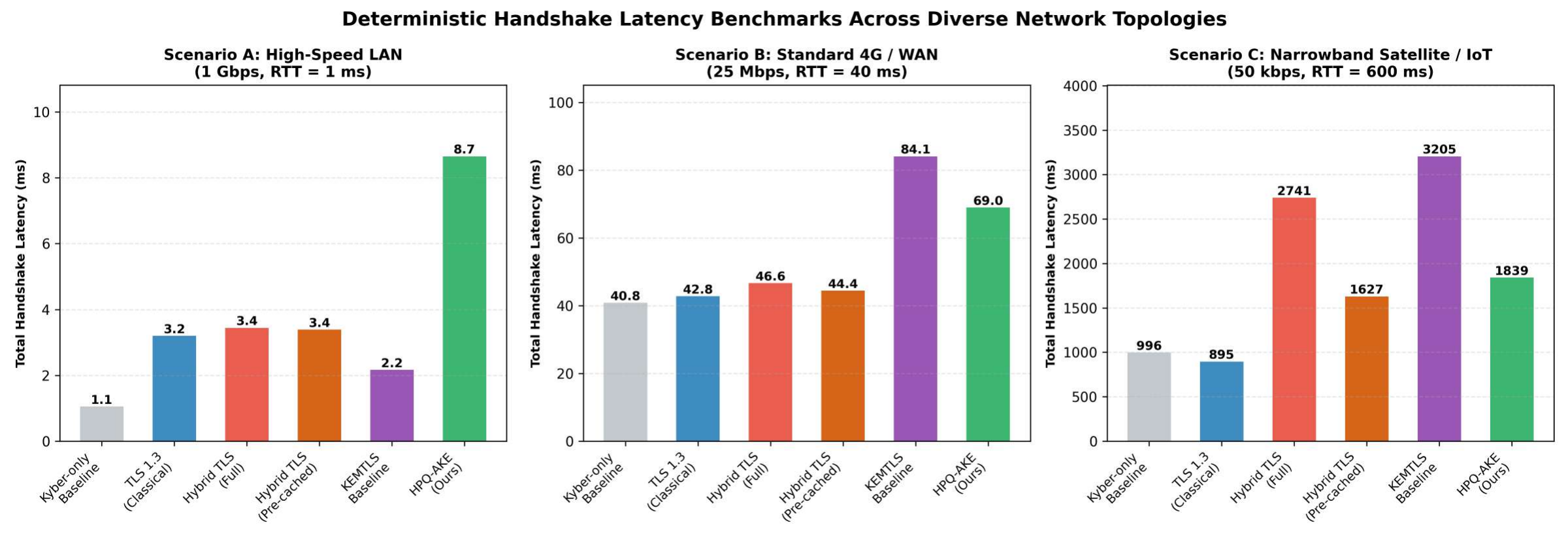}
    \caption{End-to-end handshake latency performance mapped across distinct network link profiles.}
    \label{fig:network_latency}
\end{figure*}

The experimental results shown in Figure \ref{fig:network_latency} reveal a sharp inversion in performance bottlenecks. In high-capacity Fiber/LAN pipelines, network transmission delay approaches zero, allowing local CPU execution times to dominate the performance profile; consequently, Hybrid TLS 1.3 maintains a lead.

As the modeled network transitions toward the narrow-band, high-latency Satellite/IoT-inspired profile, transmission delays grow substantially. While HPQ-AKE incurs a structural propagation lag over the $600\text{ ms}$ link due to its $1.5\text{-RTT}$ penalty, making it slightly slower than the Pre-cached TLS baseline, it bypasses the massive packet overhead limitations of the Hybrid TLS (Full) handshake and the heavy $2.0\text{-RTT}$ propagation penalty of KEMTLS.

\subsection{Monte Carlo Stability Analysis}
\rev{The reference routine in the official archive's \mbox{\texttt{HoanHaoTongHop.py}} initializes NumPy RandomState (MT19937) once with seed 42 and uses 10,000 samples per protocol at 50 kbps and 600 ms base RTT. In the script's fixed protocol order, normal CPU samples are drawn from the declared mean/std profiles in Table~\mbox{\ref{tab:cpu_total}} and clipped below at 0.001 ms, followed by exponential jitter samples. Jitter scale equals 0.12 times base RTT times the protocol's RTT multiplier; jitter affects propagation, not transmission time. This draw order reproduces the reported latency labels. The separate RTT-sensitivity experiment in Section~\mbox{\ref{subsec:break_even}} uses PCG64 with seeds 42--51.}
\par

\begin{figure*}[pos=htbp]
    \centering
    \includegraphics[width=0.9\linewidth]{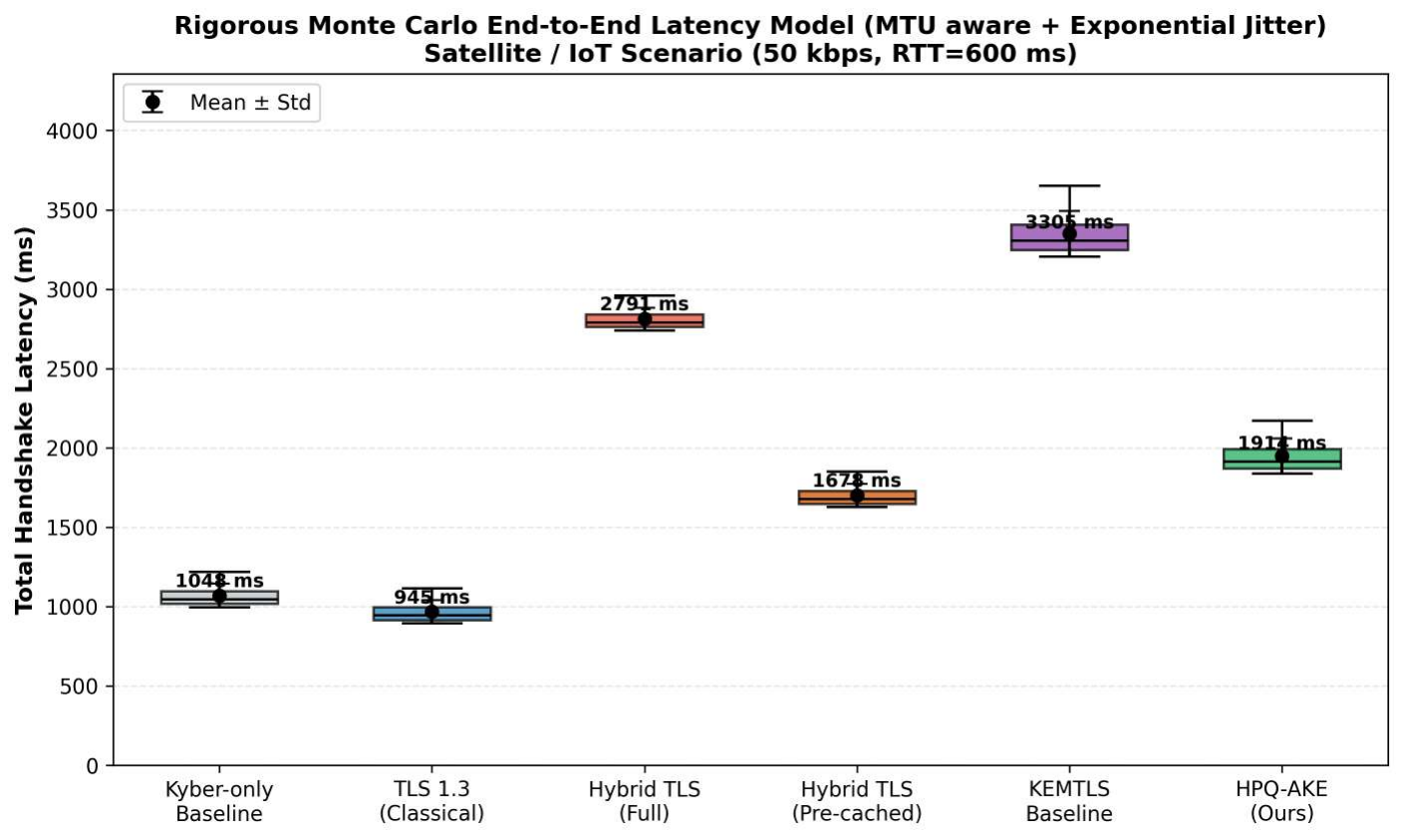}
    \caption{Handshake latency distributions under stochastic satellite network conditions ($50\text{ kbps}$, $\text{RTT} = 600\text{ ms}$, $12\%$ Exponential jitter, evaluated over $10,000$ runs).}
    \label{fig:boxplot}
\end{figure*}

Figure \ref{fig:boxplot} shows the resulting statistical distributions. \rev{Whiskers extend to the most extreme samples within 1.5 times the IQR of the respective quartiles. Outlier markers are hidden for readability, but samples are not removed from the numerical summaries.}

\rev{Replaying the reference routine yields full-precision medians of 2791.062850 ms (Full), 1678.138330 ms (Pre-cached), and 1914.732601 ms (HPQ-AKE). The following whole-millisecond values reproduce the figure labels, which use integer truncation:}
\begin{itemize}
    \item \textbf{Hybrid TLS 1.3 (Full) Median Latency}: $2,791\text{ ms}$
    \item \textbf{Hybrid TLS 1.3 (Pre-cached) Median Latency}: $1,678\text{ ms}$
    \item \textbf{HPQ-AKE (Ours) Median Latency}: $1,914\text{ ms}$
\end{itemize}

This demonstrates that under simulated stochastic satellite-like channels, HPQ-AKE achieves a stable modeled performance optimization against standard uncached deployments
\par
\begin{equation}
\begin{aligned}
\text{Latency Optimization \%}
&= \left(1 - \frac{1,914\text{ ms}}{2,791\text{ ms}}\right) \times 100 \\
&= 31.4\%.
\end{aligned}
\end{equation}

Although the $1.5\text{-RTT}$ structure introduces additional latency compared to the perfectly cached $1.0\text{-RTT}$ baseline, shrinking the transmission window to $5,668\text{ Bytes}$ effectively insulates the HPQ-AKE handshake process from severe packet arrival variance and stochastic channel degradation, delivering a tight and predictable latency dispersion profile.

\subsection{Break-Even Bandwidth Analysis}
\label{subsec:break_even}
\rev{Communication efficiency is also a central concern in asynchronous federated learning for synthetic aperture radar (SAR) image target recognition, as illustrated by AsyncFed-SAR~\mbox{\cite{AsyncFedSAR}}. In the present work, we examine a different, protocol-level question: how RTT and jitter affect HPQ-AKE's break-even bandwidth. Our sensitivity analysis is derived from the HPQ-AKE latency model, rather than adopting a federated-learning performance model.}

To establish the exact boundary at which our communication payload reduction overcomes the local $RSA\text{-}3072$ processing overhead and structural round-trip delays, we solve for the crossover point where the total handshake latency of Hybrid TLS 1.3 equals that of HPQ-AKE. By incorporating our multi-RTT physics model, the equilibrium equation is formulated using wire-level payloads ($S^{wire}$) to account for network-layer overheads
\begin{equation}
\begin{aligned}
\frac{8\left(S^{wire}_{TLS} - S^{wire}_{HPQ}\right)}{BW}
&= T_{cpu}^{HPQ} - T_{cpu}^{TLS} \\
&\quad + 0.5\,\text{RTT}_{crossover}.
\end{aligned}
\label{eq:breakeven_theory}
\end{equation}

Crucially, as established in Section \ref{subsec:setup}, since all protocol payloads are fully contained within the TCP IW10 envelope, they execute without triggering congestion-driven transport delays, maintaining standard round-trip structures. Isolating the operational channel crossover bandwidth ($BW_{break-even}$) yields

\begin{equation}
BW_{break-even} = \frac{8 \cdot (S^{wire}_{TLS} - S^{wire}_{HPQ})}{\Delta T_{penalty}},
\label{eq:breakeven_final}
\end{equation}

where the structural penalty gap $\Delta T_{penalty}$ is defined as
\begin{equation*}
\Delta T_{penalty} = (T_{cpu}^{HPQ} - T_{cpu}^{TLS}) + (\Delta\alpha_{rtt} \cdot \text{RTT}_{crossover})
\end{equation*}
and $\Delta\alpha_{rtt} = \alpha_{rtt}^{HPQ} - \alpha_{rtt}^{TLS}$ represents the round-trip structural difference.

Evaluating Equation \ref{eq:breakeven_final} using our empirical hardware means ($T_{cpu}^{HPQ} = 7.1053\text{ ms}$, $T_{cpu}^{TLS} = 2.3340\text{ ms}$) under standard evaluation IoT network configurations ($\text{RTT}_{crossover} = 20\text{ ms}$, $\Delta\alpha_{rtt} = 0.5$) calculates a penalty gap of $\Delta T_{penalty} \approx 14.77\text{ ms}$. To provide a comprehensive evaluation, we solve Equation \ref{eq:breakeven_final} against two distinct baselines manifested in network deployments, where wire payloads are explicitly derived from core payloads ($S$) and TCP/IP header overheads ($\text{MTU} = 1500\text{ B}$, $\text{MSS} = 1460\text{ B}$) across fragmented frames:
\par

\begin{itemize}
    \item \textbf{Vs. Hybrid TLS 1.3 (Pre-cached Baseline)}: With an effective wire-level payload difference of $581\text{ Bytes}$ ($4,648\text{ bits}$), where the wire sizes are explicitly adjusted for MTU overhead ($S^{wire}_{TLS} = 6,409\text{ Bytes}$ vs. $S^{wire}_{HPQ} = 5,828\text{ Bytes}$). The operational threshold is identified at
    \begin{equation}
    BW_{break-even}^{\text{pre-cached}} \approx 314.7\text{ kbps} \approx 0.31\text{ Mbps}.
    \end{equation}
    
    \item \textbf{Vs. Hybrid TLS 1.3 (Full Handshake Baseline)}: When certificates and post-quantum signatures are transmitted in full, the wire-level footprint expands due to higher fragmentation overhead ($S^{wire}_{TLS} = 13,009 + 360 = 13,369\text{ Bytes}$ vs. $S^{wire}_{HPQ} = 5,668 + 160 = 5,828\text{ Bytes}$), yielding a wire payload gap of $7,541\text{ Bytes}$ ($60,328\text{ bits}$). This shifts the operational break-even boundary up to\par
    \begin{equation}
    BW_{break-even}^{\text{full}} \approx \text{\rev{4,084.1}}\text{ kbps} \approx 4.08\text{ Mbps}.
    \end{equation}
\end{itemize}

\begin{figure*}[pos=htbp]
    \centering
    \includegraphics[width=\textwidth]{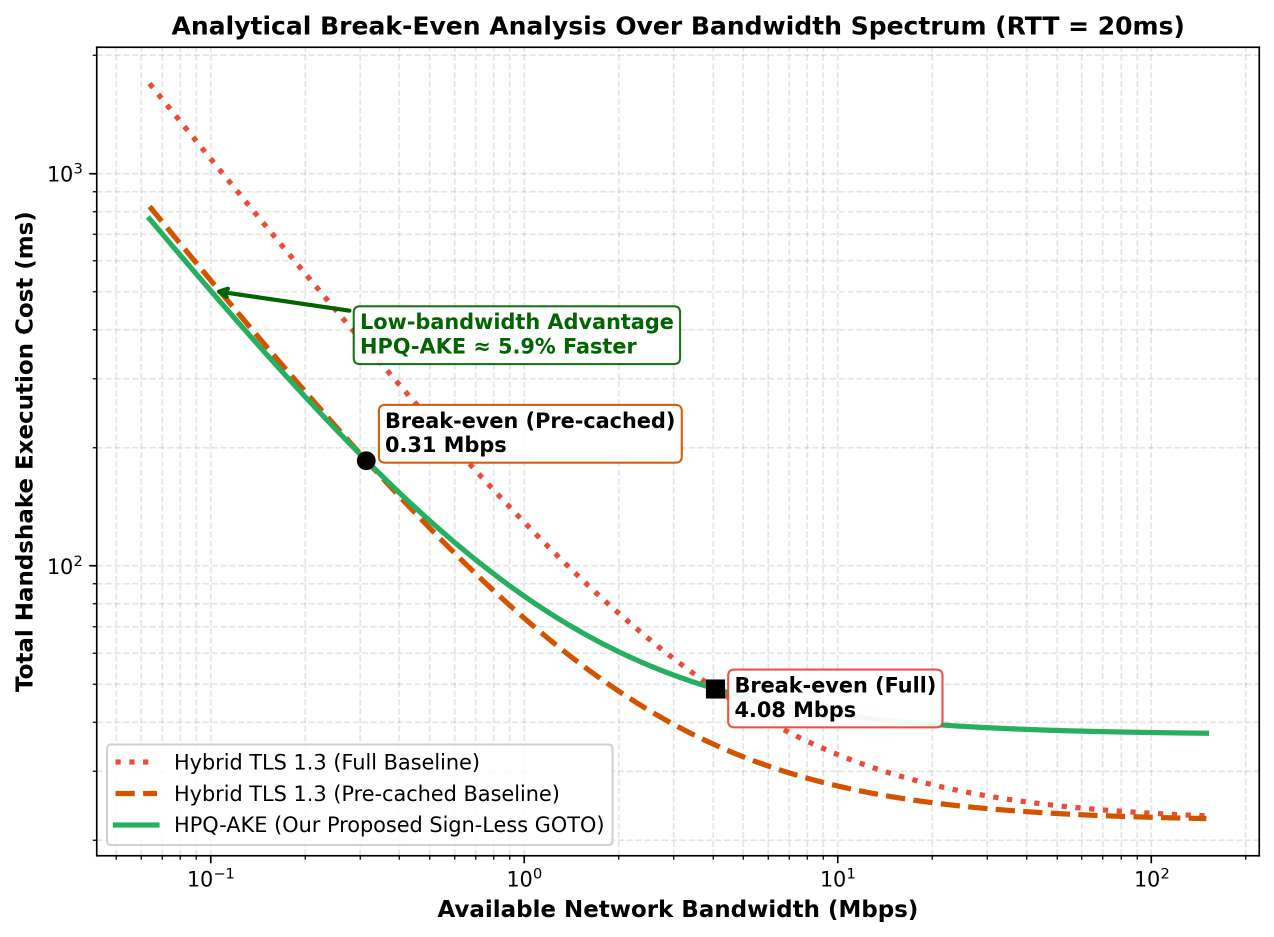}
    \caption{Break-even bandwidth operational boundary analysis mapping HPQ-AKE optimization zones against both Full and Pre-cached Hybrid TLS 1.3 baselines.}
    \label{fig:break_even}
\end{figure*}

\rev{The 0.31 Mbps and 4.08 Mbps thresholds are conditional on the 20 ms RTT setting, not universal bandwidth limits. Holding wire sizes and CPU means fixed in Equation~\mbox{\ref{eq:breakeven_final}}, increasing RTT increases the additional half-RTT cost of HPQ-AKE and monotonically lowers both crossover bandwidths. Table~\mbox{\ref{tab:rtt_sensitivity}} quantifies this dependence over 1--1,000 ms. At 600 ms without jitter, the thresholds are 0.0153 Mbps against Pre-cached and 0.1979 Mbps against Full; thus a 50 kbps link lies in the advantage region only against Full.}

\noindent\textbf{\rev{Seeded RTT and jitter sensitivity.}}
\rev{We newly evaluate 10,000 samples per seed using NumPy Generator with PCG64 and integer seeds 42--51. For each seed, one array of unit-scale exponential variates is reused across both baselines, RTT settings, and jitter fractions, providing paired comparisons. The realized RTT equals the base RTT multiplied by one plus the jitter fraction times the exponential variate. Fractions 0, 0.12, and 0.30 represent no jitter, the nominal stress level, and a higher-jitter stress test; the added delay is positive, with mean and standard deviation equal to the fraction times base RTT. CPU means remain fixed at 7.1053 ms for HPQ-AKE and 2.3340 ms for both TLS baselines to isolate RTT effects; no empirical CPU resampling is claimed.}

\begin{table*}[pos=htbp]
\centering
\caption{\rev{New RTT-sensitivity analysis: crossover bandwidths in Mbps. Stochastic entries are pooled medians from ten seeds with 10,000 samples each; zero-jitter entries are deterministic.}}
\label{tab:rtt_sensitivity}
\small
\begin{tabular}{rrrrrrr}
\toprule
\textbf{\rev{Base RTT}} & \multicolumn{2}{c}{\textbf{\rev{No jitter}}} & \multicolumn{2}{c}{\textbf{\rev{12\% jitter}}} & \multicolumn{2}{c}{\textbf{\rev{30\% jitter}}} \\
\textbf{\rev{(ms)}} & \rev{Pre-cached} & \rev{Full} & \rev{Pre-cached} & \rev{Full} & \rev{Pre-cached} & \rev{Full} \\
\midrule
\rev{1} & \rev{0.8818} & \rev{11.4446} & \rev{0.8748} & \rev{11.3549} & \rev{0.8647} & \rev{11.2230} \\
\rev{20} & \rev{0.3147} & \rev{4.0841} & \rev{0.2979} & \rev{3.8662} & \rev{0.2758} & \rev{3.5796} \\
\rev{50} & \rev{0.1561} & \rev{2.0264} & \rev{0.1459} & \rev{1.8939} & \rev{0.1329} & \rev{1.7249} \\
\rev{100} & \rev{0.0849} & \rev{1.1015} & \rev{0.0789} & \rev{1.0236} & \rev{0.0713} & \rev{0.9256} \\
\rev{200} & \rev{0.0444} & \rev{0.5758} & \rev{0.0411} & \rev{0.5334} & \rev{0.0370} & \rev{0.4804} \\
\rev{400} & \rev{0.0227} & \rev{0.2946} & \rev{0.0210} & \rev{0.2725} & \rev{0.0189} & \rev{0.2448} \\
\rev{600} & \rev{0.0153} & \rev{0.1979} & \rev{0.0141} & \rev{0.1829} & \rev{0.0127} & \rev{0.1643} \\
\rev{1000} & \rev{0.0092} & \rev{0.1195} & \rev{0.0085} & \rev{0.1104} & \rev{0.0076} & \rev{0.0991} \\
\bottomrule
\end{tabular}
\end{table*}

\rev{At 600 ms, median crossover bandwidths fall to 0.0141/0.1829 Mbps (Pre-cached/Full) with 12\% jitter and 0.0127/0.1643 Mbps with 30\% jitter. Across the evaluated settings, the range of seed-specific medians is below 1.2\% of the pooled median. The supplement provides the script \mbox{\texttt{analysis/rtt\_sensitivity.py}}, runtime versions, input manifest, per-seed results, and pooled 5th/95th percentiles. Quantiles use linear interpolation without outlier trimming; percentile ranges describe modeled variability, not confidence intervals. This experiment tests sensitivity of the stated crossover model, not exact reproduction of Figure~\mbox{\ref{fig:boxplot}} or field measurements.}
\par

\subsection{DoS Mitigation and CPU Saturation Bounds}
While HPQ-AKE dramatically reduces transmission latency, its Dual-Layered KEM architecture also provides a critical defense against cryptographic denial-of-service (DoS) attacks. In a standard PKI handshake, an adversary can flood the responder with garbage initializations, forcing the server to execute heavy asymmetric operations (e.g., RSA decryption or signature verification) and rapidly exhausting CPU resources.

HPQ-AKE mitigates this through its Layer-1 filter. Before the responder attempts to process the heavy $RSA\text{-}3072$ decryption ($\mu = 2.7355\text{ ms}$), it must first successfully decapsulate the static Kyber ciphertext ($C_{static}$) and verify $MAC_1$. This initial post-quantum decapsulation is highly asymmetric in terms of computational cost ($\mu = 0.0090\text{ ms}$). 

We can formally define the CPU saturation bounds per processing core. Without the early mitigation filter, a direct RSA-based exhaustion attack would saturate the core at
\begin{equation}
\text{Capacity}_{RSA} = \frac{1}{2.7355 \text{ ms}} \approx 365 \text{ handshakes/second}.
\end{equation}
By forcing the attacker to pass the Kyber decapsulation filter, the responder's drop-rate capacity scales up to
\begin{equation}
\text{Capacity}_{Filter} = \frac{1}{0.0090 \text{ ms}} \approx 111,111 \text{ packets/second}.
\end{equation}
This architecture delivers a \textbf{$\approx 304\times$} reduction in computation cost. It is important to note that this mechanism is not a perfectly stateless DoS shield (such as TLS 1.3 Cookies), as Kyber decapsulation remains a cryptographic operation. A volumetric flood exceeding $111,000\text{ pps}$ per core will still induce CPU saturation. However, by amortizing the asymmetric cost, HPQ-AKE forces the adversary to expend vastly more network bandwidth to achieve computational exhaustion, thereby shifting the bottleneck from the local CPU back to the network perimeter.
\subsection{Discussion}
The combined micro-benchmarking and simulation results support our core architectural hypothesis: within bandwidth-constrained modeled networks, reducing transmission payload scale can yield greater latency reductions than minimizing local CPU execution times alone. While the introduction of $RSA\text{-}3072$ increases server-side computation by a few milliseconds on the measured x86 testbed, this cost is offset in the model by the elimination of large post-quantum signature transmissions, which prevents fragmentation overhead and packet delivery variance.
\par

\rev{RSA imposes a transition boundary rather than permanent authentication security. Section~\mbox{\ref{sec:hybrid_secrecy}} separates secrecy of previously established fresh sessions, subject to the stated erasure assumptions, from active authentication that still relies on classical RSA hardness. ML-KEM therefore does not justify claiming mutual authentication against an adversary able to break RSA during the handshake. Deployment planning should include retirement of RSA authentication and any RSA-based provisioning trust anchors; migration to post-quantum replacements requires renewed protocol analysis, not merely a larger RSA modulus.}

\rev{Legacy RSA support is also conditional: deployments must verify the required OAEP parameters and permitted key usage rather than assume that signing-only credentials can be reused for decapsulation \mbox{\cite{rfc5280KeyUsage}}. Protected private-key operations and OAEP failure handling without distinguishable error or timing leakage remain implementation requirements \mbox{\cite{rfc8017}}. Thus, compatible endpoint software and credential provisioning are prerequisites, not capabilities established by the x86 benchmark or a drop-in TLS replacement.}

\noindent\textbf{\rev{Side-channel implementation risks.}}
\rev{The protocol analysis does not model timing or power leakage from cryptographic operations; algorithmic security must be distinguished from implementation resistance to such attacks \mbox{\cite{Zhou}}. For RSA, deployment should combine blinding with constant-time private-key arithmetic and the OAEP error-handling discipline noted above. For ML-KEM (Kyber), decapsulation requires scrutiny of secret-dependent branches, memory accesses, arithmetic timing, and ciphertext comparison/implicit-rejection selection in the compiled target code. Constant-time execution alone does not establish resistance to power analysis; masking sensitive intermediates and other platform-specific protections require separate evaluation. Repeated use of static keys makes accumulated leakage particularly relevant, while ephemeral-key erasure does not prevent leakage during execution. These are deployment requirements and evaluation targets, not protections demonstrated by the present x86 benchmarks or network simulations.}

\rev{Operational suitability depends on more than handshake bytes. The gateway is a trusted termination point, not protection against gateway compromise or insecurity of the local sensor link. Pre-cached-key maintenance adds the availability and management costs discussed in Section 4.2, outside the reported handshake accounting. Higher RTT and effective certificate caching can favor the Pre-cached baseline. Target-device latency, energy, memory, concurrent-handshake capacity, and physical side-channel resistance still require deployment-specific validation. For an ARM target, both HPQ-AKE and baseline CPU profiles must be remeasured and the modeled latency and break-even thresholds recomputed. Gateway-class hardware or RSA acceleration alone does not establish the reported speedups. The wire-byte accounting remains a property of the specified message formats, not evidence of platform-independent execution performance.}

\section*{Reproducibility}
\rev{The experimental source code is publicly available in the official GitHub repository.}\footnote{\url{https://github.com/HPQ-Protocols/HPQ-AKE-Protocol}}
\rev{The official source snapshot is \mbox{\texttt{HPQ-AKE-Protocol-main.zip}}, with \mbox{\texttt{HoanHaoTongHop.py}} selected for the current network results. The supplement provides NumPy-only replay and RTT-sensitivity scripts, archive/source checksums, runtime versions, and machine-readable outputs; execution instructions and version selection are documented in \mbox{\texttt{analysis/README.md}}. Network replay is distinct from rerunning the original hardware microbenchmarks.}

\section{Conclusion and Future Work}
\label{sec:conclusion}
\subsection{Conclusion}
This paper proposed \textbf{HPQ-AKE}, a hybrid authenticated key exchange protocol designed for the efficient migration from classical Public Key Infrastructure (PKI) to post-quantum key establishment. By eliminating heavy transcript signatures within the handshake phase, the protocol leverages a dual-layer Key Encapsulation Mechanism (KEM) construction. In this architecture, ML-KEM provides post-quantum session-key confidentiality, while RSA-OAEP facilitates Mutual Implicit Authentication by reusing existing, widely deployed legacy RSA public keys.

Desktop-class software implementation and empirical micro-benchmarking using the \texttt{liboqs} and \texttt{PyCryptodome} ecosystems indicate that securing mutual authentication via dual RSA decryption operations requires a total computational footprint of 7.11 ms under the measured x86 evaluation environment. In exchange for this computational trade-off, HPQ-AKE requires $4,484\text{ Bytes}$ of core cryptographic payload. As illustrated by Table~\ref{tab:payload_decomposition}, protocol framing increases the total handshake transmission to $5,668\text{ Bytes}$, while TCP/IP headers produce a final wire-level footprint of $5,828\text{ Bytes}$. Relative to the $13,009\text{ Bytes}$ Hybrid TLS 1.3 Full handshake payload, HPQ-AKE therefore achieves a 56.4\% protocol-level reduction and eliminates an entire MTU fragmentation frame in the modeled comparison, even against the Pre-cached Hybrid TLS baseline ($6,209\text{ Bytes}$).
\par

Stochastic network-level Monte Carlo simulations over $10,000$ iterations characterize the physical bounds of this architectural trade-off. Under a modeled narrow-band $50\text{ kbps}$ satellite-like configuration with a $600\text{ ms}$ propagation delay, HPQ-AKE's necessary $1.5\text{-RTT}$ structural requirement yields a simulated median latency of 1,914 ms. While this propagation penalty makes it slower than a perfectly Pre-cached TLS baseline in high-latency scenarios, HPQ-AKE still delivers a 31.4\% modeled optimization against the standard Hybrid TLS 1.3 Full handshake (2,791 ms). \rev{At 20 ms RTT, deterministic break-even bandwidths are 0.31 Mbps against Pre-cached and 4.08 Mbps against Full; the added RTT-sensitivity analysis shows that these fall to 0.0153 and 0.1979 Mbps, respectively, at 600 ms without jitter. These conditional thresholds reflect the balance between wire-byte savings, computation, and the additional half-RTT requirement.} At $100\text{ kbps}$, HPQ-AKE executes 5.9\% faster than the strongest pre-cached baseline in our model. \rev{These findings identify RSA-KEM-based implicit authentication as a candidate for gateway-class evaluation under the measured x86 CPU assumptions, not as validated performance on ARM gateways or low-end IoT endpoints.}

\subsection{Future Work}
\rev{Further platform-specific validation is needed before deployment; the following research and engineering directions remain open:}
\begin{itemize}
    \item \textbf{Computer-Verified Formal Verification}: Although this paper establishes a rigorous, game-based formal reduction in the Quantum Random Oracle Model (QROM), manual cryptographic proofs remain susceptible to subtle human errors. Future work will focus on modeling the HPQ-AKE execution flow inside automated verification frameworks, such as EasyCrypt or CryptoVerif, to derive machine-checked mathematical guarantees of its post-quantum security properties.
    \item \textbf{\rev{ARM Gateway and MCU Validation}}: \rev{Future work will benchmark the revised HPQ-AKE and the same baselines on ARM Cortex-A gateway platforms and STM32F4-class Cortex-M4 boards, comparing software RSA with supported hardware offload where available. Measurements of latency, cycles, RAM/flash, and energy will determine whether each target can execute the full handshake or requires acceleration or gateway termination; no current ARM performance result is claimed.}
    \item \textbf{Standardization and Side-Channel Resilience}: We aim to explore the formal integration paths of the HPQ-AKE handshake model into existing protocol standards, such as TLS 1.3 and SSH, by formalizing explicit ciphersuite specifications. \rev{Future target-board evaluation will combine statistical timing-leakage tests with power-trace analysis of RSA and ML-KEM decapsulation in the integrated handshake, including malformed-ciphertext processing. We will assess blinding and masking countermeasures and their latency, memory, and energy costs. These measurements are planned work, not established side-channel guarantees.}
    \item  \textbf{Evolution Toward a Pure Post-Quantum KEM-Based Framework}: As legacy asymmetric cryptographic acceleration engines are progressively phased out across modern network boundaries, we plan to evolve the current hybrid architecture into a pure post-quantum, sign-less authentication framework. This evolutionary step involves substituting the legacy RSA authentication anchor with static post-quantum KEM key pairs (e.g., using long-term ML-KEM keys), thereby ensuring comprehensive, long-term cryptographic hardening against fault-tolerant quantum adversaries well beyond the migration era.\par
\end{itemize}

\ifdefined\BlindManuscript
\else
\section*{Acknowledgment}
The authors would like to express their sincere gratitude to the anonymous reviewers for their thorough evaluations, detailed comments, and constructive suggestions, which have significantly improved the technical rigor, clarity, and overall quality of this work.

Khiem Pham-Tuan wishes to thank Ho Chi Minh City Open University and Ho Chi Minh City University of Industry and Trade (HUIT) for facilitating this research and providing the computing resources necessary for the experiments. Similarly, Minh Quang Le thanks Ho Chi Minh City Open University for supporting and facilitating this study. Khuong Nguyen-An acknowledges Ho Chi Minh City University of Technology (HCMUT), VNU-HCM, for supporting his study.

\printcredits
\fi

\section*{Declaration of Generative AI usage:} During the preparation of this manuscript, the authors utilized generative AI tools solely to improve the grammatical structure and readability of the text, taking full responsibility for the final academic integrity.

\bibliographystyle{cas-model2-names}

\renewcommand{\bibfont}{\fontsize{7.8pt}{9.5pt}\selectfont}
\setlength{\bibsep}{2pt}
\begingroup
\newif\ifRevisedReference
\newcommand{\EndRevisedReference}{%
  \ifRevisedReference\end{mdframed}\RevisedReferencefalse\fi}
\let\OriginalBibitem\bibitem
\renewcommand{\bibitem}[2][]{%
  \EndRevisedReference
  \OriginalBibitem[#1]{#2}%
  \ifboolexpr{
    test {\ifstrequal{#2}{AWSGreengrassFAQ}} or
    test {\ifstrequal{#2}{AWSGreengrassAuthentication}} or
    test {\ifstrequal{#2}{InmarsatErgonEnergy}} or
    test {\ifstrequal{#2}{rfc5280KeyUsage}} or
    test {\ifstrequal{#2}{Ducas}} or
    test {\ifstrequal{#2}{Raj}} or
    test {\ifstrequal{#2}{Schwabe2021}} or
    test {\ifstrequal{#2}{Zhang}} or
    test {\ifstrequal{#2}{AsyncFedSAR}}
  }{%
    \RevisedReferencetrue
    \begin{mdframed}[backgroundcolor=white,hidealllines=true,
      skipabove=0pt,skipbelow=0pt,innertopmargin=0pt,
      innerbottommargin=0pt,innerleftmargin=0pt,innerrightmargin=0pt]%
  }{}%
}
\pretocmd{\endthebibliography}{\EndRevisedReference}{}{}
\bibliography{cas-refs}

@INPROCEEDINGS{Shor,
  author    = {P. W. Shor},
  title     = {Algorithms for quantum computation: discrete logarithms and factoring},
  booktitle = {Proc. 35th Annu. Symp. Found. Comput. Sci.},
  year      = {1994},
  pages     = {124--134}
}

@TECHREPORT{nist-report2022,
  author      = {G. Alagic and others},
  title       = {Status report on the third round of the NIST post-quantum cryptography standardization process},
  institution = {NIST},
  year        = {2022}
}

@ARTICLE{Avanzi,
  author  = {R. Avanzi and others},
  title   = {CRYSTALS-Kyber algorithm specifications and supporting documentation},
  journal = {NIST PQC Round},
  volume  = {2},
  number  = {4},
  year    = {2019},
  pages   = {1--43}
}

@INPROCEEDINGS{Ducas,
  author    = {L. Ducas and T. Lepoint and V. Lyubashevsky and P. Schwabe and G. Seiler and D. Stehlé},
  title     = {Crystals--dilithium: Digital signatures from module lattices},
  booktitle = {NIST PQC Round},
  year      = {2018}
}

@MISC{nist-fips203-Alagic,
  author       = {G. Alagic and others},
  title        = {Module-lattice-based key-encapsulation mechanism standard},
  howpublished = {FIPS PUB 203},
  month        = {Aug.},
  year         = {2024}
}

@MISC{Stebila,
  author       = {D. Stebila and S. Fluhrer and S. Gueron},
  title        = {Hybrid key exchange in TLS 1.3},
  howpublished = {IETF, Draft},
  year         = {2020}
}

@INPROCEEDINGS{7-Bos,
  author    = {J. W. Bos and others},
  title     = {Post-quantum key exchange for the TLS protocol from the ring learning with errors problem},
  booktitle = {Proc. IEEE Symp. Secur. Privacy (SP)},
  year      = {2015},
  pages     = {572--589}
}

@INPROCEEDINGS{kemtls,
  author    = {P. Schwabe and D. Stebila and T. Wiggers},
  title     = {Post-quantum TLS without handshake signatures},
  booktitle = {Proc. ACM SIGSAC Conf. Comput. Commun. Secur.},
  year      = {2020},
  pages     = {1461--1480}
}

@MISC{pqm4-Kannwischer,
  author = {M. J. Kannwischer and J. Rijneveld and P. Schwabe and K. Stoffelen},
  title  = {pqm4: Testing and benchmarking NIST PQC on ARM Cortex-M4},
  year   = {2019}
}

@ARTICLE{9-Battarbee,
  author  = {C. Battarbee and C. Striecks and L. Perret and S. Ramacher and K. Verhaeghe},
  title   = {Quantum-safe hybrid key exchanges with KEM-based authentication},
  journal = {EPJ Quantum Technol.},
  volume  = {12},
  number  = {1},
  pages   = {128},
  year    = {2025}
}

@ARTICLE{Astrizi,
  author  = {T. L. Astrizi and R. Custódio},
  title   = {Seamless transition to post-quantum TLS 1.3: A hybrid approach using identity-based encryption},
  journal = {Sensors},
  volume  = {24},
  number  = {22},
  pages   = {7300},
  year    = {2024}
}

@ARTICLE{V.Shoup,
  author  = {V. Shoup},
  title   = {A proposal for an ISO standard for public key encryption},
  journal = {Cryptology ePrint Archive},
  year    = {2001}
}

@INPROCEEDINGS{BosEtAl,
  author    = {J. Bos and others},
  title     = {CRYSTALS-Kyber: a CCA-secure module-lattice-based KEM},
  booktitle = {Proc. IEEE Eur. Symp. Secur. Privacy (EuroS\&P)},
  year      = {2018},
  pages     = {353--367}
}

@INPROCEEDINGS{b1,
  author    = {M. Bellare and P. Rogaway},
  title     = {Entity authentication and key distribution},
  booktitle = {Proc. Annu. Int. Cryptol. Conf.},
  year      = {1993},
  pages     = {232--249}
}

@MISC{nistFourthRound,
  author = {G. Alagic},
  title  = {Status report on the fourth round of the NIST post-quantum cryptography standardization process},
  year   = {2025}
}

@ARTICLE{Alagic2025,
  author  = {G. Alagic and F. Bajaj and A. Kocoglu},
  title   = {The best of both KEMs: Securely combining KEMs in post-quantum hybrid schemes},
  journal = {Cryptology ePrint Archive},
  year    = {2025}
}

@ARTICLE{Ricci,
  author  = {S. Ricci and others},
  title   = {Hybrid keys in practice: Combining classical, quantum and post-quantum cryptography},
  journal = {IEEE Access},
  volume  = {12},
  pages   = {23206--23219},
  year    = {2024}
}

@INPROCEEDINGS{Raj,
  author    = {A. Raj and V. Balachandran},
  title     = {A hybrid encryption framework combining classical, post-quantum, and QKD methods},
  booktitle = {Proc. Int. Conf. Appl. Cryptogr. Netw. Secur.},
  year      = {2025}
}

@ARTICLE{E.Crockett,
  author  = {E. Crockett and C. Paquin and D. Stebila},
  title   = {Prototyping post-quantum and hybrid key exchange and authentication in TLS and SSH},
  journal = {Cryptology ePrint Archive},
  year    = {2019}
}

@INPROCEEDINGS{Zhang,
  author    = {J. Zhang and others},
  title     = {Authenticated key exchange from ideal lattices},
  booktitle = {Proc. Annu. Int. Conf. Theory Appl. Cryptographic Techn.},
  year      = {2015}
}

@ARTICLE{Xia,
  author  = {T. Xia and others},
  title   = {A quantum-resistant identity authentication and key agreement scheme for UAV networks based on Kyber algorithm},
  journal = {Drones},
  volume  = {8},
  number  = {8},
  pages   = {359},
  year    = {2024}
}

@INPROCEEDINGS{Schwabe2021,
  author    = {P. Schwabe and D. Stebila and T. Wiggers},
  title     = {More efficient post-quantum KEMTLS with pre-distributed public keys},
  booktitle = {Proc. Eur. Symp. Res. Comput. Secur.},
  year      = {2021}
}

@MISC{rfc5990,
  author       = {J. Randall},
  title        = {Use of the RSA-KEM key transport algorithm in the cryptographic message syntax (CMS)},
  howpublished = {RFC 5990},
  year         = {2010}
}

@ARTICLE{Zhou,
  author  = {Y. Zhou and D. Feng},
  title   = {Side-channel attacks: Ten years after its publication and the impacts on cryptographic module security testing},
  journal = {Cryptology ePrint Archive},
  year    = {2005}
}

@ARTICLE{ShoupOAEP,
  author  = {V. Shoup},
  title   = {OAEP reconsidered},
  journal = {J. Cryptol.},
  volume  = {15},
  number  = {4},
  pages   = {223--249},
  year    = {2002}
}

@MISC{rfc8017,
  author       = {K. Moriarty and others},
  title        = {PKCS \#1: RSA cryptography specifications version 2.2},
  howpublished = {RFC 8017},
  month        = {Nov.},
  year         = {2016}
}

@MISC{rfc5116,
  author       = {D. McGrew},
  title        = {An interface and algorithms for authenticated encryption},
  howpublished = {RFC 5116},
  month        = {Jan.},
  year         = {2008}
}

@ARTICLE{McGrewViegaGCM,
  author  = {D. McGrew and J. Viega},
  title   = {The Galois/counter mode of operation (GCM)},
  journal = {NIST Modes Oper. Process},
  volume  = {20},
  pages   = {27},
  year    = {2004}
}

@TECHREPORT{nist800-38D,
  author      = {M. J. Dworkin},
  title       = {Recommendation for block cipher modes of operation: Galois/Counter Mode (GCM) and GMAC},
  institution = {NIST Special Publication 800-38D},
  month       = {Nov.},
  year        = {2007}
}

@MISC{rfc5869,
  author       = {H. Krawczyk and P. Eronen},
  title        = {HMAC-based extract-and-expand key derivation function (HKDF)},
  howpublished = {RFC 5869},
  year         = {2010}
}

@MISC{nistSHA-3,
  author       = {NIST},
  title        = {SHA-3 standard: Permutation-based hash and extendable-output functions},
  howpublished = {FIPS PUB 202},
  month        = {Aug.},
  year         = {2015}
}

@INPROCEEDINGS{BonehZhandry,
  author    = {D. Boneh and M. Zhandry},
  title     = {Secure signatures and chosen ciphertext security in a quantum computing world},
  booktitle = {Proc. Annu. Int. Cryptol. Conf. (CRYPTO)},
  year      = {2013},
  pages     = {361--379}
}

@INPROCEEDINGS{Fujisaki,
  author    = {E. Fujisaki and T. Okamoto},
  title     = {Secure integration of asymmetric and symmetric encryption schemes},
  booktitle = {Proc. Annu. Int. Cryptol. Conf.},
  year      = {1999},
  pages     = {537--554}
}

@INPROCEEDINGS{b2,
  author    = {C. Cremers and M. Feltz},
  title     = {Beyond eCK: Perfect forward secrecy under actor compromise and ephemeral-key reveal},
  booktitle = {Proc. Eur. Symp. Res. Comput. Secur.},
  year      = {2012},
  pages     = {734--751}
}

@INPROCEEDINGS{Canetti2001,
  author    = {R. Canetti and H. Krawczyk},
  title     = {Analysis of key-exchange protocols and their use for building secure channels},
  booktitle = {Proc. Int. Conf. Theory Appl. Cryptographic Techn.},
  year      = {2001},
  pages     = {453--474}
}

@INPROCEEDINGS{LaMacchia2007,
  author    = {B. LaMacchia and K. Lauter and A. Mityagin},
  title     = {Stronger security of authenticated key exchange},
  booktitle = {Proc. Int. Conf. Provable Secur.},
  year      = {2007},
  pages     = {1--16}
}

@ARTICLE{ShoupGames,
  author  = {V. Shoup},
  title   = {Sequences of games: a tool for taming complexity in security proofs},
  journal = {Cryptology ePrint Archive},
  year    = {2004}
}

@INPROCEEDINGS{Bader2015,
  author    = {C. Bader and D. Hofheinz and T. Jager and E. Kiltz and Y. Li},
  title     = {Tightly-secure authenticated key exchange},
  booktitle = {Proc. Theory Cryptogr. Conf.},
  year      = {2015},
  pages     = {629--658}
}

@INPROCEEDINGS{Gjosteen2018,
  author    = {K. Gjøsteen and T. Jager},
  title     = {Practical and tightly-secure digital signatures and authenticated key exchange},
  booktitle = {Proc. Annu. Int. Cryptol. Conf.},
  year      = {2018},
  pages     = {95--125}
}

@MISC{nistFIPS204,
  author       = {NIST, Thinh Dang and others},
  title        = {Module-Lattice-Based Digital Signature Standard},
  howpublished = {FIPS PUB 204},
  year         = {2024}
}

@MISC{rfc8446,
  author       = {E.Rescorla},
  title        = {The transport layer security (TLS) protocol version 1.3},
  howpublished = {RFC 8446},
  year         = {2018}
}

@MISC{rfc5280,
  author       = {R.Housley},
  title        = {RFC 8399: Internationalization updates to RFC 5280},
  year         = {2018}
}

@MISC{rfc6928,
  author       = {J.Chu and others},
  title        = {RFC 6928: Increasing TCP's initial window},
  year         = {2013}
}

@MISC{rfc9528,
  author       = {G. Selander and J. Preu{\ss} Mattsson and F. Palombini},
  title        = {Ephemeral Diffie-Hellman Over COSE (EDHOC)},
  howpublished = {RFC 9528},
  year         = {2024},
  doi          = {10.17487/RFC9528}
}

@ARTICLE{AriasJimenez2025,
  author  = {A. Arias-Jimenez and J. Gallego-Madrid and J. Sanchez-Gomez and R. Marin-Perez},
  title   = {Lightweight authenticated key exchange for low-power IoT networks using EDHOC},
  journal = {Internet of Things},
  volume  = {31},
  pages   = {101539},
  year    = {2025},
  doi     = {10.1016/j.iot.2025.101539}
}

@ARTICLE{Zerrouki2023,
  author  = {F. Zerrouki and S. Ouchani and H. Bouarfa},
  title   = {T2S-MAKEP and T2T-MAKEP: A PUF-based Mutual Authentication and Key Exchange Protocol for IoT devices},
  journal = {Internet of Things},
  volume  = {24},
  pages   = {100953},
  year    = {2023},
  doi     = {10.1016/j.iot.2023.100953}
}

@ARTICLE{CruzPiris2025,
  author  = {L. Cruz-Piris and A. Mar{\'i}n-L{\'o}pez and M. {\'A}lvarez-Campana and M. Sanz and J. I. Moreno and D. Arroyo},
  title   = {Measuring the impact of post quantum cryptography in Industrial IoT scenarios},
  journal = {Internet of Things},
  volume  = {34},
  pages   = {101793},
  year    = {2025},
  doi     = {10.1016/j.iot.2025.101793}
}

@ARTICLE{Hanna2025,
  author  = {Y. Hanna and J. Bozhko and S. Tonyali and R. Harrilal-Parchment and M. Cebe and K. Akkaya},
  title   = {A comprehensive and realistic performance evaluation of post-quantum security for consumer IoT devices},
  journal = {Internet of Things},
  volume  = {33},
  pages   = {101650},
  year    = {2025},
  doi     = {10.1016/j.iot.2025.101650}
}

@article{zhandry2015secure,
  title={Secure identity-based encryption in the quantum random oracle model},
  author={Zhandry, Mark},
  journal={International Journal of Quantum Information},
  volume={13},
  number={04},
  pages={1550014},
  year={2015},
  publisher={World Scientific}
}

@MISC{AWSGreengrassFAQ,
  author       = {{Amazon Web Services}},
  title        = {{AWS IoT Greengrass FAQs}},
  howpublished = {Product documentation},
  url          = {https://aws.amazon.com/greengrass/faqs/},
  note         = {Accessed September 7, 2026}
}

@MISC{AWSGreengrassAuthentication,
  author       = {{Amazon Web Services}},
  title        = {Device authentication and authorization for {AWS IoT Greengrass}},
  howpublished = {{AWS IoT Greengrass Version 2 Developer Guide}},
  url          = {https://docs.aws.amazon.com/greengrass/v2/developerguide/device-auth.html},
  note         = {Accessed September 7, 2026}
}

@MISC{InmarsatErgonEnergy,
  author       = {{Inmarsat}},
  title        = {{BGAN M2M} for utilities},
  howpublished = {Ergon Energy industrial case study},
  url          = {https://developer.inmarsat.com/story/bgan-m2m-for-utilities/},
  note         = {Accessed September 7, 2026}
}

@MISC{rfc5280KeyUsage,
  author       = {D. Cooper and S. Santesson and S. Farrell and S. Boeyen and R. Housley and W. Polk},
  title        = {Internet {X.509} Public Key Infrastructure Certificate and Certificate Revocation List ({CRL}) Profile},
  howpublished = {RFC 5280},
  year         = {2008},
  doi          = {10.17487/RFC5280}
}

@ARTICLE{AsyncFedSAR,
  author  = {Yuchao Hou and Dongming Lv and Minghui Liwang and Jie Wang and Zijian Li and Xiaoyu Xia and Di Wu and Youliang Tian and Tony Q. S. Quek and Xianbin Wang},
  title   = {{AsyncFed-SAR}: Towards Communication-Efficient Asynchronous Federated Learning for {SAR} Image Target Recognition},
  journal = {IEEE Transactions on Mobile Computing},
  pages   = {1--18},
  year    = {2026},
  url     = {https://ieeexplore.ieee.org/document/11498445}
}
\endgroup

\clearpage
\begin{figure}[pos=!t]
\footnotesize
\noindent
\begin{minipage}[t]{0.29\columnwidth}
\vspace{0pt}
\centering
\includegraphics[width=\linewidth,height=3.15cm,keepaspectratio]{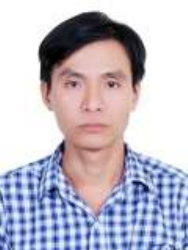}
\end{minipage}\hfill
\begin{minipage}[t]{0.67\columnwidth}
\vspace{0pt}
\textbf{Khiem Pham-Tuan} received the B.S. degree in Informatics Education from Ho Chi Minh City University of Education, Vietnam, in 2006, and the M.Sc. degree in Computer Science from the University of Information Technology, Ho Chi Minh City, Vietnam, in 2016. He is currently a lecturer with the Faculty of Information Technology, Ho Chi Minh City University of Industry and Trade, Vietnam. He is also pursuing the Ph.D. degree in Computer Science at Ho Chi Minh City Open University, Vietnam. His research interests include cryptography, post-quantum cryptography, and computer security.
\end{minipage}

\medskip
\noindent
\begin{minipage}[t]{0.29\columnwidth}
\vspace{0pt}
\centering
\includegraphics[width=\linewidth,height=3.15cm,keepaspectratio]{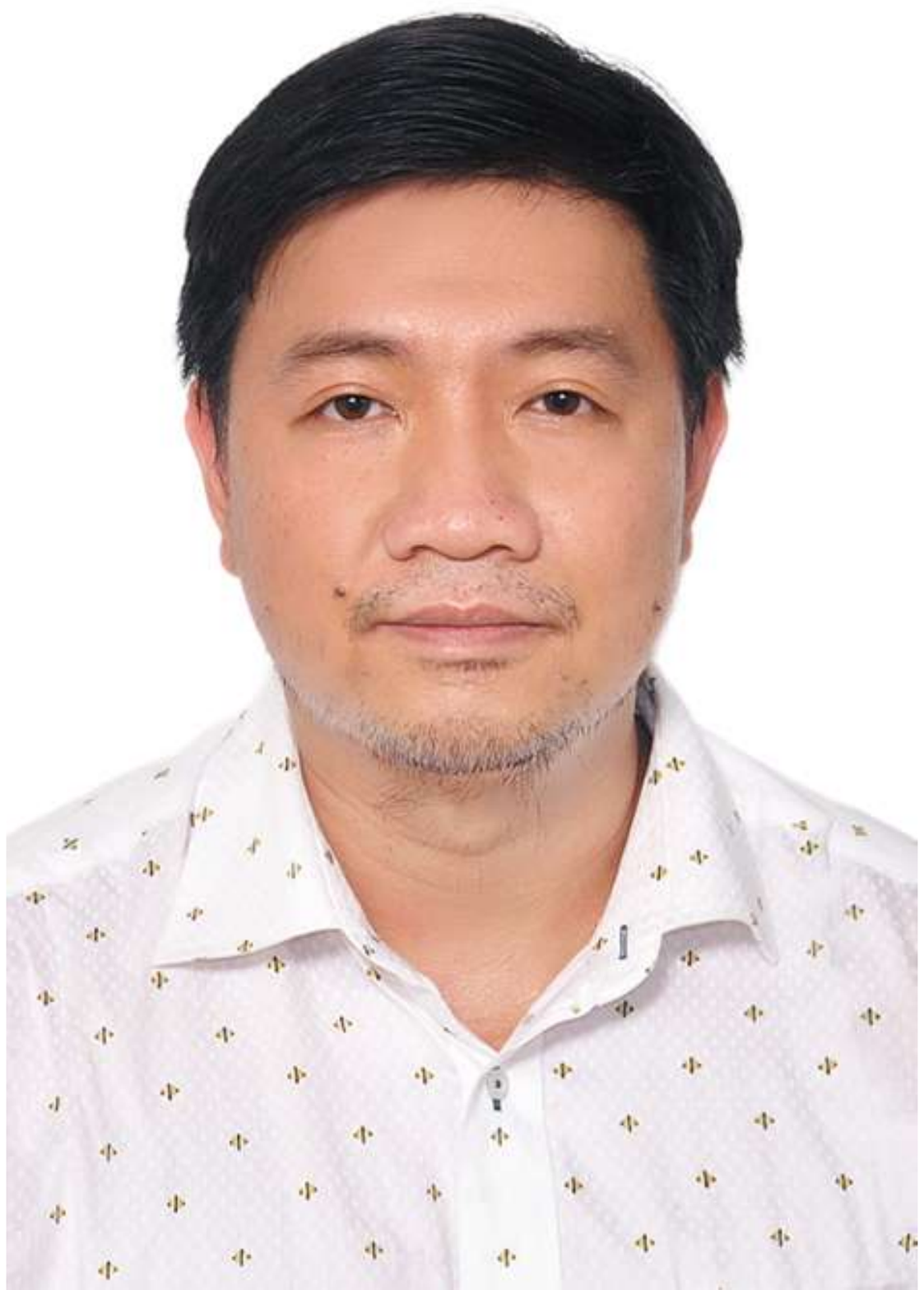}
\end{minipage}\hfill
\begin{minipage}[t]{0.67\columnwidth}
\vspace{0pt}
\textbf{Minh Quang Le} received the Ph.D. degree in Mathematics from the University at Buffalo, USA. He is a lecturer and researcher with the Faculty of Information Technology, Ho Chi Minh City Open University, Vietnam. His research interests lie at the intersection of computational topology, geometry, and deep neural networks. His academic profile is available at \url{https://www.researchgate.net/profile/Minh-Le-115}.
\end{minipage}

\medskip
\noindent
\begin{minipage}[t]{0.29\columnwidth}
\vspace{0pt}
\centering
\includegraphics[width=\linewidth,height=3.15cm,keepaspectratio]{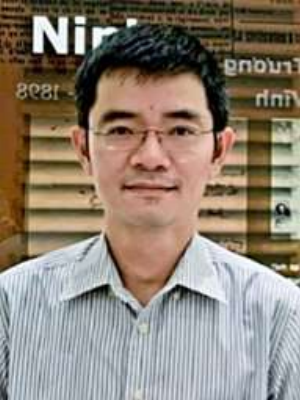}
\end{minipage}\hfill
\begin{minipage}[t]{0.67\columnwidth}
\vspace{0pt}
\textbf{Khuong Nguyen-An} is with the Faculty of Computer Science and Engineering, Ho Chi Minh City University of Technology (HCMUT), Vietnam National University Ho Chi Minh City (VNU-HCM), where he leads the Advanced Cryptography and Cybersecurity Research Group (CYLABK). He received the Ph.D. degree in Mathematics from the University of Groningen, the Netherlands, in 2008. His research interests include applied cryptography, post-quantum cryptography, cybersecurity, blockchain and distributed systems, and trustworthy artificial intelligence. He served on the Editorial Board of the \textit{Bulletin of the Vietnamese Mathematical Society} from 2012 to 2018. His ORCID profile is available at \url{https://orcid.org/0000-0002-9910-6387}.
\end{minipage}
\end{figure}
\clearpage

\end{document}